\documentclass[12pt]{article}
\usepackage[margin=1in]{geometry}
\usepackage[T1]{fontenc}
\usepackage[utf8]{inputenc}
\usepackage{lmodern,microtype}
\usepackage{amsmath,amssymb,amsthm,mathtools}
\usepackage{enumitem}
\usepackage{booktabs,tabularx}
\usepackage{graphicx}
\usepackage{setspace}
\usepackage{needspace}
\usepackage[round,authoryear]{natbib}
\usepackage{xcolor}
\usepackage{xurl}
\usepackage[colorlinks=true,linkcolor=black,citecolor=black,urlcolor=black]{hyperref}
\numberwithin{equation}{section}
\newtheorem{theorem}{Theorem}
\newtheorem{proposition}{Proposition}
\newtheorem{lemma}{Lemma}[section]
\theoremstyle{definition}
\newtheorem{assumption}{Assumption}
\BeforeBeginEnvironment{assumption}{\Needspace{3\baselineskip}}
\newtheorem{example}{Example}
\BeforeBeginEnvironment{theorem}{\Needspace{4\baselineskip}}
\BeforeBeginEnvironment{proposition}{\Needspace{4\baselineskip}}
\BeforeBeginEnvironment{lemma}{\Needspace{6\baselineskip}}
\DeclareMathOperator{\Bin}{Bin}
\newcommand{\E}{\mathbb E}
\hypersetup{pdfauthor={Keita Kuwahara},pdftitle={The Capacity Cost of Informational Screening},
  pdfkeywords={information design, screening without transfers, private beliefs, recommendation mechanisms, capacity constraints}}
\title{The Capacity Cost of Informational Screening}
\author{Keita Kuwahara\thanks{Graduate School of Economics, The University of Tokyo,
7-3-1 Hongo, Bunkyo-ku, Tokyo 113-0033, Japan.
Email: \href{mailto:kuke0303@g.ecc.u-tokyo.ac.jp}{\texttt{kuke0303@g.ecc.u-tokyo.ac.jp}}}}
\date{}
\begin{document}

\maketitle
\begin{abstract}
\noindent
An advisor knows who would benefit from using a resource but cannot assign its use. Individuals have private assessments and may ignore advice. We study how their choices can respect a fixed usage limit. In our model, whenever coordination is possible, eliciting private assessments makes some harmful use unavoidable. Policies that maximize welfare, or even minimize harm, also forgo beneficial uses that direct assignment would permit. In those environments, observing assessments instead allows advice to achieve as much beneficial use as direct assignment, without harmful use. The loss therefore reflects the incentive cost of eliciting private beliefs while leaving use voluntary.
\end{abstract}
\vspace{0.5em}
\noindent\textit{Keywords:} information design, screening without transfers, private beliefs, recommendation mechanisms, capacity constraints.

\newpage

\section{Introduction}
\label{sec:introduction}

Can expert advice coordinate voluntary use of a resource when aggregate use must stay within a fixed limit? An advisor may know who would benefit from use but lack authority to assign it or make transfers. Individuals have their own assessments and can disregard advice. We ask how the advisor can learn these assessments and give recommendations that people will follow while keeping use within the limit.

The central finding is that eliciting private beliefs consumes capacity. An advisor who could assign use directly would enable as much beneficial use as the limit permits and prevent harmful use. When every open interval of beliefs between zero and one has positive probability, any feasible advice must induce some harmful use. More surprisingly, every policy maximizing welfare or minimizing harm induces less beneficial use than direct assignment. Even when advice can preserve all the beneficial use attainable by direct assignment, a small sacrifice improves both objectives. In every environment where private-belief coordination is feasible, observing beliefs instead allows advice that agents follow to attain the same beneficial use as direct assignment, without harmful use.

Antibiotic stewardship motivates the distinction between advice and control. Unnecessary antibiotic use contributes to resistance, making infections harder to treat \citep{WHO2026}. Laboratories can influence prescribing by selectively reporting antibiotic susceptibility results \citep{CDC2020}. In prospective audit and feedback, stewardship teams review prescriptions and recommend changes for the prescribing doctor to consider \citep{Barlam2016}. These practices motivate the advisory channel studied here.\footnote{Stewardship programs may also require approval before certain antibiotics can be prescribed \citep{Barlam2016}; that authority is outside the advisory remit considered here.}

Consider a stylized advisory service asked to coordinate doctors' choices between a specified antibiotic and standard treatment under a fixed usage ceiling for a group of patients. The service has an advisory remit: it controls neither treatment access nor monetary incentives. Both treatment options remain available, so a doctor's deviation can exceed the ceiling. Each doctor privately assesses the probability that the specified antibiotic offers a positive expected patient benefit relative to standard treatment. We idealize the service's diagnostic information as identifying the sign of that benefit. The service therefore uses reported assessments to tailor advice to doctors' beliefs; the reports add nothing to its diagnosis.

Revealing the diagnosis to every doctor may violate the ceiling: whenever too many patients would benefit, all their doctors would choose the antibiotic. Coordinating use instead requires recommendations of standard treatment that some doctors will follow given their own assessments. The fixed group ceiling and decisive diagnostic information are modeling assumptions. The ceiling is an exogenous policy target; the model does not derive it from resistance. Our welfare comparison measures patient benefits and losses within that limit, without separately valuing resistance effects on future patients.\footnote{Some selectively omitted susceptibility results are available on request \citep{CDC2022}; we abstract from further diagnostic inquiry.}

The model generalizes this advisory problem. Each agent privately knows its probability of being suitable for use; the advisor observes suitability but not these beliefs. A suitable user gains, an unsuitable user loses, and nonuse yields zero. The advisor commits to a reporting and private recommendation policy. Reports reveal agents' willingness to follow different advice, without adding information about suitability. We call elicitation through report-dependent advice \emph{informational screening}. Agents may change both their report and their subsequent action. Capacity must hold with probability one in the specified equilibrium, while both actions remain available to every deviator. The limit applies separately to each decision group: low use in one group cannot offset excess use in another.

Why does elicitation create losses? More confident agents place more weight on suitable-state gains and less on unsuitable-state losses. Advice can therefore screen beliefs by offering a greater probability of use when suitable together with a greater probability of use when unsuitable, which we call \emph{unsuitable exposure}. Unsuitable exposure acts as a price for suitable-state use probability, and both consume capacity. Concentrating the screening burden on beliefs sufficiently close to certainty can reduce its incidence because those agents are rarely unsuitable. Yet preserving every place that direct assignment would give a suitable agent limits this adjustment: unsuitable users must fit into places left over by suitable agents. Theorem~\ref{thm:distortion} shows that every optimum crosses that limit and sacrifices beneficial use. The conclusion covers mechanisms that treat agents differently and smooth nonlinear objectives that value beneficial use and penalize harm.

A three-agent example makes the tradeoff concrete. With a ceiling of two uses, uniform beliefs, and equal gains and losses, welfare-maximizing advice reduces beneficial use by about $0.06\%$ and harmful use by about $34\%$ relative to the least harmful policy that preserves all beneficial use attainable by direct assignment. Within each policy, low-confidence reports receive a common suitable-state use probability without unsuitable exposure; higher reports obtain more suitable-state use only with unsuitable exposure. Optimal advice delays the start of exposure to higher reported confidence. Section~\ref{sec:example} constructs and compares these policies.

The analysis combines individual incentives with a joint capacity restriction. Theorem~\ref{thm:implementation} characterizes reporting and action incentives through a distribution of belief thresholds. It then gives necessary and sufficient restrictions for combining the resulting individual recommendations within the realized ceiling. Checking total expected use alone would not suffice: every group of belief--state pairs must fit within its own capacity bound.

The same friction limits when coordination is possible at all. Proposition~\ref{prop:feasible-costs} compares private-belief preservation of beneficial use, any private-belief coordination, and observed-belief coordination. Whenever private-belief coordination is possible, these three ranges of benefits from use are strictly nested. With uniform beliefs, a ceiling of one use precludes all private-belief coordination; a ceiling of two or more permits preservation when benefits are sufficiently small. Finally, Proposition~\ref{prop:finite-reports} shows that the number of agents restricted to finitely many possible reports must be smaller than the usage ceiling, although each agent need receive only a recommendation to use or forgo the resource.

\paragraph{Related literature.}
The closest literature studies persuasion of privately informed receivers. With binary actions, private preference types independent of the state, and linear payoffs, \citet[Theorem~1 and Corollary~1]{KMZL2017} show that a single report-independent disclosure policy can reproduce each type's expected utility and action probability under a report-dependent mechanism. \citet{GuoShmaya2019} allow private information about project quality and show that optimal disclosure need not elicit it; they interpret the endpoints of acceptance intervals as quantity and price in a screening problem. With more than two actions, screening can substantially improve disclosure \citep{CandoganStrack2023}. In our environment, every open interval of beliefs has positive probability, and the realized ceiling makes elicitation necessary even with binary actions and payoffs that do not depend on others' actions. The screening price itself consumes capacity, and minimizing its incidence requires sacrificing beneficial use.

Our advisor uses committed disclosure and correlated private recommendations \citep{KamenicaGentzkow2011,BergemannMorris2016,ArieliBabichenko2019,Taneva2019}. \citet{BergemannMorris2019} distinguish observing private information from eliciting it and show that elicitation can matter even when payoffs do not depend on other receivers' actions. \citet{CastiglioniMarchesiGatti2022} study computational gains from type-report menus in a multi-receiver model with finitely many private types and binary actions. \citet{BonattiDahlehHorel2026} study elicitation in linear--quadratic games, allowing false reports followed by disobedience. We impose these joint-deviation constraints and characterize which individual reporting and action incentives can be satisfied together within a realized ceiling.

Other screening models give the designer control over allocation \citep{Miralles2012,DoganUyanik2020,PereyraSilva2023} or available actions \citep{Vairo2025}. \citet{Dogan2026} uses sequential disclosure to initially uninformed agents. \citet{SatoShirakawa2026} elicit private beliefs about common-value goods; their supply extension constrains average allocation. \citet{GuoYan2026} combine treatment allocation with recommendations to a separate receiver. Our advisor observes suitability but cannot exclude anyone from acting.

Capacity affects information design through seller availability in \citet{RomanyukSmolin2019}, enforcement resources in \citet{HernandezNeeman2022}, and queue congestion in \citet{AnunrojwongIyerManshadi2023}. Here the ceiling links otherwise independent decisions. In antibiotic prescribing, \citet{RibersUllrich2024} combine algorithms with physicians' private information to reduce use while preserving the observed number of treated bacterial urinary tract infections, assuming compliance. We study compliance incentives, using the direct-assignment maximum of beneficial use as the preservation benchmark.

Section~\ref{sec:model} introduces the environment. Section~\ref{sec:costs} states the capacity cost of screening and compares private and observed beliefs. Section~\ref{sec:implementation} characterizes implementation and explains the cost. Section~\ref{sec:example} develops the welfare-maximizing example. Section~\ref{sec:feasibility} identifies when coordination is possible and why finite reports cannot achieve it. The appendices contain proofs and the example's exact construction.

\section{The advisory environment}
\label{sec:model}
\subsection{Beliefs, states, and payoffs}

There are $n\ge2$ agents who each decide whether to use a resource, with a usage ceiling $k\in\{1,\ldots,n-1\}$. Agent $i$ privately observes a belief $p_i\in(0,1)$ about a binary state $\theta_i\in\{0,1\}$. State one is \emph{suitable} for use; state zero is \emph{unsuitable}. Beliefs have a common distribution $F$ with no point masses (an \emph{atomless} distribution). Except where explicitly relaxed, we impose:
\begin{assumption}
\label{ass:belief-support}
$F$ has full support on $(0,1)$.
\end{assumption}
Thus every open subinterval has positive probability, including intervals arbitrarily close to certainty.
The pairs $(p_i,\theta_i)$ are independent across agents, with
\[
 p_1,\ldots,p_n\overset{\mathrm{iid}}{\sim}F,
 \qquad
 \theta_i\mid p_i\sim\operatorname{Bernoulli}(p_i).
\]
Thus beliefs are calibrated: $p_i$, also called the agent's type, is its probability of suitability. Write $\mathbf p=(p_i)_{i=1}^n$ and $\boldsymbol\theta=(\theta_i)_{i=1}^n$. The advisor observes $\boldsymbol\theta$ but not $\mathbf p$; each agent observes only its own $p_i$. The joint law of a belief--state pair is
\begin{equation}
 \rho(dp,1)=p\,dF(p),\qquad \rho(dp,0)=(1-p)\,dF(p).
 \label{eq:belief-state-law}
\end{equation}

Agent $i$ chooses an action $a_i\in\{0,1\}$: use of the resource ($a_i=1$) or nonuse ($a_i=0$). Write $\mathbf a=(a_i)_{i=1}^n$ for the action vector. Its payoff relative to nonuse is
\begin{equation}
 u_i(a_i,\theta_i)=a_i\big[u_+\theta_i-u_-(1-\theta_i)\big],
 \qquad u_+>0,\quad u_->0.
 \label{eq:payoffs}
\end{equation}
The common gains and losses are fixed parameters and do not depend on others' actions. All primitive distributions and payoff parameters are common knowledge. Suitability describes the sign of the use payoff; it is distinct from an agent's belief about that sign.

In the prescribing example, agent $i$ is a doctor deciding between the specified antibiotic ($a_i=1$) and standard treatment ($a_i=0$) for one patient. Nonuse here means standard treatment, not withholding care. State $\theta_i=1$ means that the antibiotic has a positive expected benefit relative to standard treatment; $\theta_i=0$ means that its expected benefit is negative. These states concern the expected value of the treatment choice, not whether the patient ultimately recovers. The doctor's initial assessment is $p_i$, and the advisor's additional diagnostic information identifies $\theta_i$. We assume that conditional gains and losses are common across doctors, so their private information concerns the probability of benefit.

\Needspace{8\baselineskip}
\subsection{Communication and voluntary use}

The advisor commits to a reporting and private recommendation policy before beliefs and states are realized. Agents submit reports simultaneously, receive private recommendations, and then choose actions simultaneously. The advisor observes the states and all reports; each agent observes only its own type, report, recommendation, and private randomization. There is no further diagnostic information before actions. Both actions remain available after every report and recommendation: the advisor cannot make transfers, compel actions, or deny access.

We can restrict attention to \emph{direct recommendation mechanisms}. Reports and types range over $(0,1)$. Writing $\mathbf r=(r_i)_{i=1}^n$ for the report vector, such a mechanism is a measurable probability kernel $\pi(\mathbf m\mid\mathbf r,\boldsymbol\theta)$ on recommendation vectors $\mathbf m=(m_i)_{i=1}^n\in\{0,1\}^n$, defined at every report--state profile $(\mathbf r,\boldsymbol\theta)\in(0,1)^n\times\{0,1\}^n$. The advisor privately sends recommendation $m_i$ to agent $i$. Under the \emph{truthful-and-obedient} strategy profile $\sigma^{\mathrm{TO}}$, each agent reports $r_i=p_i$ and takes $a_i=m_i$. Write $\Pr_\pi$ and $\E_\pi$ for the resulting joint law and expectation, whether or not this profile is an equilibrium.

Truthful obedience must be a Bayesian Nash equilibrium against \emph{joint reporting--action deviations}: an agent may change its report and then choose either action as a function of its type, report, and recommendation. In particular, it may ignore or reverse advice following a false report. Optimality is evaluated under $F$, so it is required for almost every true type; every report in $(0,1)$ remains available as a deviation. Section~\ref{sec:implementation} gives the exact incentive inequalities conditional on the agent's type.

An \emph{outcome} is the induced joint law of $(\mathbf p,\boldsymbol\theta,\mathbf a)$. A direct mechanism is \emph{feasible} if truthful obedience is an equilibrium and
\begin{equation}
 \sum_{i=1}^n a_i\le k\quad\Pr_\pi\text{-almost surely}.
 \label{eq:capacity}
\end{equation}
We call \eqref{eq:capacity} the \emph{hard capacity constraint}: it is a realized usage ceiling, not a restriction on the actions agents can choose. Deviations remain available, with the payoffs in \eqref{eq:payoffs}, even when they cause total use to exceed $k$. We require implementation in a specified equilibrium, not in every equilibrium. Coordination is possible when a feasible mechanism exists.

This direct formulation loses no outcomes. Lemma~\ref{lem:a-revelation} shows that every feasible equilibrium outcome of a mechanism with arbitrary standard Borel report and message spaces\footnote{Standard Borel spaces are measurable spaces isomorphic to Borel subsets of complete separable metric spaces; they include finite and countable sets and Euclidean report or message spaces.} has a direct implementation with the same joint outcome law. Appendix~\ref{app:general-communication} specifies the larger class and proves the reduction, including joint deviations. Thus our optimization and impossibility results apply to that entire class, and references to a feasible general mechanism include its specified equilibrium.

\subsection{The advisor's objective and the direct-assignment benchmark}

For a feasible outcome, define welfare $W$, beneficial use $S$, and harmful use $R$ by
\begin{equation}
 \begin{aligned}
 W&=u_+S-u_-R,\\
 S&=\E\sum_i\theta_i a_i,\qquad
 R=\E\sum_i(1-\theta_i)a_i.
 \end{aligned}
 \label{eq:participation-welfare}
\end{equation}
We call $S$ and $R$ \emph{beneficial use} and \emph{harmful use}: they are the expected numbers of uses in the suitable and unsuitable states, respectively. The advisor maximizes $\Phi(S,R)$, where $\Phi$ is continuously differentiable on $[0,k]^2$, with
\begin{equation}
 \Phi_S\ge0,\qquad \Phi_R<0.
 \label{eq:advisor-marginals}
\end{equation}
The subscripts on $\Phi$ denote partial derivatives. There is no additional constraint on $S$. In particular, minimizing $R$ does not mean that the advisor can shut down use: recommending nonuse to everyone would be rejected by sufficiently confident agents. The objective changes neither agents' payoffs nor the feasible set. Examples include welfare $\Phi(S,R)=W=u_+S-u_-R$ and harm minimization $\Phi(S,R)=-R$, but linearity is not assumed. In the diagnostic interpretation, $R$ counts uses with negative expected patient benefit. Welfare sums individual payoffs within the prescribed usage ceiling, without a separate resistance externality.

Let $\Bin(N,\eta)$ denote a binomial random variable with an integer $N\ge0$ trials and success probability $\eta\in[0,1]$. Define
\begin{equation}
 \begin{gathered}
 \mu=\int_0^1p\,dF(p),\qquad
 H(\eta)=\E\min\{k,\Bin(n,\eta)\},\\
 S^{\mathrm{FB}}=H(\mu),\qquad g=\frac{S^{\mathrm{FB}}}{n\mu}.
 \end{gathered}
 \label{eq:first-best}
\end{equation}
Direct assignment attains the \emph{first-best beneficial use} $S^{\mathrm{FB}}$ by selecting only suitable agents, up to capacity. It has $R=0$ and welfare $u_+S^{\mathrm{FB}}$. Uniform selection among suitable agents gives each use probability $g\in(0,1)$ conditional on suitability. Every feasible outcome satisfies
\begin{equation}
 \sum_i\theta_i a_i\le\min\Bigl\{k,\sum_i\theta_i\Bigr\}
 \quad\text{almost surely},
 \label{eq:first-best-bound}
\end{equation}
so $S\le S^{\mathrm{FB}}$, with equality precisely when \eqref{eq:first-best-bound} holds with equality almost surely.

An expected capacity constraint would be weaker. If only $\E\sum_i a_i\le k$ were required and $n\mu\le k$, privately revealing each $\theta_i$ would implement $S=n\mu$, $R=0$ without reports. This rule exceeds the realized ceiling with positive probability because $0<\mu<1$ and $n>k$.

\section{The costs of informational screening}
\label{sec:costs}

We say that an outcome \emph{preserves beneficial use} if $S=S^{\mathrm{FB}}$; it may still induce harmful use. For comparison, suppose an \emph{observed-belief} advisor privately observes both $\mathbf p$ and $\boldsymbol\theta$. This advisor need not elicit reports, but agents still choose their actions and capacity must still hold. The next theorem isolates the cost of eliciting beliefs while keeping these other restrictions fixed.

\begin{theorem}[The capacity cost of screening]
\label{thm:distortion}
Suppose a feasible mechanism exists when beliefs are private.
\begin{enumerate}[label=(\roman*),leftmargin=*]
\item The minimum harmful use $\min R$ is attained and strictly positive.
\item The objective $\Phi(S,R)$ attains a maximum, and every maximizer satisfies $S<S^{\mathrm{FB}}$.
\item If the advisor privately observes both beliefs $\mathbf p$ and states $\boldsymbol\theta$, obedient advice attains $(S,R)=(S^{\mathrm{FB}},0)$ under the same usage ceiling.
\end{enumerate}
\end{theorem}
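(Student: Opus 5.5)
The plan is to reduce everything to each agent's interim recommendation probabilities and then work with the envelope characterization of Theorem~\ref{thm:implementation}. For a direct mechanism $\pi$, write $x_1(r)$ and $x_0(r)$ for the probabilities that an agent reporting $r$ is recommended use in the suitable and unsuitable states. Under truthful obedience the agent's interim utility is $U(p)=p\,u_+x_1(p)-(1-p)\,u_-x_0(p)$.

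\textbf{Incentive constraints.} Truthful reporting makes $U$ convex, with $U'(p)=u_+x_1(p)+u_-x_0(p)$ almost everywhere. Two obedience deviations matter. The deviation ``report truthfully, always use'' gives
\[
p\,u_+\bigl(1-x_1(p)\bigr)\le(1-p)\,u_-\bigl(1-x_0(p)\bigr),
\]
so $x_1(p)\to1$ as $p\to1$. The deviation ``never use'' gives $U\ge0$. I will use these two inequalities together with the envelope formula.

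\textbf{Part (i).} I will argue by contradiction. Suppose $R=0$. Then $x_0=0$ $F$-a.e., so
\[
U(1^-)-U(0^+)=u_+\int_0^1x_1(p)\,dp .
\]
Full support lets us pass to the limits at both endpoints. Since $U(0^+)\ge0$ and $U(1^-)\ge u_+$, this forces $x_1=1$ Lebesgue-a.e. on $(0,1)$, and hence $F$-a.e. But then every suitable agent is recommended use whenever suitable. The event that all $n>k$ agents are suitable has positive probability, so this contradicts \eqref{eq:capacity}. Hence $R>0$.

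\textbf{Attainment in (i) and (ii).} I will show that the set of feasible $(S,R)$ is compact. The plan is to use Theorem~\ref{thm:implementation}: its threshold-distribution representation, together with the finitely many group capacity inequalities, is preserved under weak limits of the threshold distributions and of the joint outcome laws. Compactness then gives that $\min R$ and $\max\Phi$ are attained, using continuity of $\Phi$. Since positivity of $R$ holds for every feasible outcome, the minimum is strictly positive.

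\textbf{Part (ii).} This is the main obstacle. Take a feasible outcome with $S=S^{\mathrm{FB}}$. By \eqref{eq:first-best-bound}, suitable agents are then selected up to capacity almost surely. So unsuitable exposure can occur only in profiles with fewer than $k$ suitable agents, and the group capacity bounds of Theorem~\ref{thm:implementation} bind on the sets of suitable pairs. I will build a one-parameter perturbation that starts unsuitable exposure later, at higher reported types, in the spirit of the example in Section~\ref{sec:example}. Concretely, lower $x_1$ by $\delta$ on a small band of types below the current exposure threshold. By the envelope formula this lets $x_0$ be lowered on the types above the band while keeping $U$ above the obedience bounds. I then need to verify two things:
\begin{enumerate}[label=(\alph*),leftmargin=*]
\item $S$ falls by $o(\delta)$, or at a rate proportional to the small band mass;
\item $R$ falls at a strictly positive first-order rate, because freeing suitable-state capacity relaxes a binding group constraint.
\end{enumerate}
Given these, $\Phi_S\ge0$ and $\Phi_R<0$ make $\Phi$ strictly increase for small $\delta$, so no maximizer has $S=S^{\mathrm{FB}}$. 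The part I expect to be delicate is checking that the perturbed interim rules still fit jointly within the realized ceiling under the group inequalities of Theorem~\ref{thm:implementation}, not just in expectation. If the band construction fails there, I would first try mixing toward the minimum-$R$ mechanism from (i) and computing the one-sided derivative of $\Phi$ along that mixture.

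\textbf{Part (iii).} Take a feasible private-belief mechanism. The observed-belief advisor feeds the true $\mathbf p$ in place of reports, so no reporting constraints remain. It then modifies the recommendations: every unsuitable agent who was recommended use is switched to nonuse, and each freed slot is refilled with a suitable agent who was recommended nonuse, until $\min\{k,\sum_i\theta_i\}$ suitable agents use. This keeps the ceiling and attains $(S^{\mathrm{FB}},0)$. For obedience, a use recommendation now reveals $\theta_i=1$, so following it is optimal. A nonuse recommendation has weakly lower conditional probability of suitability than in the original mechanism, because suitable agents only move from nonuse to use and unsuitable agents only move into nonuse. The original obedience inequality therefore carries over after a short likelihood-ratio check.
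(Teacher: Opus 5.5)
\textbf{Parts (i) and (iii).} These follow the paper's route in substance. Your envelope argument for $R>0$ is the analytic form of the paper's argument that $R=0$ forces $Q=\delta_0$. Your purge-and-refill construction is exactly Lemma~\ref{lem:observed-beliefs}. Two small repairs are needed:
\begin{itemize}
\item You need $U(0^+)=0$, not $U(0^+)\ge0$. This follows from $U(p)=u_+p\,x_1(p)\le u_+p$.
\item The subset capacity inequalities form a two-parameter continuum, not finitely many. Closedness under weak limits therefore rests on the integrands in \eqref{appimpl:compactconstraints} extending continuously by zero at $\tau=1$, via $v(\tau)T_0(\tau)\le1-F(\tau)$. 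It also rests on the lower-semicontinuous odds bound keeping mass off $\tau=1$.
\end{itemize}

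\textbf{Part (ii): the perturbation goes the wrong way.} By the identity $u_-\,dy=u_+v\,dx$, lowering $x$ on a band while leaving it unchanged above moves threshold mass to higher $\tau$. This \emph{raises}, rather than lowers, the exposure required above the band. To lower $y$ above a band you must raise $x$ somewhere below it.

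More fundamentally, any IC-consistent version of your move, or of ``starting exposure later,'' only shifts threshold mass upward. Such a shift strictly lowers welfare, because $W(Q)=\int w\,dQ$ and $w$ is strictly decreasing by \eqref{eq:b-w-decrease}. Hence $u_+\Delta S<u_-\Delta R$. If $R$ falls at first order, $S$ must fall at least proportionally, so your claims (a) and (b) cannot both hold. The plan therefore fails already for $\Phi=W$.

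\textbf{What the improving direction must look like.} It has to move some mass \emph{down} to threshold $0$ and the rest up to near certainty. The paper's direction
\[
\dot Q_t=x_{\mathrm c}\delta_0+[x(t)-x_{\mathrm c}]\delta_t-Q|_{[0,t]}
\]
does exactly this.
\begin{itemize}
\item Reports below $t$ move toward the exposure-free option $(x_{\mathrm c},0)$, with $x_{\mathrm c}<g$ close to $g$. This option has strict stop-loss slack \eqref{appstruct:benchmark-slack}, which is where the capacity room comes from.
\item Reports above $t$ keep $x$ and absorb a constant extra exposure $\gamma(u_+/u_-)\dot J_t$. Its incidence vanishes as $t\uparrow1$ because $v(t)T_0(t)\to0$.
\end{itemize}
Then $\dot R\to-R(Q)<0$, while $\dot S\to n\mu x_{\mathrm c}-S^{\mathrm{FB}}$ can be made arbitrarily small. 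This is what beats an arbitrary ratio $\Phi_S/(-\Phi_R)$.

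\textbf{The missing idea: the endpoint comparison.} The added top exposure requires slack in the odds bound, $y(1-)<1$. Preservation supplies this slack. An unsuitable user can take a place only when fewer than $k$ others are suitable, so $y_i\le b=\Pr\{\Bin(n-1,\mu)\le k-1\}<1$ (Lemma~\ref{lem:b-displacement}). The perturbation shows that every optimum has $y_i(1-)=1>b$, which is incompatible with preservation.

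\textbf{Your fallback does not work.} Mixing toward the minimum-$R$ mechanism is circular: it presupposes that this mechanism has $S<S^{\mathrm{FB}}$, which is part (ii) for $\Phi=-R$. Even granting that, a mixture gives a fixed ratio of harm reduction to beneficial-use loss. That ratio need not exceed $\Phi_S/(-\Phi_R)$.
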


The theorem applies to all feasible mechanisms, including asymmetric ones. Attainment follows from compactness of the attainable set of $(S,R)$ (Lemma~\ref{lem:b-attainment}); Appendix~\ref{app:capacity-cost} proves the theorem.

\paragraph{Why some harmful use is unavoidable.}
If harmful use were zero, unsuitable exposure would vanish at almost every truthful report. Every type would rank these reports solely by their suitable-state use probability, so truthfulness would require the same probability at almost every truthful report. Because types can approach certainty, this common probability would have to be one to make nonuse obedient. All suitable agents would consequently use the resource, exceeding capacity whenever more than $k$ are suitable. Compactness strengthens this impossibility to a positive minimum: no sequence of feasible policies can make harm vanish.

\paragraph{Why minimizing harm also sacrifices beneficial use.}
Part (ii) applies both to welfare and to harm minimization. Unsuitable exposure is needed to elicit beliefs, but its incidence depends on who bears it. Concentrating the screening burden on beliefs sufficiently close to certainty can reduce its expected incidence. Preserving all beneficial use limits that shift because an unsuitable user must fit into a place left by suitable agents. Section~\ref{sec:implementation} makes this conflict precise.

The result has a stronger local implication whenever preservation is feasible. Given a feasible pair $(S^{\mathrm{FB}},R)$ and any $M>0$, there are feasible pairs $(S',R')$ arbitrarily close to it such that
\begin{equation}
 S'<S^{\mathrm{FB}},\qquad R'<R,\qquad
 R-R'>M(S^{\mathrm{FB}}-S').
 \label{eq:local-improvement}
\end{equation}
Thus the reduction in harm per suitable user sacrificed has no finite upper bound. This compares attainable pairs $(S,R)$ without assuming that their frontier is differentiable. Choosing $M>u_+/u_-$ improves welfare as well as reducing harm. Appendix~\ref{app:capacity-cost} proves \eqref{eq:local-improvement} by improving an auxiliary objective and mixing the resulting policy with the original one. Section~\ref{sec:feasibility} identifies when preservation is feasible.

\paragraph{Why observing beliefs removes the cost.}
Every private-belief outcome remains attainable when the advisor observes beliefs: it can substitute observations for truthful reports. Starting from that policy, discard unsuitable users and add suitable nonusers until no more fit, retaining every initially selected suitable agent. Send only the final recommendations. A recommendation to use the resource now reveals suitability. Nonuse becomes easier to recommend because it is less likely to exclude someone suitable and can avoid more harmful use.

Formally, let $x_i(p)$ and $y_i(p)$ denote the original use probabilities conditional on belief $p$ and the suitable and unsuitable states, respectively. Obedience to nonuse requires
\begin{equation}
 u_+p[1-x_i(p)]\le u_-(1-p)[1-y_i(p)]
 \quad F\text{-almost everywhere}.
 \label{eq:observed-obedience}
\end{equation}
The improvement weakly raises $x_i(p)$ and sets $y_i(p)$ to zero, preserving this inequality. It need not preserve truthfulness when $p$ must be reported: eliminating exposure also eliminates the price of a higher suitable-state use probability.

The observed-belief policy attains $(S,R)=(S^{\mathrm{FB}},0)$ and strictly improves the advisor's value. Indeed, writing $R_{\min}>0$ for the minimum in part (i), every private-belief outcome satisfies
\[
 \Phi(S,R)\le\Phi(S^{\mathrm{FB}},R_{\min})
 <\Phi(S^{\mathrm{FB}},0).
\]
The comparison matches the aggregate benchmark, rather than a specified direct-assignment lottery over users. It removes only elicitation; voluntary use and capacity remain unchanged. Lemma~\ref{lem:observed-beliefs} formalizes the construction and shows that any feasible observed-belief policy can be made first best, even without full support.

\paragraph{The role of extreme beliefs.}
The availability of beliefs arbitrarily close to certainty matters for the private-belief cost. Write $\operatorname{supp}F$ for the support of the belief distribution. When Assumption~\ref{ass:belief-support} is dropped, with all other assumptions maintained, a feasible private-belief mechanism can attain $R=0$ if and only if
\begin{equation}
 \sup\operatorname{supp}F\le\frac{u_-}{u_-+u_+(1-g)}.
 \label{eq:zero-unsuitable}
\end{equation}
Under this condition, uniformly selecting suitable agents up to capacity, without reports, implements $(S^{\mathrm{FB}},0)$. The condition is exactly obedience to nonuse: $u_+p(1-g)\le u_-(1-p)$. Thus the positive minimum in Theorem~\ref{thm:distortion} depends on the presence of agents sufficiently confident to reject this advice. Lemma~\ref{lem:a-support} proves necessity even for report-dependent mechanisms.

\section{Implementing advice under capacity}
\label{sec:implementation}

\subsection{Belief reports as a screening menu}

Two restrictions determine which advice can be implemented. Each agent must prefer its assigned option to every option obtainable by misreporting and then disobeying. In addition, the options offered to different agents must fit together within capacity. We first describe the individual options and then the joint restriction.

With private beliefs, an agent chooses its state-contingent recommendation probabilities through its report. The subscript $-i$ denotes all agents other than $i$, whose belief--state pairs have product law $\rho^{n-1}$. For a direct kernel $\pi$, an agent $i$, and a report $r\in(0,1)$, define the probabilities conditional on own report and state, averaged over the other agents (the \emph{interim probabilities}), by
\begin{equation}
 \begin{aligned}
 x_i(r)&=\E_{\rho^{n-1}}\!\left[
   \pi(m_i=1\mid(r,\mathbf p_{-i}),(1,\boldsymbol\theta_{-i}))\right],\\
 y_i(r)&=\E_{\rho^{n-1}}\!\left[
   \pi(m_i=1\mid(r,\mathbf p_{-i}),(0,\boldsymbol\theta_{-i}))\right].
 \end{aligned}
 \label{eq:interimrules}
\end{equation}
The inserted report and state occupy coordinate $i$; the expectation averages over truthful opponents with law $\rho^{n-1}$. Thus $(x_i(r),y_i(r))$ is the option available to any true type $p\in(0,1)$ reporting $r\in(0,1)$. We call $y_i(r)$ \emph{unsuitable exposure}. When recommendations are obeyed, these are the corresponding state-contingent use probabilities. The pairs available through an agent's reports form its \emph{menu}. Across agents, these functions form the \emph{interim reduced form}, symmetric when these functions coincide across agents. A kernel is symmetric if simultaneous relabeling of inputs and recommendations leaves it unchanged. 

For each agent, suppress the index and define, for $p,r\in(0,1)$,
\begin{equation}
 U(p;r)=u_+p x(r)-u_-(1-p)y(r),\qquad
 U(p):=U(p;p),\qquad
 U^{1}(p)=u_+p-u_-(1-p).
 \label{eq:interim-utilities}
\end{equation}
The four deterministic action rules after report $r$---always forgo use, always use the resource, obey, and reverse---yield $0,U^{1}(p),U(p;r)$, and $U^{1}(p)-U(p;r)$. We call an interim pair \emph{incentive compatible} if truthful reporting and obedience weakly dominate every joint reporting--action deviation for every true type. Equivalently, for every $p,r\in(0,1)$,
\begin{equation}
 U(p)\ge\max\{0,U^{1}(p),U(p;r),U^{1}(p)-U(p;r)\}.
 \label{eq:jointIC}
\end{equation}
For truthful obedience to be an equilibrium, these inequalities need hold only for $F$-almost every true type, simultaneously for every report. The comparison with zero is individual rationality relative to nonuse. The first two comparisons impose obedience after the truthful report; the others allow false reports followed by obedience or reversal. Randomized deviations are mixtures. The exceptional set of true types is common to all reports; reports cannot be excluded merely because they have probability zero. An incentive-compatible reduced form is \emph{jointly implementable under hard capacity} if one feasible direct kernel has its interims at every report.

Every feasible outcome has a direct implementation with right-continuous, incentive-compatible interims, called \emph{canonical rules}. Changing recommendations only at an $F$-null set of reports makes these inequalities hold at every type without changing the outcome. To symmetrize, randomly relabel agents before applying the policy, then undo the relabeling without revealing it. Averaging over this random relabeling gives symmetric interims with the same $(S,R)$.\footnote{Lemmas~\ref{lem:a-revelation}--\ref{lem:a-symmetry} establish the direct reduction following \citet{Myerson1982}, canonical completion, and symmetrization, accounting for joint reporting and action deviations.}

The basic screening tradeoff can be seen before solving these inequalities. Compare two options, one offering an additional amount $\Delta x>0$ of suitable-state use probability and an additional amount $\Delta y>0$ of unsuitable exposure. A type $p$ is indifferent between them when
\[
 u_+p\,\Delta x=u_-(1-p)\,\Delta y.
\]
Higher types prefer the option that raises both probabilities; lower types prefer the one that lowers both. Thus unsuitable exposure plays the role of a price for increased suitable-state use, even though it is implemented through information rather than a payment.

Write $v(t)=t/(1-t)$ for belief odds, $t\in[0,1)$. Theorem~\ref{thm:implementation}(i) below makes the tradeoff exact: incentive-compatible canonical rules are nondecreasing and satisfy $u_-\,dy=u_+v\,dx$, with $x(1-)=1$ and $y(0+)=0$. Here $dx$ and $dy$ are the Stieltjes measures recording increases in $x$ and $y$, and arguments $t-$ and $t+$ denote one-sided limits from below and above. A type arbitrarily close to certainty must have suitable-state use probability approaching one, since it can always ignore advice and use the resource. At the opposite end, a type arbitrarily close to zero must have unsuitable exposure approaching zero, since it can always forgo use.

We record increases in $x$ by the probability measure $Q=x(0+)\delta_0+dx$ on $[0,1)$, where $\delta_z$ denotes the probability measure placing unit mass at $z$. Its mass at zero is baseline suitable-state use probability; its mass over any interval records the increase in suitable-state use probability offered across that interval. An increase $dQ$ at threshold $\tau$ is accompanied by unsuitable exposure $(u_+/u_-)v(\tau)dQ$ at all higher reports. This gives the \emph{threshold representation} in the next theorem. We call $\int v\,dQ$ the \emph{odds moment}; the \emph{odds bound} in the theorem limits this moment so that the unsuitable-state probability never exceeds one. The representation describes interim probabilities; it need not be a lottery over separately feasible experiments.

\subsection{Fitting individual recommendations within capacity}

Satisfying an agent's incentives does not ensure that its recommendations can be combined with everyone else's. For any group defined by beliefs and states, expected use from that group cannot exceed the expected number of its members that fit within $k$ places. Define the state-specific upper-tail masses
\begin{equation}
 T_1(t)=\int_t^1p\,dF(p),\qquad
 T_0(t)=\int_t^1(1-p)\,dF(p),\qquad 0\le t\le1.
 \label{eq:tail-masses}
\end{equation}
They are the probabilities that an agent's belief exceeds $t$ and its state is suitable or unsuitable, respectively. In particular, $T_1(0)=\mu$ and $T_0(0)=1-\mu$.

For cutoffs $s_1,s_0\in[0,1]$, consider the group
\begin{equation}
 \mathcal B_{s_1,s_0}=\{(p,1):p>s_1\}\cup\{(p,0):p>s_0\},
 \qquad \rho(\mathcal B_{s_1,s_0})=T_1(s_1)+T_0(s_0).
 \label{eq:capacity-upper-sets}
\end{equation}
The number of agents in this group is binomial. The function $H$ from \eqref{eq:first-best} gives the expected number of them that capacity can accommodate. Expected use from the group must therefore be no larger than $H(T_1(s_1)+T_0(s_0))$. The \emph{subset capacity inequalities} in part (ii) below impose this restriction on every cutoff pair. The theorem shows that these restrictions suffice once individual incentives hold.

\begin{theorem}[Incentives and joint implementation]
\label{thm:implementation}
Let $x(r)$ and $y(r)$ be the interim probabilities of a recommendation to use, conditional on report $r$ and on the suitable and unsuitable states, respectively, averaged over truthful opponents. Suppose $(x,y):(0,1)\to[0,1]^2$ is right-continuous.
\begin{enumerate}[label=(\roman*),leftmargin=*]
\item Truthful reporting and obedience weakly dominate every joint reporting--action deviation for every type if and only if, for some probability measure $Q$ on $[0,1)$,
\begin{equation}
 x(p)=x_Q(p):=Q([0,p]),\qquad
 y(p)=y_Q(p):=\frac{u_+}{u_-}\int_{[0,p]}\frac{\tau}{1-\tau}\,Q(d\tau),
 \label{eq:Qrepresentation}
\end{equation}
with
\begin{equation}
 \int_{[0,1)}\frac{\tau}{1-\tau}\,Q(d\tau)\le\frac{u_-}{u_+}.
 \label{eq:incentive-moment}
\end{equation}
\item Such a pair has a symmetric feasible direct implementation if and only if, for all $s_1,s_0\in[0,1]$,
\begin{equation}
 \begin{aligned}
 &n\int_{s_1}^1p x(p)\,dF(p)+n\int_{s_0}^1(1-p)y(p)\,dF(p)\\
 &\qquad\le H\!\left(\int_{s_1}^1p\,dF(p)+\int_{s_0}^1(1-p)\,dF(p)\right).
 \end{aligned}
 \label{eq:mixedfeasibility}
\end{equation}
\end{enumerate}
\end{theorem}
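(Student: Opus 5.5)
For part (i), I would treat \eqref{eq:jointIC} as a one-dimensional screening problem in which the report selects the option $(x(r),y(r))$ and obedience imposes boundary conditions.

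\emph{Monotonicity.} Rewrite $U(p;r)$ in odds form as $(1-p)\bigl[u_+v(p)x(r)-u_-y(r)\bigr]$. Truthful reporting against obedient misreports then says that $r=p$ maximizes $u_+v(p)x(r)-u_-y(r)$. This expression is linear in $v(p)$ and has the single-crossing property. The standard envelope argument, adding the two inequalities for types $p<p'$, gives that $x$ is nondecreasing. With right-continuity it also yields the Stieltjes identity $u_-\,dy=u_+v\,dx$. Here the envelope $V(p)=\max_r[u_+v(p)x(r)-u_-y(r)]$ is convex in $v$ with derivative $u_+x$. Integrating gives $u_-y(p)=u_-y(0+)+u_+\int_{(0,p]}v\,dx$.

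\emph{Boundary conditions.} Individual rationality, $U(p)\ge0$ as $p\to0$, forces $y(0+)=0$ after noting that $x$ is bounded. Obedience to nonuse, $U(p)\ge U^{1}(p)$, means $u_+p(1-x(p))\le u_-(1-p)(1-y(p))$. Letting $p\to1$ forces $x(1-)=1$. Setting $Q=x(0+)\delta_0+dx$ then gives a probability measure on $[0,1)$ satisfying \eqref{eq:Qrepresentation}. Requiring $y\le1$ in the limit gives \eqref{eq:incentive-moment}.

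\emph{Converse.} Given $Q$, the maximization of $u_+v(p)x(r)-u_-y(r)$ reduces to $u_+\int_{(p,r]}(v(p)-v(\tau))\,dQ\le0$, and the analogous expression for $r<p$. Both hold by monotonicity of $v$. The reversal deviation $U^{1}(p)-U(p;r)$ is at most $U(p)$ exactly when $U(p)+U(p;r)\ge U^{1}(p)$. It suffices to check $U(p)\ge\max\{0,U^1(p)\}$ together with $U(p;r)\ge\min\{0,U^1(p)\}$.
\begin{itemize}[leftmargin=*]
\item $U(p)\ge0$ holds because $U(p;r)\ge U(p;0+)=u_+p\,x(0+)\ge0$.
\item $U(p)\ge U^{1}(p)$ reduces to $u_+p(1-x(p))\le u_-(1-p)(1-y(p))$. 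I would verify it by writing $1-x(p)=Q((p,1))$ and $u_-(1-y(p))\ge u_+\int_{(p,1)}v\,dQ$ using \eqref{eq:incentive-moment}. This gives $u_-(1-p)(1-y(p))\ge u_+(1-p)\int_{(p,1)}v\,dQ\ge u_+p\,Q((p,1))$, since $v(\tau)\ge v(p)$ on that set.
\item $U(p;r)\ge U^1(p)$ for every $r$ follows by the same computation with $r$ in place of $p$, using monotonicity in $r$ of the minimum. More carefully: $U(p;r)-U^1(p)=-u_+p(1-x(r))+u_-(1-p)(1-y(r))$ is nondecreasing in $r$ for fixed $p$ only where $v(\tau)\ge v(p)$. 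I expect some casework here, but the bound $U^1(p)-U(p;r)\le U(p)$ should reduce to the same tail inequality with $\max(p,r)$.
\end{itemize}

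For part (ii), necessity is immediate. Apply feasibility pathwise to the agents whose $(p_i,\theta_i)$ lie in $\mathcal B_{s_1,s_0}$: realized use among them is at most $\min\{k,\#\text{members}\}$. The membership count is $\Bin(n,\rho(\mathcal B))$. Taking expectations and using symmetry gives \eqref{eq:mixedfeasibility}.

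Sufficiency is the main obstacle. I would reduce it to a Border-type reduced-form theorem for symmetric allocation of up to $k$ identical units among $n$ agents whose "type" is the pair $(p,\theta)$ with law $\rho$. The per-agent interim allocation is $z(p,1)=x(p)$ and $z(p,0)=y(p)$. The known $k$-unit Border condition (Che--Kim--Mierendorff, or Border's polymatroid argument) says that $z$ is implementable if and only if, for every measurable set $B$, $n\int_B z\,d\rho\le H(\rho(B))$. The key step is to show that it suffices to check upper sets of the form $\mathcal B_{s_1,s_0}$. Since $x$ and $y$ are nondecreasing, $z$ is monotone within each state component. For a fixed $\rho(B)$, the left side is maximized by taking upper level sets of $z$. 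I would argue that the binding sets can be taken to be unions of upper sets in each state, using a rearrangement within each state separately. Because $H$ is concave, the worst case over upper level sets of $z$ is what matters, and each such level set $\{z\ge c\}$ is of the form $\mathcal B_{s_1,s_0}$ up to null sets, by monotonicity. Atomlessness of $F$ handles ties.

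Implementation by a direct kernel that conditions on $(\mathbf r,\boldsymbol\theta)$ then follows. Symmetry is obtained by averaging over permutations, as in the symmetrization lemma. The kernel is feasible because obedience is guaranteed by part (i).
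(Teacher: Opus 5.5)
Your part (i) follows the paper's proof. Increment bounds (or your envelope argument) give monotonicity and $u_-\,dy=u_+v\,dx$. Individual rationality and obedience force $y(0+)=0$ and $x(1-)=1$. The converse uses the sign change of $(p-\tau)/(1-\tau)$ and the tail bound $u_-[1-y(p)]\ge u_+\int_{(p,1)}v\,dQ\ge u_+v(p)[1-x(p)]$.

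\textbf{Gap in (i): the reversal deviation.} Your reduction to $U(p;r)\ge\min\{0,U^{1}(p)\}$ is correct. Your third bullet, however, asserts $U(p;r)\ge U^{1}(p)$ for every $r$, which is false whenever $x(0+)<1$ (which capacity in fact forces). Letting $r\downarrow0$ and then $p\uparrow1$ gives $U(p;r)\to u_+x(0+)<u_+=\lim_{p\uparrow1}U^{1}(p)$. The monotonicity you invoke also runs the other way: the Stieltjes increment of $r\mapsto U(p;r)-U^{1}(p)$ is $u_+(1-p)[v(p)-v(r)]\,dx(r)$, so the difference rises for $r<p$ and falls for $r>p$. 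The missing casework is short:
\begin{enumerate}[leftmargin=*]
\item For $r\le p$, $U(p;r)=u_+\int_{[0,r]}\frac{p-\tau}{1-\tau}\,Q(d\tau)\ge0$.
\item For $r\ge p$, the tail bound at $r$ together with $v(r)\ge v(p)$ gives $u_-(1-p)[1-y(r)]\ge u_+p[1-x(r)]$, which is $U(p;r)\ge U^{1}(p)$.
\end{enumerate}
The paper avoids cases. It multiplies the two obedience inequalities at $r$ by $1-x(r)$ and $x(r)$ to get $y(r)\le x(r)$. It then writes $U^{1}(p)-U(p;r)=(1-y(r))U^{1}(p)-u_+p[x(r)-y(r)]\le\max\{0,U^{1}(p)\}$.

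\textbf{Part (ii): a different route, with a missing step.} Necessity and symmetrization match the paper, but your sufficiency argument differs. You invoke a $k$-unit Border theorem over all measurable sets of belief--state pairs and then reduce to two-cutoff sets. That reduction is sound, though concavity of $H$ plays no role. For fixed $\rho(B)$, the left side is maximized by an upper level set of $z$. Monotonicity of $x,y$ and atomlessness of $F$ let you take that set of the form $\mathcal B_{s_1,s_0}$, ties included.

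The paper proceeds differently. It uses the same level sets to turn the cutoff inequalities into the stop-loss inequalities $L_Q\le L$. It identifies $L(\ell)=\E(Z-\ell)_+$, where $Z$ is the interim selection probability of the rule choosing the $k$ highest independent uniform scores. It then applies Strassen's theorem for the increasing convex order to build an explicit score, select, and retain kernel. That construction gives the prescribed interims at \emph{every} report, with pointwise capacity.

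Your appeal does not deliver this. A general reduced-form theorem typically yields the interims only for $\rho$-almost every type. Truthful obedience, however, must also deter deviations to probability-zero reports, where the kernel could offer a better option. So you need a completion step. One option is the paper's device: on profiles where some report lies in the exceptional null set, replace the kernel by one that gives the first such agent exactly its prescribed option and recommends nonuse to everyone else. You should also check that the theorem you cite covers $k$ units with an atomless continuum of types. The network-flow and polymatroid arguments you name are formulated for finite type spaces, so otherwise an approximation argument is needed.

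Your route buys brevity once such a theorem is granted. The paper's route buys a self-contained proof with exact implementation at every report.
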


We call $Q$ \emph{jointly feasible} when it satisfies \eqref{eq:incentive-moment} and \eqref{eq:mixedfeasibility}.

The incentive characterization is an envelope argument with boundary restrictions imposed by voluntary actions. Truthful reporting together with obedience for each type at its truthful report also rules out a false report followed by reversed advice: these conditions imply $x(r)\ge y(r)$, making reversal no better than one of the two constant actions. Lemma~\ref{lem:a-ic} gives the argument. The joint implementation claim adds a different requirement: one recommendation kernel must realize all agents' interim probabilities simultaneously. It specializes generalized reduced-form feasibility \citep{Border1991,Border2007,CheKimMierendorff2013} to belief--state pairs. Monotonicity of $x_Q$ and $y_Q$ makes the two-cutoff groups sufficient. Checking only $s_1=s_0=0$, or $S+R\le k$, would miss the restrictions for smaller groups.

\subsection{When optimal advice displaces suitable agents}

The representation separates the exposure needed for incentives from its expected incidence. Applying Tonelli's theorem to \eqref{eq:participation-welfare} and \eqref{eq:Qrepresentation} gives
\begin{equation}
 S=n\int T_1\,dQ,\qquad
 R=\frac{nu_+}{u_-}\int vT_0\,dQ,\qquad
 W=nu_+\int(T_1-vT_0)\,dQ=n\int U\,dF
 \label{eq:welfareQ}
\end{equation}
for a symmetric rule represented by $Q$. We write these quantities as $S(Q)$, $R(Q)$, and $W(Q)$ when emphasizing their dependence on $Q$.

In the expression for $R$, $T_0(\tau)$ is the probability that an agent is unsuitable and has a belief above $\tau$. Expected harmful use therefore depends on the product $v(\tau)T_0(\tau)$, which tends to zero as $\tau\uparrow1$ because $v(\tau)T_0(\tau)\le1-F(\tau)$. This explains why concentrating exposure at high beliefs can reduce its incidence even though higher thresholds require more exposure conditional on the unsuitable state.

Symmetrization preserves $(S,R)$ and hence the advisor's value, even when $\Phi$ is nonlinear. The advisor therefore maximizes $\Phi(S(Q),R(Q))$ over jointly feasible threshold measures $Q$. The same feasible set applies to every objective.

The key to Theorem~\ref{thm:distortion}(ii) is a conflict between concentrating exposure and preserving beneficial use. Every maximizer has an outcome-equivalent canonical direct implementation satisfying
\begin{equation}
 y_i(1-)=1\qquad\text{for every agent }i.
 \label{eq:optimal-exposure}
\end{equation}
This is a conditional use probability, not the frequency of harmful use among confident agents. In contrast, preserving $S=S^{\mathrm{FB}}$ requires
\begin{equation}
 y_i(1-)\le b:=\Pr\{\Bin(n-1,\mu)\le k-1\}<1.
 \label{eq:bceiling}
\end{equation}
Indeed, an unsuitable agent can then use the resource only when fewer than $k$ other agents are suitable (Lemma~\ref{lem:b-displacement}). These incompatible endpoint requirements explain the strict sacrifice of beneficial use.

The reason is that an exposure probability below one leaves room to change which beliefs bear the screening burden. For reports below a high cutoff, move the offered option slightly toward a common option with no unsuitable exposure. Give that common option a suitable-state use probability just below the uniform direct-assignment probability $g$. This change reduces exposure over most of the belief distribution and leaves capacity slack. Above the cutoff, keep suitable-state use unchanged and increase unsuitable exposure by the amount needed to preserve reporting incentives. These agents are so rarely unsuitable that the additional harmful use can be made small relative to the reduction below the cutoff.

This argument must also respect capacity in each realized group, including groups concentrated near certainty. Lemma~\ref{lem:b-endpoint} verifies every subset capacity inequality for a sufficiently small change. The proof of Theorem~\ref{thm:distortion} then chooses the common option and cutoff to improve $\Phi$. Preservation prevents the adjustment from reaching an optimum: once an agent's unsuitable-state use probability exceeds $b$, its use must sometimes displace a suitable user. This is why minimizing harm can require giving up beneficial use.

\section{Optimal advice in a three-agent example}
\label{sec:example}

Consider three agents, a ceiling of two uses, uniform beliefs, and gains and losses normalized to one. Table~\ref{tab:uniform-welfare} compares welfare-maximizing advice with the least harmful way to preserve beneficial use and with observed-belief first best. Relative to the best preserving policy, the welfare optimum reduces beneficial use by about $0.06\%$ and harmful use by about $34\%$. These changes amount to about $0.00083$ fewer beneficial uses and $0.00742$ fewer harmful uses per group. Because $W=S-R$, the larger absolute reduction in harm raises welfare.

\begin{table}[ht]
\centering
\caption{Three agents and a two-use ceiling: use and welfare for $u_+=u_-=1$, rounded to five decimal places.}
\label{tab:uniform-welfare}
\begin{tabular}{lccc}
\toprule
Policy & $S$ & $R$ & $W=S-R$\\
\midrule
Minimum-harm preserving policy & $1.37500$ & $0.02192$ & $1.35308$\\
Welfare-maximizing policy & $1.37417$ & $0.01450$ & $1.35967$\\
Observed-belief first best & $1.37500$ & $0.00000$ & $1.37500$\\
\bottomrule
\end{tabular}
\end{table}

\Needspace{12\baselineskip}
The following menu achieves the welfare maximum.

\begin{example}[Welfare-maximizing advice with a two-use ceiling]
\label{ex:uniform}
Let beliefs be uniform, $n=3$, $k=2$, and $u_+=u_-=1$.
Write $x(p)$ and $y(p)$ for an agent's use probabilities conditional on belief $p$ and on the suitable and unsuitable states, respectively.
A welfare-maximizing policy offers the same suitable-state use probability and no unsuitable exposure to reports below $\alpha\simeq .889527$:
\[
 x(p)\simeq .89562,\qquad y(p)=0\qquad(p<\alpha).
\]
Both probabilities jump at $\alpha$, rise between $\alpha$ and $\beta=(1+\alpha)/2\simeq .944764$, jump again at $\beta$, and then rise toward one. Appendix~\ref{app:uniform-example} constructs the menu and proves optimality over all feasible mechanisms.
\end{example}

The menu is implemented through private binary recommendations that jointly respect capacity; agents do not receive separate, independently drawn advice.

The preserving policy in Table~\ref{tab:uniform-welfare} minimizes $R$ subject to $S=S^{\mathrm{FB}}$ and always gives suitable agents priority. Among them, reports above $\tau\simeq .83929$ have priority in decreasing order; lower reports are pooled with suitable-state use probability about $.88475$. Recommendations to use in the unsuitable state occupy only the remaining places and begin at $\tau$. Appendix~\ref{app:uniform-preserving} provides the exact menu, an implementing selection procedure, and the proof of minimum harm among all preserving mechanisms.

\begin{figure}[htbp]
\centering
\includegraphics[width=\textwidth]{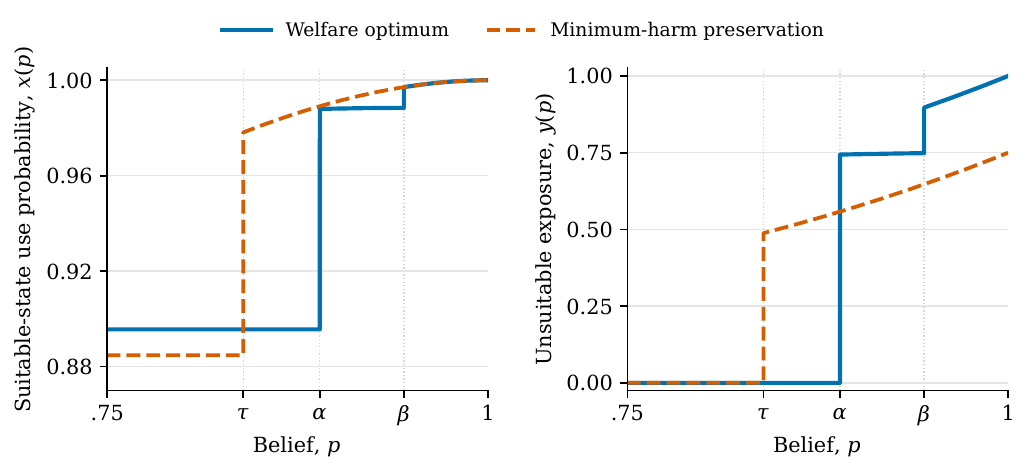}
\caption{Suitable-state use probability and unsuitable exposure in the three-agent example. Solid curves give welfare-maximizing advice; dashed curves give minimum-harm preservation. Both menus are constant for $p<.75$, which is omitted. The suitable-state panel also magnifies its vertical range near one. The jumps occur at $\alpha,\beta$ for the optimum and at $\tau$ for preservation.}
\label{fig:uniform-menus}
\end{figure}

Figure~\ref{fig:uniform-menus} shows that the welfare optimum delays the onset of unsuitable exposure to a higher belief. Its exposure approaches $y(1-)=1$, whereas the minimum-harm preserving policy reaches $y(1-)=3/4$, the preservation bound in \eqref{eq:bceiling}. Both menus respect the same realized ceiling; their difference illustrates how optimal screening trades a small amount of beneficial use for a larger reduction in harmful use.

\section{When coordination is feasible}
\label{sec:feasibility}

We now ask when coordination is possible as the incentive to use the resource changes. Fix $F,n,k$ and the loss $u_->0$, and increase the benefit $u_+$. We ask when private beliefs permit preservation of $S^{\mathrm{FB}}$, when they permit any coordination, and when observed beliefs permit coordination. In the last case, feasibility also permits $(S^{\mathrm{FB}},0)$. This comparison concerns the existence of a truthful and obedient policy, complementing the preceding comparison of attainable outcomes. Feasibility depends on payoffs only through $u_+/u_-$.

\begin{proposition}[Coordination with private and observed beliefs]
\label{prop:feasible-costs}
Fix $F,n,k$ and $u_->0$ as in the model, and vary $u_+>0$.
There are finite cutoffs $0\le\bar u_+^{\mathrm{FB}}\le\bar u_+\le\bar u_+^{\mathrm{obs}}$ such that:
\begin{enumerate}[label=(\roman*),leftmargin=*]
\item A feasible private-belief mechanism exists exactly for $0<u_+\le\bar u_+$.
\item A feasible private-belief mechanism with $S=S^{\mathrm{FB}}$ exists exactly for $0<u_+\le\bar u_+^{\mathrm{FB}}$.
\item If the advisor also privately observes $\mathbf p$, an obedient policy respecting capacity exists exactly for $0<u_+\le\bar u_+^{\mathrm{obs}}$, and it can attain $(S,R)=(S^{\mathrm{FB}},0)$.
\end{enumerate}
A zero cutoff means an empty range.
If $\bar u_+>0$, then $0<\bar u_+^{\mathrm{FB}}<\bar u_+<\bar u_+^{\mathrm{obs}}$.
\end{proposition}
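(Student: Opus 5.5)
Everything reduces to the threshold representation. Write $\lambda=u_+/u_-$. By Lemma~\ref{lem:a-revelation} and symmetrization, a feasible private-belief mechanism exists if and only if some probability measure $Q$ on $[0,1)$ is jointly feasible. By Theorem~\ref{thm:implementation}, joint feasibility means two things: $\lambda\int v\,dQ\le1$, and \eqref{eq:mixedfeasibility} holds with $x=x_Q$ and $y_Q=\lambda\int_{[0,\cdot]}v\,dQ$.

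\textbf{Monotonicity in $\lambda$ and parts (i)--(ii).} For a fixed $Q$, both constraints become harder as $\lambda$ grows. The left side of \eqref{eq:incentive-moment} and the function $y_Q$ scale linearly in $\lambda$, while $x_Q$ and the right side of \eqref{eq:mixedfeasibility} do not depend on $\lambda$. Hence the set of $\lambda$ admitting a jointly feasible $Q$ is a down-set, and the same holds when we add the restriction $S(Q)=n\int T_1\,dQ=S^{\mathrm{FB}}$. It remains to show closedness. Take $\lambda_m\uparrow\lambda$ with feasible $Q_m$ and extract a weak limit on the compact set $[0,1]$. Mass cannot escape to $1$: the bound $\int v\,dQ_m\le 1/\lambda_m$ is uniform and $v\to\infty$ at $1$. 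Lower semicontinuity then preserves the odds bound, and the compactness argument of Lemma~\ref{lem:b-attainment} passes \eqref{eq:mixedfeasibility} and $S=S^{\mathrm{FB}}$ to the limit. This defines $\bar u_+$ and $\bar u_+^{\mathrm{FB}}$ as suprema, possibly zero. Finiteness and $\bar u_+^{\mathrm{FB}}\le\bar u_+$ follow from part (iii) below and from inclusion.

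\textbf{Part (iii).} Lemma~\ref{lem:observed-beliefs} says any feasible observed-belief policy can be converted into one attaining $(S^{\mathrm{FB}},0)$. So feasibility is equivalent to feasibility of a policy with $y\equiv0$ that selects suitable agents up to capacity. The only binding constraint is obedience to nonuse, \eqref{eq:observed-obedience}: $\lambda p(1-x(p))\le 1-p$ for a.e.\ $p$, where $x(p)$ is the suitable-state selection probability given one's own belief. The advisor may prioritize high beliefs among suitable agents, so the problem reduces to a priority rule. Feasibility is again monotone in $\lambda$ and closed, giving $\bar u_+^{\mathrm{obs}}$. Any private-belief feasible outcome is observed-feasible, so $\bar u_+\le\bar u_+^{\mathrm{obs}}$. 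Finiteness comes from confident agents: as $\lambda\to\infty$, types with $p$ near $1$ need $x(p)\to1$ on a set of positive mass. By full support, more than $k$ such suitable agents appear with positive probability, which violates capacity.

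\textbf{Strict nesting, the main obstacle.} Assume $\bar u_+>0$.
\begin{enumerate}[label=(\alph*),leftmargin=*]
\item $\bar u_+^{\mathrm{FB}}>0$: for small $\lambda$, I would construct a preserving $Q$ explicitly. Put mass $g$ at $0$ and spread the remaining mass near $1$ with tiny odds moment. Then check \eqref{eq:mixedfeasibility} by perturbation, as in Lemma~\ref{lem:b-endpoint}. I am least sure of this step.
\item $\bar u_+^{\mathrm{FB}}<\bar u_+$: at $\lambda=\bar u_+^{\mathrm{FB}}$, the preserving $Q$ has slack. Theorem~\ref{thm:distortion}(ii) and the local improvement \eqref{eq:local-improvement} yield a nonpreserving $Q'$ with strictly lower $R$. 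I would expect this $Q'$ to leave slack in \eqref{eq:incentive-moment} and \eqref{eq:mixedfeasibility}, enough to tolerate a slightly larger $\lambda$.
\item $\bar u_+<\bar u_+^{\mathrm{obs}}$: at $\lambda=\bar u_+$, a feasible $Q$ exists with $R>0$. Its observed-belief improvement removes all exposure and raises $x$, so \eqref{eq:observed-obedience} holds with strict slack wherever $y>0$. I would show this slack is uniform near $p=1$ after reallocating freed capacity to high suitable beliefs, which permits a larger $\lambda$.
\end{enumerate}
The delicate part of (b) and (c) is turning pointwise slack into slack uniform in $p$ near certainty. That is the step I expect to cost the most work.
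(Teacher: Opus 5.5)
Your monotonicity and closedness arguments for the two private-belief ranges are sound and match the paper's thinning and weak-compactness steps. However, the strict nesting in the last sentence is not established, and step (a) fails as written.

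If $Q(\{0\})=g$, then $x_Q\ge g$ everywhere, with strict inequality wherever the remaining mass sits. Hence $S(Q)=n\int T_1\,dQ=S^{\mathrm{FB}}+n\int_{(0,1)}T_1\,dQ>S^{\mathrm{FB}}$, which violates \eqref{eq:mixedfeasibility} at $(s_1,s_0)=(0,1)$. Preservation is an exact equality, so it cannot be reached ``by perturbation'' as in Lemma~\ref{lem:b-endpoint}: that lemma's slack comes from the common option $(x_{\mathrm c},0)$ with $x_{\mathrm c}<g$, i.e., from sacrificing beneficial use. Nor can the odds moment be tiny. Capacity among $k+1$ suitable agents above $t$ forces $(k+1)[1-x(t)]\ge T_1(t)^k$, hence $\int v\,dQ\ge I_k(F)/(k+1)$ by \eqref{appimpl:tailcharge}.

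The missing idea is to use exactly this inequality to show that $\bar u_+>0$ implies $I_k(F)<\infty$. Then build a preserving rule explicitly. Select suitable agents in decreasing report order, so that $1-X(t)\le\binom{n-1}{k}T_1(t)^k$ and the required exposure $(u_+/u_-)\int_{(0,p]}v\,dX$ is finite. Supply that exposure by thinning a uniform selection of unsuitable agents into the spare places. This works whenever \eqref{eq:preservation-sufficient} holds, in particular for all small $u_+$.

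For (b), the local improvement \eqref{eq:local-improvement} does not give what you need. It mixes the preserving policy with a maximizer of $M_0S-R$, which has $y(1-)=1$ and typically binds capacity; the welfare optimum in Example~\ref{ex:uniform} binds two families of constraints. The mixture therefore has no guaranteed capacity slack, while raising $u_+$ scales up every unsuitable interim. The right starting point is that the attained preserving endpoint has $y(1-)\le b<1$ (Lemma~\ref{lem:b-displacement}). Applying the perturbation of Lemma~\ref{lem:b-endpoint} there yields $Q_\gamma$ with $y_{Q_\gamma}(1-)<\ell_*<1$ and a uniform margin $L_{Q_\gamma}\le L-d$ on $[0,\ell_*]$ in the stop-loss form of Lemma~\ref{lem:a-capacity}. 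Above $\ell_*$ the unsuitable terms vanish. Multiplying $u_+$ by a factor $\lambda$ near one then raises $L_{Q_\gamma}$ by at most $(\lambda-1)(1-\mu)y_{Q_\gamma}(1-)<d$. This stop-loss step is what turns a continuum of constraints into a single margin.

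In (c), you have the difficulty in the wrong place. Uniform slack near $p=1$ is immediate from $y(p)\ge y(p_1)>0$. The real work is where $y$ vanishes or is small. There the slack comes from the threshold structure, $v(p)[1-x_Q(p)]\le\int v\,dQ-d$ for all $p$, and no belief-targeted reallocation is needed beyond purging and completing.

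Finally, in (iii) strict priority by belief is not optimal in general. With $n=3$, $k=2$ and uniform beliefs, it is obedient only up to $u_+/u_-\approx6.45$, whereas \eqref{eq:observed-cutoff} gives $64/9$. So closedness and finiteness of $\bar u_+^{\mathrm{obs}}$ should come from the tail inequalities \eqref{eq:observed-feasibility}. At $t=0$ they give $u_+[n\mu-H(\mu)]\le nu_-(1-\mu)$ with $n\mu-H(\mu)>0$. Your ``violates capacity'' argument for finiteness needs this quantitative form, since $x<1$ everywhere is compatible with capacity.
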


The upper tail of beliefs determines whether these ranges are nonempty.
Using the suitable-state tail mass $T_1$ from \eqref{eq:tail-masses}, define
\begin{equation}
 I_k(F):=\int_0^1\frac{T_1(t)^k}{(1-t)^2}\,dt\in(0,\infty].
 \label{eq:integrability}
\end{equation}
Private-belief coordination is possible at some positive benefit, so that $\bar u_+>0$, if and only if $I_k(F)<\infty$.
Appendix~\ref{app:private-feasibility} proves this criterion, the two private-belief intervals, and their strict separation.
The condition $I_k(F)<\infty$ has a direct interpretation:
\[
 I_k(F)=\mu^k\E\!\left[
       \min_{1\le j\le k}\frac{p_j}{1-p_j}
       \,\middle|\,\theta_1=\cdots=\theta_k=1\right].
\]
Conditional on $k$ independent agents being suitable, their lowest belief
odds must have a finite mean. This limits how often all $k$ beliefs are
close to certainty without bounding them away from one.

When $I_k(F)<\infty$, Table~\ref{tab:feasibility-ranges} displays the three nested ranges, including their endpoints. Throughout the private-belief feasible range, harmful use has a strictly positive minimum. Even in the lowest range, where preservation is feasible, it is suboptimal for every objective in Theorem~\ref{thm:distortion}(ii).

\begin{table}[htbp]
\centering
\caption{Feasible benefits when $I_k(F)<\infty$, holding $u_-$ fixed}
\label{tab:feasibility-ranges}
\small
\renewcommand{\arraystretch}{1.2}
\begin{tabularx}{\textwidth}{@{}Xccc@{}}
\toprule
& \multicolumn{2}{c}{Private beliefs} & Observed beliefs\\
\cmidrule(lr){2-3}\cmidrule(l){4-4}
Benefit $u_+$ & $S=S^{\mathrm{FB}}$ & Any coordination & $(S^{\mathrm{FB}},0)$\\
\midrule
$0<u_+\le\bar u_+^{\mathrm{FB}}$ & Yes & Yes & Yes\\
$\bar u_+^{\mathrm{FB}}<u_+\le\bar u_+$ & No & Yes & Yes\\
$\bar u_+<u_+\le\bar u_+^{\mathrm{obs}}$ & No & No & Yes\\
$u_+>\bar u_+^{\mathrm{obs}}$ & No & No & No\\
\bottomrule
\end{tabularx}
\end{table}

The observed-belief cutoff has the explicit expression
\begin{equation}
 \bar u_+^{\mathrm{obs}}
 =u_-\inf_{0\le t<1}
       \frac{nT_0(t)}{nT_1(t)-H(T_1(t))}\in[0,\infty).
 \label{eq:observed-cutoff}
\end{equation}
Appendix~\ref{app:observed-beliefs} proves this formula and the strict comparison with private beliefs.
If $I_k(F)=\infty$, both private-belief cutoffs are zero, while the observed-belief range is nonempty precisely when
\begin{equation}
 \sup_{0\le t<1}\frac{T_1(t)^k}{1-t}<\infty.
 \label{eq:observed-tail}
\end{equation}
This is equivalent to $1-F(t)=O((1-t)^{1/k})$ as $t\uparrow1$. Thus obedience alone imposes a bounded-tail condition; elicitation imposes the stronger integrability condition \eqref{eq:integrability}. The uniform example below shows that the distinction can determine whether any coordination is possible.

\paragraph{Why the ranges differ.}
The criterion $I_k(F)<\infty$ separates whether coordination is possible at some positive benefit from how large the benefit-to-loss ratio can be. Existence depends on $k$ and the upper tail of $F$, but not on $n>k$; the critical ratio can depend on $n$. For necessity, when $k+1$ agents are suitable and all have beliefs above $t$, at least one must forgo use. For symmetric canonical interims, monotonicity therefore implies
\[
 (k+1)T_1(t)[1-x(t)]\ge T_1(t)^{k+1}.
\]
Incentives limit these nonuse probabilities through the identity
\[
 \int v\,dQ=\int_0^1\frac{1-x(t)}{(1-t)^2}\,dt\le\frac{u_-}{u_+},
\]
proved in Lemma~\ref{lem:a-ic}. Combining the two displays gives $I_k(F)<\infty$: excessive crowding among nearly certain agents cannot be reconciled with their willingness to forgo use.

The construction for part (ii) selects suitable agents in decreasing report order and uses spare places to supply the exposure needed for truthful reports. It guarantees preservation whenever
\begin{equation}
 \frac{u_+}{u_-}\binom{n-1}{k}I_k(F)
 \le\frac{k-S^{\mathrm{FB}}}{n(1-\mu)}.
 \label{eq:preservation-sufficient}
\end{equation}
The left side bounds the highest unsuitable-state use probability required by the construction. The right side is an unsuitable agent's selection probability when spare places are filled uniformly among unsuitable agents. When $I_k(F)<\infty$, this sufficient bound makes the preserving range nonempty. In that case, its upper endpoint is nevertheless strictly below the limit of private-belief feasibility: at the latter limit, every feasible mechanism must exhaust unsuitable exposure at the highest beliefs, whereas preservation requires that exposure to stay below one.

Formula~\eqref{eq:observed-cutoff} gives an explicit necessary and sufficient condition for observed-belief coordination. Equivalently, a feasible observed-belief policy exists exactly when
\begin{equation}
 u_+\big[nT_1(t)-H(T_1(t))\big]\le nu_-T_0(t)
 \qquad(0\le t<1).
 \label{eq:observed-feasibility}
\end{equation}
Write $(z)_+=\max\{z,0\}$ for the positive part of $z$. Among suitable agents with beliefs above $t$, capacity leaves at least
\[
 nT_1(t)-H(T_1(t))=\E\big[(\Bin(n,T_1(t))-k)_+\big]
\]
unserved on average. The condition says that their forgone benefit must be covered by the unsuitable-state loss that nonuse can avoid. Every upper tail must satisfy this comparison; the aggregate inequality at $t=0$ need not suffice. Lemma~\ref{lem:observed-beliefs} proves sufficiency and attainment of first best. This condition and improvement do not require full support.

Finally, for any privately feasible policy, removing unsuitable exposure and filling suitable places creates enough obedience slack to accommodate a strictly larger benefit. Appendix~\ref{app:observed-beliefs} verifies that the slack is uniform across beliefs. When $\bar u_+>0$, applying the improvement at the attained private-belief boundary gives $\bar u_+<\bar u_+^{\mathrm{obs}}$. Observation thus enlarges both the attainable outcome set and the range of incentives permitting coordination.

\subsection{Uniform beliefs: the role of a higher usage ceiling}

For uniform beliefs, $T_1(t)=(1-t^2)/2$, so
\[
 I_k(F)=2^{-k}\int_0^1(1-t)^{k-2}(1+t)^k\,dt<\infty
 \quad\Longleftrightarrow\quad k\ge2.
\]
It follows that, for any $n>k$, a ceiling of one use makes private-belief coordination impossible at every positive benefit-to-loss ratio, whereas a ceiling of two or more permits preservation at all sufficiently small positive ratios.
Allowing an additional use changes whether truthful and obedient advice is possible at all; it does not eliminate its screening cost.

The difference is especially transparent with two agents and a ceiling of one use. Here $T_0(t)=(1-t)^2/2$ and $I_1(F)=\infty$, while \eqref{eq:observed-cutoff} gives
\[
 \frac{\bar u_+^{\mathrm{obs}}}{u_-}
 =\inf_{0\le t<1}\frac{4}{(1+t)^2}=1.
\]
Thus observing beliefs permits first best exactly for $0<u_+\le u_-$, whereas private-belief coordination is impossible at every positive ratio.

An observed-belief policy implementing this entire range selects the sole suitable agent if there is one, the one with the higher belief if both are suitable, and nobody otherwise. A belief-$p$ agent then has
\[
 x(p)=1-\int_p^1r\,dr=\frac{1+p^2}{2},\qquad y(p)=0.
\]
A recommendation to use the resource reveals suitability. Nonuse pools being unsuitable with being suitable but less confident than the other suitable agent. Since $u_+\le u_-$, it is obedient:
\[
 u_+p\frac{1-p^2}{2}\le u_-(1-p).
\]
Ranking by reported beliefs would fail: with no unsuitable exposure, every type would prefer the higher use probability offered to a higher report. With three agents and a ceiling of two uses, Example~\ref{ex:uniform} establishes private-belief feasibility even at equal gains and losses, and shows the resulting welfare cost.

\subsection{Why finite reports cannot coordinate use}

The preceding feasibility results allow unrestricted reports. Could the advisor instead coordinate use using finitely many reported categories, such as low and high confidence?

\begin{proposition}[Finite reports]
\label{prop:finite-reports}
If at least $k$ agents have finite report spaces, no Bayesian Nash equilibrium can satisfy $\sum_i a_i\le k$ almost surely, even with unrestricted standard Borel message spaces.
\end{proposition}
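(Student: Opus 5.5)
The plan is a crowding argument in the spirit of the necessity direction of Proposition~\ref{prop:feasible-costs}. With finitely many reports, an agent cannot be screened finely enough to make its most confident types forgo use when suitable. So $k$ such agents nearly always use together when all are suitable and confident. Capacity then forces every remaining agent to forgo use on that event, and this must hold even for remaining agents that are suitable and nearly certain. Voluntary use rules this out.

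\emph{Step 1: finite-report agents near certainty almost always use when suitable.} Fix a Bayesian Nash equilibrium of a general mechanism satisfying $\sum_i a_i\le k$ almost surely. Let $i$ be an agent with finite report space $\mathcal R_i$, and fix $r\in\mathcal R_i$. Opponents' pairs $(p_j,\theta_j)$ are independent of $(p_i,\theta_i)$, and opponents follow their equilibrium strategies. Hence the conditional law $P_{r,\theta}$ of $i$'s message given own report $r$ and own state $\theta$ does not depend on $p_i$. Write $L_r=dP_{r,1}/d(P_{r,0}+P_{r,1})$. Optimality for almost every type $p$ requires the following: after report $r$ and message $m$, nonuse is chosen only where
\[
 u_+p\,L_r(m)\le u_-(1-p)\,[1-L_r(m)].
\]
Therefore the probability that type $p$ forgoes use when suitable, after report $r$, is at most $P_{r,1}\{L_r\le c(p)\}$, where $c(p)\to0$ as $p\uparrow1$. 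This bound tends to $P_{r,1}\{L_r=0\}=0$. Taking the maximum over the finitely many $r\in\mathcal R_i$ gives $\varepsilon_i(t)\to0$ with
\[
 \Pr\{\theta_i=1,\ p_i>t,\ a_i=0\}\le\varepsilon_i(t)\,T_1(t).
\]
This holds whatever mixed report strategy $i$ uses. This is exactly where finiteness enters: with a continuum of reports, the maximum over reports need not vanish.

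\emph{Step 2: capacity forces confident outsiders to forgo use.} Let $K$ be a set of $k$ finite-report agents, and let $j\notin K$, which exists because $n>k$. Let $A_t$ be the event that every $i\in K$ has $\theta_i=1$, $p_i>t$ and $a_i=1$. On $A_t$, capacity gives $a_j=0$ almost surely. Summing $j$'s obedience-to-nonuse inequalities over its information sets and integrating over types $p_j>s$ gives
\[
 u_+\,s\,\Pr\{\theta_j=1,\ p_j>s,\ a_j=0\}\le u_-\,T_0(s)\le u_-\frac{1-s}{s}\,T_1(s).
\]
The left side is at least $u_+s\Pr\{\theta_j=1,\ p_j>s,\ A_t\}$. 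So the contradiction follows once, for some fixed $t$, the ratio
\[
 \frac{\Pr\{\theta_j=1,\ p_j>s,\ A_t\}}{T_1(s)}
\]
stays bounded away from zero as $s\uparrow1$. Independence gives the target lower bound $T_1(t)^k-\sum_{i\in K}(\text{failure terms})$.

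\emph{Main obstacle.} The difficulty is that the failure probability of a finite agent $i$ must be controlled \emph{conditional} on $\{\theta_j=1,\ p_j>s\}$. Agent $i$'s messages may depend on $j$'s report, so the unconditional bound in Step~1 does not transfer directly. I would first try to redo Step~1 conditionally, via a disintegration over $j$'s report $\rho$. For each $\rho$, the nonuse region of $i$ is still contained in a fixed set $N_r(t)=\{L_r\le c(t)\}$, because $L_r$ is computed from $i$'s own information and is fixed by equilibrium. It then suffices to show that
\[
 \Pr\{m_i\in N_r(t)\mid\theta_i=1,\ r_i=r,\ r_j=\rho\}
\]
is small for all $\rho$ in a set carrying most of the mass of confident suitable types of $j$. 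If this uniformity fails, I would replace $j$ by a random outsider and average, or exploit that the $k$ agents in $K$ are symmetric in the argument, to obtain a uniform version of Step~1. Once that bound is in hand, I would choose $t$ close to one so that the failure terms are below $\tfrac12T_1(t)^k$. Letting $s\uparrow1$ then contradicts the display in Step~2, because its right-hand side is $o(T_1(s))$. The argument uses only equilibrium optimality, so it applies to unrestricted standard Borel message spaces without invoking Lemma~\ref{lem:a-revelation}.
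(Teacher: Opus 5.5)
\textbf{There is a genuine gap: the bound your Step~2 needs cannot be derived from the finite agents' incentives, and the idea that closes it is missing.}

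Your closing contradiction in Step~2 is sound. Any bound $\Pr\{\theta_j=1,\,p_j>s,\,a_j=0\}\ge c\,T_1(s)$ with $c>0$ independent of $s$ contradicts $u_+s^2c\le u_-(1-s)$ as $s\uparrow1$. The gap is the obstacle you flag yourself, and your proposed remedies cannot close it.

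The event $A_t$ involves the actions $a_i$, hence the messages $m_i$, and the advisor may correlate these with $j$'s report. Any argument using only the finite agents' incentives, such as Step~1 disintegrated over $j$'s report or averaged over outsiders, cannot deliver the uniform conditional bound, because that bound can fail while every incentive constraint of $i$ holds. For instance, on the rare event that $j$ reports near certainty, the advisor can send a suitable agent $i$ messages that it also sends much more often when $\theta_i=0$. These messages stay rare under $P_{r,1}$ because $j$'s high reports are rare, so their $L_r$ is near zero. Agent $i$ then optimally uses only above thresholds $\tau_i(m)$ close to one, and $\Pr(A_t\mid\theta_j=1,\,p_j>s)\to0$ as $s\uparrow1$. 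Averaging over outsiders does not help either: there may be only one outsider ($n=k+1$), and the mechanism need not be symmetric.

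The missing idea is that capacity holds \emph{almost surely}. So you never need the finite agents to use with high probability, only with positive conditional probability given everything the outsider's action depends on. The paper proceeds as follows.
\begin{enumerate}[label=(\alph*),leftmargin=*]
\item For each finite-report agent, pick a report $r_i^*$ used by types arbitrarily close to one; this is where finiteness and full support enter.
\item Set $\mathcal E=\bigcap_{i\in K}\{\theta_i=1,\,r_i=r_i^*\}$. This event is defined by reports and states only, not types or actions, so it is independent of $(p_j,\theta_j)$.
\item Condition on $\mathcal E$, on the entire message vector, and on all outsiders' primitives, reports and randomization. The messages depend on the agents in $K$ only through their fixed reports and states, so their types keep laws with positive mass above every cutoff. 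Moreover $\tau_i(m_i)<1$ almost surely. Hence all $k$ agents use with positive conditional probability for almost every message profile.
\item Since $a_j$ is measurable with respect to this conditioning information, capacity forces $a_j=0$ almost surely on $\mathcal E$.
\item Independence then gives $\Pr\{a_j=0\mid p_j=p,\,\theta_j=1\}\ge\Pr(\mathcal E)>0$ for almost every $p$.
\end{enumerate}
This is exactly the uniform lower bound your Step~2 requires, with $c=\Pr(\mathcal E)$, and no quantitative control of $i$'s failure probability is needed.
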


Equivalently, every feasible mechanism requires infinite report spaces for at least $n-k+1$ agents; when $k=1$, this applies to every agent. The conclusion allows unrestricted messages from the advisor. It concerns reports sent to the advisor. Whenever coordination is feasible, the advisor can reply with binary recommendations.

Finite reporting must pool some beliefs arbitrarily close to certainty. For each of $k$ agents with finite report spaces, choose a report compatible with arbitrarily high beliefs. On the positive-probability event that these agents are suitable and send those reports, there remains positive probability that all $k$ use the resource, even conditional on their messages and the other agents' actions. Hard capacity therefore forces every outside agent to forgo use almost surely on that event. The event is independent of an outside agent's own belief and state, so its suitable-state nonuse probability would be bounded away from zero even as its belief approaches one. This contradicts the obedience inequality \eqref{eq:observed-obedience}, which is necessary under private beliefs as well. Appendix~\ref{app:reporting} makes the conditional argument precise.

In their finite-type framework, \citet[Proposition~6]{BergemannMorris2019} show that elicitation does not enlarge the set of attainable decision rules for one receiver with two states and two actions. Their two-receiver example shows that elicitation can matter even with binary actions and payoffs that do not depend on others' actions \citep[Appendix, Example~1]{BergemannMorris2019}. Proposition~\ref{prop:finite-reports} identifies a requirement for feasibility under a realized usage ceiling and full-support beliefs: at least $n-k+1$ agents must have infinite report spaces. This restriction applies to every feasible outcome, independently of the advisor's objective. \citet{BergemannHeumannMorris2026} jointly choose buyer information and a quality--price menu, obtaining finite optimal menus; our requirement of infinitely many possible reports concerns pre-existing private beliefs and voluntary use under a realized ceiling.

\section{Conclusion}

Under full-support beliefs, harmful use has a strictly positive minimum whenever private-belief coordination is feasible. Optimal advice sacrifices beneficial use for every objective considered here, including welfare maximization and harm minimization. Whenever preservation is feasible, even an arbitrarily small departure can improve both objectives. In these feasible environments, observing beliefs allows voluntary advice to attain $(S^{\mathrm{FB}},0)$ and strictly improves the advisor's value. The efficiency loss therefore comes from elicitation.

The assumptions locate the friction. A realized usage ceiling links otherwise independent decisions, and beliefs approaching certainty constrain which nonuse recommendations can be obeyed. Whenever private-belief coordination is possible, its range of benefits from use strictly contains the preserving range and is strictly contained in the observed-belief range. The cost need not survive when beliefs are bounded away from certainty: under \eqref{eq:zero-unsuitable}, report-free advice attains first best. Under the maintained full-support assumption, however, exact coordination requires infinite report spaces for at least $n-k+1$ agents, even though binary recommendations suffice. These are limits on implementing a given usage ceiling through voluntary advice.

\newpage

\appendix
\section{Proofs of implementation}

\subsection{Proof of Theorem~\ref{thm:implementation}}

We first reduce arbitrary equilibria to direct recommendations with reports in $(0,1)$, then complete the interims to canonical rules and symmetrize while preserving $(S,R)$. The final two lemmas characterize individual incentives and joint capacity, respectively.

\subsubsection*{General communication and the revelation reduction}
\label{app:general-communication}

A \emph{report-and-message mechanism} is $\Pi=((\mathcal R_i,\mathcal M_i)_{i=1}^n,\pi)$, where $\mathcal R_i$ and $\mathcal M_i$ are nonempty standard Borel report and message spaces. For report and message random vectors $\mathbf r=(r_i)_{i=1}^n$ and $\mathbf m=(m_i)_{i=1}^n$, the probability kernel $\pi(d\mathbf m\mid\mathbf r,\boldsymbol\theta)$ specifies the message distribution at every report--state profile.

The advisor publicly commits to $\Pi$ before beliefs and states are realized, as in the direct formulation. After observing their respective information, agents simultaneously submit private reports. The advisor draws $\mathbf m\sim\pi(\cdot\mid\mathbf r,\boldsymbol\theta)$ and privately sends $m_i$ to agent $i$. Agents then act simultaneously. Each observes only its own type, report, message, and private randomization; no further diagnostic information is available before acting. Both actions remain available after every report and message. The advisor cannot use transfers, compel actions, deny access, or physically ration use.

A strategy $\sigma_i=(\sigma_i^{\mathrm r},\sigma_i^{\mathrm a})$ consists of measurable probability kernels $\sigma_i^{\mathrm r}(dr_i\mid p_i)$ and $\sigma_i^{\mathrm a}(da_i\mid p_i,r_i,m_i)$. Private randomization is independent across agents and of the primitives; the advisor randomizes separately. Strategies are behavioral with perfect recall. Write $\Pr_{\Pi,\sigma}$ and $\E_{\Pi,\sigma}$ for the joint law induced by mechanism $\Pi$ and strategy profile $\sigma$. A Bayesian Nash equilibrium satisfies, for every agent $i$ and alternative strategy $\widehat\sigma_i$,
\begin{equation}
 \E_{\Pi,\sigma}[u_i(a_i,\theta_i)]
 \ge
 \E_{\Pi,(\widehat\sigma_i,\sigma_{-i})}[u_i(a_i,\theta_i)],
 \label{eq:bayesiannash}
\end{equation}
where $\sigma_{-i}$ denotes the other agents' strategies. Deviations may depend on type and change both the report and the subsequent action rule, including following a false report. Equilibrium optimality is evaluated under $F$.

A mechanism--equilibrium pair $(\Pi,\sigma)$ is feasible when $\sum_i a_i\le k$ almost surely under $\Pr_{\Pi,\sigma}$. It implements the corresponding joint law of $(\mathbf p,\boldsymbol\theta,\mathbf a)$. The capacity requirement concerns this equilibrium alone and does not restrict deviations. A direct recommendation mechanism has $\mathcal R_i=(0,1)$ and $\mathcal M_i=\{0,1\}$. The notation $\Pr_\pi,\E_\pi$ in the main text abbreviates the law and expectation under $(\Pi,\sigma^{\mathrm{TO}})$.

\begin{lemma}
\label{lem:a-revelation}
For every feasible mechanism--equilibrium pair $(\Pi,\sigma)$, there is a direct recommendation kernel $\pi^{\mathrm{raw}}$ with report space $(0,1)$ for which $\sigma^{\mathrm{TO}}$ is a Bayesian Nash equilibrium and the joint law of $(\mathbf p,\boldsymbol\theta,\mathbf a)$ is unchanged. This conclusion does not require Assumption~\ref{ass:belief-support}.
\end{lemma}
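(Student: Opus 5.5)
We follow the standard Myerson revelation argument, adapted to the report-and-action structure. Given a feasible pair $(\Pi,\sigma)$, define the direct kernel by letting the advisor simulate the whole equilibrium play. On receiving reports $\mathbf r\in(0,1)^n$ and observing $\boldsymbol\theta$, it proceeds in four steps: draw $\hat r_i\sim\sigma_i^{\mathrm r}(\cdot\mid r_i)$ independently for each $i$; draw $\mathbf m\sim\pi(\cdot\mid\hat{\mathbf r},\boldsymbol\theta)$; draw $\hat a_i\sim\sigma_i^{\mathrm a}(\cdot\mid r_i,\hat r_i,m_i)$ independently; and send the recommendation $m_i^{\mathrm{dir}}=\hat a_i\in\{0,1\}$ privately to agent $i$. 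Call the resulting kernel $\pi^{\mathrm{raw}}$. Its measurability follows from composing measurable kernels on standard Borel spaces, via the Ionescu--Tulcea or composition-of-kernels theorem. Under $\sigma^{\mathrm{TO}}$ the simulated variables $(\mathbf p,\boldsymbol\theta,\hat{\mathbf r},\mathbf m,\hat{\mathbf a})$ have exactly the law of $(\mathbf p,\boldsymbol\theta,\mathbf r,\mathbf m,\mathbf a)$ under $\Pr_{\Pi,\sigma}$. The joint outcome law is therefore unchanged, and capacity holds almost surely. Nothing in this step uses full support of $F$, only that $F$ is a distribution on $(0,1)$ and that types enter as reports.

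The substantive step is equilibrium. Take any joint deviation in the direct mechanism: a report kernel $\tilde\sigma_i^{\mathrm r}(dr'\mid p)$ followed by an action kernel $\tilde\sigma_i^{\mathrm a}(da\mid p,r',d)$ that may depend on the recommendation $d$. Map it to a strategy $\widehat\sigma_i$ in $\Pi$.
\begin{enumerate}[label=(\alph*),leftmargin=*]
\item Type $p$ draws $r'\sim\tilde\sigma_i^{\mathrm r}(\cdot\mid p)$ and then sends $\hat r\sim\sigma_i^{\mathrm r}(\cdot\mid r')$.
\item After message $m$, it privately simulates $d\sim\sigma_i^{\mathrm a}(\cdot\mid r',\hat r,m)$, remembering $r'$ and $\hat r$ through perfect recall and private randomization.
\item It then plays $a\sim\tilde\sigma_i^{\mathrm a}(\cdot\mid p,r',d)$.
\end{enumerate}
Because the advisor's simulation of agent $i$'s randomization is independent of everything else, the joint law of $(p_i,\theta_i,\mathbf m_{-i},\ldots,a_i)$ under the direct deviation coincides with that under $(\widehat\sigma_i,\sigma_{-i})$ in $\Pi$. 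The opponents' play is identical in both because the advisor simulates them with their equilibrium strategies. Payoffs \eqref{eq:payoffs} depend only on $(a_i,\theta_i)$, so the deviation payoffs are equal, and \eqref{eq:bayesiannash} in $\Pi$ yields the corresponding inequality for $\sigma^{\mathrm{TO}}$ in $\pi^{\mathrm{raw}}$.

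The main obstacle is the handling of reports $r'$ that are off the support of $F$ or belong to a null set. In the direct mechanism every $r'\in(0,1)$ is available, while $\sigma_i^{\mathrm r}(\cdot\mid r')$ is an arbitrary but fixed measurable kernel defined for all $r'\in(0,1)$. The simulated deviation "mimic type $r'$" is therefore always a legitimate strategy in $\Pi$, and no issue arises from $r'$ being a probability-zero type. A second point is that the mapped deviation must be behavioral with perfect recall. This holds because the agent only needs to remember its own draws, which it can encode into its private randomization.

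Finally, we note that the Bayesian Nash inequality is evaluated in ex ante expectation under $F$. This is exactly the sense of equilibrium used in the direct formulation, so no almost-everywhere upgrade is needed at this stage; that upgrade is left to the canonical completion lemma.
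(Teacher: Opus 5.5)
Your proposal is correct and follows the paper's proof essentially step for step. It uses the same simulation kernel (run the original reporting strategy at the submitted report, draw the original message, apply the original action rule, and send only the resulting action) and the same mapping of a direct joint deviation into an original-game deviation that retains the auxiliary direct report $r'$. The paper is more precise on one point, which is why that deviation is admissible as a behavioral strategy whose action kernel depends only on $(p_i,\hat r,m_i)$: rather than relying on private randomization to carry $r'$ forward, it notes that conditional on $(p_i,\hat r)$ the auxiliary report is independent of $\theta_i$ and the message, so its conditional law can be integrated out of the action rule.
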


\begin{proof}
Fix a feasible pair $(\Pi,\sigma)$, with kernel $\pi$. At a
report--state profile $(\mathbf r,\boldsymbol\theta)\in(0,1)^n\times\{0,1\}^n$, simulate
\[
 \widetilde r_i\sim\sigma_i^{\mathrm r}(\cdot\mid r_i),\qquad
 \widetilde{\mathbf m}\sim\pi(\cdot\mid\widetilde{\mathbf r},\boldsymbol\theta),
 \qquad
 m_i\sim\sigma_i^{\mathrm a}(\cdot\mid r_i,\widetilde r_i,\widetilde m_i),
\]
using independent reporting and action draws across agents. Send only $m_i$.
Composition of probability kernels on standard Borel spaces defines a
kernel $\pi^{\mathrm{raw}}$. Truthful obedience reproduces the
original joint law of $(\mathbf p,\boldsymbol\theta,\mathbf a)$.

Consider any deviation $\widehat\sigma_i$ in this direct mechanism. An original-game
deviator first draws an auxiliary direct report $d_i\in(0,1)$, then submits
$\widetilde r_i$, according to
\[
 d_i\sim\widehat\sigma_i^{\mathrm r}(\cdot\mid p_i),\qquad
 \widetilde r_i\sim\sigma_i^{\mathrm r}(\cdot\mid d_i).
\]
After receiving $\widetilde m_i$, it draws
\[
 \widetilde a_i\sim\sigma_i^{\mathrm a}(\cdot\mid d_i,\widetilde r_i,\widetilde m_i),
 \qquad
 a_i\sim\widehat\sigma_i^{\mathrm a}(\cdot\mid p_i,d_i,\widetilde a_i).
\]
This is an admissible behavioral deviation: conditional on
$(p_i,\widetilde r_i)$, the retained auxiliary report is independent of
the state and original message, so its conditional law can be integrated
out of the action rule. The resulting payoff equals that from the direct
deviation. Original equilibrium optimality therefore rules out every joint
reporting--action deviation in the constructed mechanism. Truthful obedience
implements the original outcome.

\end{proof}

For a direct kernel, the pointwise recommendation-capacity condition is
\begin{equation}
 \pi\!\left(\left\{\mathbf m:\sum_i m_i\le k\right\}\mid\mathbf r,\boldsymbol\theta\right)=1
 \quad\text{for every }(\mathbf r,\boldsymbol\theta)\in(0,1)^n\times\{0,1\}^n.
 \label{eq:pointwiserecommendationcapacity}
\end{equation}

\begin{lemma}
\label{lem:a-completion}
Under Assumption~\ref{ass:belief-support}, suppose a direct recommendation kernel $\pi^{\mathrm{raw}}$ makes $\sigma^{\mathrm{TO}}$ a Bayesian Nash equilibrium and satisfies $\Pr_{\pi^{\mathrm{raw}}}\{\sum_i m_i\le k\}=1$. There is a direct kernel $\pi^{\mathrm{canon}}$ inducing the same joint law of $(\mathbf p,\boldsymbol\theta,\mathbf a)$ under $\sigma^{\mathrm{TO}}$ such that:
\begin{enumerate}[label=(\roman*),leftmargin=*]
\item its interims $(x_i,y_i)$ are nondecreasing and right-continuous and satisfy \eqref{eq:jointIC} for every $i$ and every $p,r\in(0,1)$;
\item it satisfies \eqref{eq:pointwiserecommendationcapacity} at every report--state profile.
\end{enumerate}
In particular, $\sigma^{\mathrm{TO}}$ is a Bayesian Nash equilibrium of the completed mechanism.
\end{lemma}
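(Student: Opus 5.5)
The plan has three stages. First, fix capacity on null report sets. Second, replace the recommendations at the remaining bad reports with those of a well-chosen good report. Third, check that the resulting interims are monotone and right-continuous and satisfy \eqref{eq:jointIC} everywhere.

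\textbf{Capacity.} Let $G\subset(0,1)^n\times\{0,1\}^n$ be the set of report--state profiles at which $\pi^{\mathrm{raw}}$ puts probability one on $\{\sum_i m_i\le k\}$. This set is measurable, and it has full measure under the truthful law $F^n\otimes$(state law). At profiles outside $G$, I would replace $\pi^{\mathrm{raw}}$ by the kernel that recommends nonuse to everyone. Since the changed set is null under truthful reporting, the outcome law is unchanged. The drawback is that this change can alter the interims at individual reports: a fixed report $r_i$ together with opponents' profiles may land in the changed set with positive conditional probability. By Fubini, however, this happens only for an $F$-null set of reports $r_i$ for each $i$. I will therefore treat that null set together with the exceptional reports from the next step.

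\textbf{Bad reports.} Equilibrium gives a null set $N_i$ of true types outside of which \eqref{eq:jointIC} holds against every report. Let $E_i$ be the union of $N_i$ and the null set from the capacity step; it is $F$-null. At reports $r\in E_i$, I would redefine agent $i$'s treatment: when $i$ reports $r\in E_i$, the mechanism treats $i$ as having reported $\phi_i(r)$, a good report chosen so that $\phi_i(r)\notin E_i$. Capacity is then automatic at every profile, because the mechanism is simply $\pi$ evaluated at a modified report profile, applied to all agents simultaneously. The modified profile lies in $G$ after further replacing any remaining non-$G$ profiles by all-nonuse; by Fubini, this costs nothing for reports outside $E_i$.

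To choose $\phi_i$, I would use right limits. For types $p\notin E_i$, the truthful interims $(x_i(p),y_i(p))$ satisfy the incentive inequalities among themselves. The standard single-crossing argument with $U(p;r)=u_+px(r)-u_-(1-p)y(r)$ then shows that $x_i$ and $y_i$ are nondecreasing on the complement of $E_i$. Full support makes this complement dense, so for every $r$ the right limits along good reports exist. Rather than requiring $\phi_i(r)$ to be a single good report, I would let the mechanism randomize over good reports $r_j\downarrow r$. A mixture of menu options is itself a menu option, and the limit of such mixtures is attained as a limit of kernels. The clean way to implement this is to set $\pi^{\mathrm{canon}}(\cdot\mid r,\cdot)$ equal to a weak limit of $\pi(\cdot\mid r_j,\cdot)$ for some sequence of good reports $r_j\downarrow r$. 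Because $\{0,1\}^n$ is finite, such limits exist. The same right-limit redefinition is applied to all good reports as well, to obtain right-continuity. Since the redefinition changes $x_i$ at only countably many good points (jumps), the outcome is again unchanged $F$-a.e.; $F$ is atomless.

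\textbf{Main obstacle and verification.} The main obstacle is doing this measurably and jointly across agents while keeping interims equal to the right limits. Because the opponents' interims also change, a per-profile weak limit is delicate. I would therefore define the modification only through agent $i$'s own coordinate at each step and proceed one agent at a time. Agent $i$'s right-limit modification changes opponents' interims only on the $F$-null set of $i$'s reports, so those interims are unaffected after averaging.

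With the modifications in place, right-continuity and monotonicity follow from the construction. Incentive compatibility for every $p,r$ then follows by taking limits in \eqref{eq:jointIC}. For good true types $p$, deviations to any report $r$ produce an option that is a limit of options from good reports, and the inequalities are closed under such limits. A bad true type $p$ is approximated by good types $p_j\downarrow p$; both sides of \eqref{eq:jointIC} are continuous in $p$ given fixed options, and right-continuity of $U(p_j)\to U(p)$ gives the inequality at $p$. This yields (i) and (ii), and equilibrium follows because every type obeys truthfully.
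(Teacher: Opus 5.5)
Your treatment of the interims matches the paper. You derive monotonicity from the two report-and-obey comparisons on a dense full-measure set of good types, extend right-continuously along good reports, and pass \eqref{eq:jointIC} to the limit in both arguments. The gap is in producing a kernel that delivers these extended interims at the exceptional reports; you flag this as the main obstacle but do not resolve it.

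\begin{itemize}
\item \emph{Weak limits.} Compactness of the simplex on $\{0,1\}^n$ gives, for each opponent profile $(\mathbf r_{-i},\boldsymbol\theta)$ separately, a convergent subsequence of $\pi(\cdot\mid r_j,\mathbf r_{-i},\boldsymbol\theta)$. The subsequence depends on the profile, so you get no measurable limit kernel. Nor does dominated convergence identify its interim with $(x_i(r+),y_i(r+))$; Fatou-type gaps are possible. Choosing such limits measurably over a possibly uncountable null set of reports is exactly what remains open.
\item \emph{Mixtures.} Mixtures of good options do not fix exactness. Because $u_-\,dy=u_+v\,dx$ with $v$ increasing, the option curve is convex, and an unattained right limit is in general not a mixture of good options.
\item \emph{Redefining at all good reports.} Doing this changes the kernel, not merely the interims, on a set of positive measure. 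That can alter the joint law of $(\mathbf p,\boldsymbol\theta,\mathbf a)$. Only the countably many jump points and the null exceptional set may be touched.
\end{itemize}

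The missing idea is that at an exceptional report the kernel need only deliver the correct \emph{own} option. A unilateral deviator faces truthful opponents whose reports are almost surely non-exceptional, and profiles containing an exceptional report are null for every other agent's interims. The paper collects three sets into one Borel null set $\mathcal N$: the complement of the common good set, the reports disturbed by the capacity fix, and the jumps of the extension. Whenever some report lies in $\mathcal N$, it uses the kernel $z_*\delta_{\mathbf e_{i_*}}+(1-z_*)\delta_{\mathbf 0}$. Here $i_*$ is the lowest-indexed agent with a report in $\mathcal N$ and $z_*=\theta_{i_*}x_{i_*}(r_{i_*})+(1-\theta_{i_*})y_{i_*}(r_{i_*})$. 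This kernel is measurable, respects capacity because $k\ge1$, and gives exactly the canonical option.

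An equally simple repair fits your framework. Map an exceptional or jump report $r$ measurably to a good report $r'>r$ taken from a fixed countable dense set. Then thin agent $i$'s use recommendation by the state-dependent factors $x_i(r+)/x_i(r')$ and $y_i(r+)/y_i(r')$, with $0/0:=0$. Monotonicity keeps these factors at most one, and thinning cannot violate capacity.

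Separately, your opening claim needs an argument: that equilibrium yields one null set of types outside which \eqref{eq:jointIC} holds against every report. Equilibrium only gives a null exceptional set for each deviation, and there are uncountably many reports. The paper handles this with countably many reports whose option pairs are dense in the range, plus continuity of each comparison in the option pair.
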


\begin{proof}
Write $(x_i^{\mathrm{raw}},y_i^{\mathrm{raw}})$ for the raw interims.
For each agent, choose reports whose option pairs are countable and dense
in their range. For each chosen report and each of the four continuation
rules, deviating where its utility gain $\Delta U_i(p)$ is positive gives
$\int[\Delta U_i(p)]_+dF(p)=0$ by equilibrium. Intersecting these
full-measure sets and using continuity in the option pair shows that the
raw interims satisfy \eqref{eq:jointIC} for every report and every true
type in a common Borel set $\mathcal T\subset(0,1)$ of full $F$-measure.
This common-full-measure argument does not use Assumption~\ref{ass:belief-support}.

Full support makes $\mathcal T$ dense in $(0,1)$. For $p>r$ in $\mathcal T$, the
two report-and-obey comparisons give
\[
 u_+v(r)\Delta x\le u_-\Delta y\le u_+v(p)\Delta x,
 \qquad
 (\Delta x,\Delta y)
 =(x_i^{\mathrm{raw}}(p)-x_i^{\mathrm{raw}}(r),
   y_i^{\mathrm{raw}}(p)-y_i^{\mathrm{raw}}(r)).
\]
Both differences are nonnegative. Define the right-continuous extensions
\[
 (x_i(r),y_i(r))=
 \lim_{\substack{t\downarrow r\\t\in\mathcal T}}
 (x_i^{\mathrm{raw}}(t),y_i^{\mathrm{raw}}(t)).
\]
Limits from $\mathcal T$ in both arguments preserve \eqref{eq:jointIC}
at every $p,r$. On $\mathcal T$, raw values differ from their extensions
only at jumps, a countable set; atomlessness makes these changes null.

Replace recommendations by $\mathbf0$ on the Borel, $\rho^n$-null set
where the raw kernel violates capacity, obtaining $\pi^{\mathrm{cap}}$.
Because interior beliefs give both states positive probability,
$\rho^n$ is equivalent to $F^n$ times counting measure. Fubini implies
that this replacement changes each agent's state-contingent interims
only on a Borel $F$-null set of own reports. Let $\mathcal N$ be the union of
all those exceptional reports, the complement of $\mathcal T$, and all
canonical jumps; it is a Borel $F$-null set. Outside $\mathcal N$, $\pi^{\mathrm{cap}}$ has the
canonical interims.

If every report lies outside $\mathcal N$, retain $\pi^{\mathrm{cap}}$.
Otherwise let $i_*=\min\{i:r_i\in\mathcal N\}$ and use the kernel
\[
 z_*\delta_{\mathbf e_{i_*}}+(1-z_*)\delta_{\mathbf0},\qquad
 z_*=\theta_{i_*}x_{i_*}(r_{i_*})+
       (1-\theta_{i_*})y_{i_*}(r_{i_*}),
\]
where $\mathbf e_i$ has only coordinate $i$ equal to one. This measurable
completion satisfies pointwise capacity because $k\ge1$. A unilateral
exceptional report faces truthful opponents outside $\mathcal N$ almost
surely and receives its canonical option; other reports retain their
interims. All changes are null under truthful inputs, so the completed
kernel preserves the outcome and has the prescribed interims at every
report.
\end{proof}

\begin{lemma}
\label{lem:a-symmetry}
Let $\pi$ be a direct kernel with canonical interims $(x_i,y_i)_{i=1}^n$ satisfying \eqref{eq:jointIC} and with pointwise capacity \eqref{eq:pointwiserecommendationcapacity}. There is a symmetric direct kernel $\bar\pi$ with the same $(S,R)$, the same capacity property, and common interims
\[
 \bar x(p)=\frac1n\sum_{i=1}^n x_i(p),\qquad
 \bar y(p)=\frac1n\sum_{i=1}^n y_i(p).
\]
These interims are canonical and satisfy \eqref{eq:jointIC}.
\end{lemma}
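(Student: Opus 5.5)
The plan is to symmetrize by a uniformly random relabeling that the advisor draws itself and never reveals. Let $\Sigma_n$ be the permutation group. Given a report--state profile $(\mathbf r,\boldsymbol\theta)$, the kernel $\bar\pi$ does three things:
\begin{enumerate}[label=(\alph*)]
\item draw $\varsigma\in\Sigma_n$ uniformly;
\item feed agent $i$'s input into coordinate $\varsigma(i)$ of $\pi$, i.e., apply $\pi$ to the permuted profile $(\mathbf r_{\varsigma^{-1}},\boldsymbol\theta_{\varsigma^{-1}})$;
\item send agent $i$ the recommendation generated at coordinate $\varsigma(i)$.
\end{enumerate}
The mixture over the finite group $\Sigma_n$ of measurable kernels is measurable. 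Relabeling the inputs only re-randomizes $\varsigma$, so $\bar\pi$ is symmetric. Pointwise capacity \eqref{eq:pointwiserecommendationcapacity} holds for every $\varsigma$, since the recommendation vector is just a permutation of one drawn from $\pi$; hence it holds for $\bar\pi$.

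For the interims, fix agent $i$, report $r$ and state $\theta_i$. Conditional on $\varsigma(i)=j$, which has probability $1/n$, agent $i$ occupies coordinate $j$ of $\pi$. The other agents' belief--state pairs are i.i.d.\ $\rho$, so they fill the remaining coordinates with law $\rho^{n-1}$. Its interim probability is therefore $x_j(r)$ in the suitable state and $y_j(r)$ in the unsuitable state, and averaging over $j$ gives $\bar x,\bar y$. The same computation under truthful inputs shows that the law of $(\mathbf p,\boldsymbol\theta,\mathbf m)$ is the $\varsigma$-mixture of permuted copies of the original law. Since $S$ and $R$ are sums over agents, both are permutation invariant and unchanged.

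It remains to check canonicity and \eqref{eq:jointIC}; this step needs the only real care. Averages of nondecreasing right-continuous functions are nondecreasing and right-continuous. The difficulty is that \eqref{eq:jointIC} is a maximum of terms, and a maximum is not preserved by averaging in general. Here, however, $U(p)$ and each $U(p;r)$ are affine in $(x,y)$. Each inequality
\[
 U_j(p)\ge\max\{0,\;U^{1}(p),\;U_j(p;r),\;U^{1}(p)-U_j(p;r)\}
\]
splits into four linear inequalities, one against each term. Each of these four is preserved under averaging over $j$. In particular, $U^{1}(p)$ does not depend on $j$, so the reversal comparison averages correctly. This gives \eqref{eq:jointIC} for the symmetric interims at every $p,r$.

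Finally, because deviations in $\bar\pi$ face interims $(\bar x,\bar y)$ and \eqref{eq:jointIC} holds at every type, truthful obedience is an equilibrium of $\bar\pi$.
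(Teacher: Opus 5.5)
Your proposal is correct and follows essentially the same route as the paper. The paper defines $\bar\pi(\mathcal C\mid\mathbf r,\boldsymbol\theta)=\frac1{n!}\sum_\chi\pi(\chi\mathcal C\mid\chi\mathbf r,\chi\boldsymbol\theta)$, which is the undisclosed random relabeling you describe, and it argues as you do: exchangeability gives the common interims and $(S,R)$, averaging preserves monotonicity, right-continuity and the four affine inequalities in \eqref{eq:jointIC}, and relabeling leaves the number of use recommendations, and hence pointwise capacity, unchanged.
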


\begin{proof}
For a permutation $\chi$ of the agent indices and an event $\mathcal C\subseteq\{0,1\}^n$, write $(\chi\mathbf r)_i=r_{\chi^{-1}(i)}$, similarly for states and recommendations, and $\chi\mathcal C=\{\chi\mathbf m:\mathbf m\in\mathcal C\}$. Apply a uniformly random permutation, use $\pi$, and undo the permutation without revealing it:
\[
 \bar\pi(\mathcal C\mid\mathbf r,\boldsymbol\theta)
 =\frac1{n!}\sum_\chi\pi(\chi\mathcal C\mid\chi\mathbf r,\chi\boldsymbol\theta).
\]
Exchangeability gives the stated common interims and preserves $S,R$. Averaging preserves right-continuity, monotonicity, and the four affine incentive inequalities in \eqref{eq:jointIC}. Relabeling preserves the number of recommendations to use, so pointwise capacity is unchanged.
\end{proof}

\begin{lemma}
\label{lem:a-ic}
A right-continuous pair $(x,y):(0,1)\to[0,1]^2$ is incentive compatible if and only if it has representation \eqref{eq:Qrepresentation} for a probability measure $Q$ on $[0,1)$ satisfying \eqref{eq:incentive-moment}. In that case, $Q=x(0+)\delta_0+dx$, $x$ and $y$ are nondecreasing, $x(1-)=1$, $y(0+)=0$, $y\le x$, and
\begin{equation}
 \int_{[0,1)}v(\tau)\,Q(d\tau)
 =\int_0^1\frac{1-x(t)}{(1-t)^2}\,dt.
 \label{appimpl:tailcharge}
\end{equation}
For a symmetric direct mechanism with these interims, $(S,R,W)$ satisfy \eqref{eq:welfareQ}.
\end{lemma}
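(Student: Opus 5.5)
The plan is to prove necessity first, then sufficiency, and finally the identity \eqref{appimpl:tailcharge} and the formulas \eqref{eq:welfareQ}.

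\textbf{Necessity.} Assume \eqref{eq:jointIC} holds at every $p,r$. Take $p>r$ and write the two report-and-obey comparisons $U(p;p)\ge U(p;r)$ and $U(r;r)\ge U(r;p)$. Adding them gives
\[
 u_+v(r)\,\Delta x\le u_-\Delta y\le u_+v(p)\,\Delta x
\]
for $\Delta x=x(p)-x(r)$ and $\Delta y=y(p)-y(r)$. Hence both $x$ and $y$ are nondecreasing, since $v>0$ on $(0,1)$.

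Let $dx$ and $dy$ be the Stieltjes measures of the right-continuous functions. Refine partitions of an interval $(r,p]$ and use the sandwich on each cell. This yields $u_-\,dy=u_+v\,dx$ as measures on $(0,1)$, by a Riemann--Stieltjes squeeze, because $v$ is continuous.

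The boundary conditions come from the obedience comparisons.
\begin{itemize}
\item Obedience against always using gives $U(p)\ge U^1(p)$, that is, $u_+p(1-x(p))\le u_-(1-p)(1-y(p))\le u_-(1-p)$. Letting $p\uparrow1$ forces $x(1-)=1$.
\item Obedience against never using gives $U(p)\ge0$, that is, $u_-(1-p)y(p)\le u_+p\,x(p)\le u_+p$. Letting $p\downarrow0$ forces $y(0+)=0$.
\end{itemize}

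Now set $Q=x(0+)\delta_0+dx$. It is a probability measure on $[0,1)$ because $x(1-)=1$. The representation \eqref{eq:Qrepresentation} follows by integrating $dy$ from $0+$ and using $v(0)=0$, so the atom at zero contributes no exposure.

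Finally, $y\le1$ gives $y(1-)=(u_+/u_-)\int v\,dQ\le1$, which is the moment bound \eqref{eq:incentive-moment}.

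\textbf{Sufficiency.} Given $Q$ satisfying \eqref{eq:incentive-moment}, define $x$ and $y$ by \eqref{eq:Qrepresentation}. They are right-continuous, nondecreasing, take values in $[0,1]$, and satisfy $x(1-)=1$. The deviation gain is
\[
 U(p)-U(p;r)=\int_{(r,p]}\big(u_+p-u_-(1-p)\tfrac{u_+}{u_-}v(\tau)\big)\,Q(d\tau)=u_+(1-p)\int_{(r,p]}\big(v(p)-v(\tau)\big)\,Q(d\tau)
\]
for $r<p$, and it is symmetric for $r>p$. This is nonnegative since $v$ is increasing. This is the envelope step.

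For the constant actions, first note that $y\le x$. Indeed $v(\tau)\le v(p)$ on $[0,p]$, so
\[
 y(p)\le \tfrac{u_+}{u_-}\,v(p)\,x(p),
\]
which gives $U(p)\ge0$ directly. For the comparison with always using, I will verify
\[
 u_+p(1-x(p))\le u_-(1-p)(1-y(p)),
\]
which is equivalent to
\[
 u_+v(p)\,Q((p,1))\le u_-\Big(1-\tfrac{u_+}{u_-}\int_{[0,p]}v\,dQ\Big).
\]
This holds because $v(p)Q((p,1))\le\int_{(p,1)}v\,dQ$ together with the moment bound.

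Reversal after report $r$ yields $U^1(p)-U(p;r)$. I will show $U(p;r)\ge\min\{0,U^1(p)\}$, which follows from $y(r)\le x(r)$:
\begin{itemize}
\item If $U^1(p)\ge0$, then $p\ge u_-/(u_++u_-)$, and $U(p;r)\ge x(r)U^1(p)\ge0$ because $y(r)\le x(r)$.
\item Otherwise, $U(p;r)\ge y(r)\,U^1(p)\ge U^1(p)$.
\end{itemize}
Hence $U^1(p)-U(p;r)\le\max\{0,U^1(p)\}\le U(p)$. The bound $y\le x$ needs checking in both regimes.

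\textbf{Remaining formulas.} Identity \eqref{appimpl:tailcharge} follows from Tonelli. Since $v(\tau)=\int_0^\tau(1-t)^{-2}\,dt$, we have
\[
 \int v\,dQ=\int_0^1(1-t)^{-2}\,Q((t,1))\,dt,
\]
and $Q((t,1))=1-x(t)$. The formulas \eqref{eq:welfareQ} come from
\[
 S=n\int p\,x(p)\,dF(p)=n\int\!\!\int \mathbf 1\{\tau\le p\}\,p\,dF\,dQ=n\int T_1\,dQ,
\]
and similarly for $R$ using $T_0$. The welfare formula is the linear combination of the two.

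\textbf{Main obstacle.} I expect the hardest step to be rigorously passing from the two-point sandwich to the measure identity $u_-\,dy=u_+v\,dx$, including at atoms of $dx$. At a jump at $\tau$, the sandwich with $r\uparrow\tau$ and $p=\tau$ gives $u_-\Delta y=u_+v(\tau)\Delta x$, consistent with right-continuity and the closed interval $[0,p]$ in \eqref{eq:Qrepresentation}. I would first establish the identity on half-open intervals $(r,p]$ via partitions and continuity of $v$, then extend it by the $\pi$--$\lambda$ theorem.
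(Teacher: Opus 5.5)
Your route is the paper's: the two-sided envelope sandwich, the Stieltjes squeeze giving $u_-\,dy=u_+v\,dx$, the boundary limits from the two constant-action comparisons, the sign change of the integrand in $U(p;r)$, the odds bound for always-use obedience, and Tonelli for the remaining identities. There is, however, a genuine gap at $y\le x$. This bound is one of the lemma's stated conclusions and the only input to your reversal step. The inequality you display, $y(p)\le (u_+/u_-)v(p)x(p)$, is just $U(p)\ge0$. It yields $y\le x$ only when $(u_+/u_-)v(p)\le1$, and this coefficient exceeds one for every $p>u_-/(u_++u_-)$. So one obedience inequality alone cannot deliver the bound, and your closing remark that it ``needs checking'' leaves the reversal comparison unproved.

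The missing step is to combine the two obedience inequalities, both of which you have already established at every type. Write $c=(u_+/u_-)v(r)$. The inequalities read $y(r)\le c\,x(r)$ and $1-y(r)\ge c\,[1-x(r)]$. If $c\le1$, the first gives $y(r)\le x(r)$. If $c\ge1$, the second gives $1-y(r)\ge1-x(r)$. The paper does this in one line: it multiplies the first by $1-x(r)$ and the second by $x(r)$, chains them, and obtains $y(r)[1-x(r)]\le x(r)[1-y(r)]$. Both inequalities are part of incentive compatibility, so the same argument gives $y\le x$ for every incentive-compatible pair, as the lemma asserts. With that in place, your bound $U(p;r)\ge\min\{0,U^1(p)\}$ and hence the reversal comparison go through.

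Two smaller slips. First, adding the two report-and-obey comparisons does not produce the sandwich; each comparison gives one side separately. Second, monotonicity of $x$ follows from $v(r)<v(p)$, not merely from $v>0$; nonnegativity of $v$ is what then makes $y$ nondecreasing.
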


\begin{proof}
For $0<r<p<1$, the two report-and-obey comparisons give
\[
 u_+v(r)[x(p)-x(r)]\le u_-[y(p)-y(r)]
 \le u_+v(p)[x(p)-x(r)].
\]
Thus $x$ and $y$ are nondecreasing. Summing these bounds along a
partition of any compact subinterval and letting its mesh tend to zero
gives the Stieltjes identity $u_-\,dy=u_+v\,dx$, since $v$ is continuous
there. The argument includes atoms and singular continuous parts.
Obedience gives
\[
 u_-(1-p)y(p)\le u_+p,\qquad
 u_+p[1-x(p)]\le u_-(1-p)[1-y(p)]\le u_-(1-p).
\]
Hence $y(0+)=0$ and $x(1-)=1$. The probability measure
$Q=x(0+)\delta_0+dx$ therefore has representation
\eqref{eq:Qrepresentation}, and $y(1-)\le1$ gives the odds bound.

Conversely, take such a $Q$. Its represented pair is right-continuous,
lies in $[0,1]^2$, and satisfies
\[
 U(p;r)=u_+\int_{[0,r]}\frac{p-\tau}{1-\tau}\,Q(d\tau).
\]
The integrand changes sign at $p$, so truth-telling maximizes this
expression and $U(p)\ge0$. The odds bound gives
\[
 u_-[1-y(p)]\ge u_+\int_{(p,1)}v\,dQ
 \ge u_+v(p)[1-x(p)],
\]
or $U(p)\ge U^1(p)$. Multiplying these two obedience inequalities
at $r$ by $1-x(r)$ and $x(r)$, respectively, gives
$y(r)[1-x(r)]\le x(r)[1-y(r)]$, hence $y(r)\le x(r)$.
Reversal after any report then satisfies
\[
 U^1(p)-U(p;r)
 =(1-y(r))U^1(p)-u_+p[x(r)-y(r)]
 \le\max\{0,U^1(p)\}.
\]
This proves all four comparisons in \eqref{eq:jointIC}.

Finally, Tonelli's theorem gives \eqref{appimpl:tailcharge} from
$v(\tau)=\int_0^\tau(1-t)^{-2}dt$ and $Q((t,1))=1-x(t)$.
Applying the same theorem to \eqref{eq:Qrepresentation} yields
$S/n=\int T_1\,dQ$ and $R/n=(u_+/u_-)\int vT_0\,dQ$,
and hence \eqref{eq:welfareQ}.
\end{proof}

For the next lemma and subsequent proofs, define the \emph{stop-loss transforms}
\begin{equation}
 \begin{aligned}
 L_Q(\ell)&=\int_0^1\!\left[p(x_Q(p)-\ell)_+
              +(1-p)(y_Q(p)-\ell)_+\right]dF(p),\\
 L(\ell)&=\max_{0\le\eta\le1}\left\{\frac{H(\eta)}n-\ell\eta\right\},
 \qquad 0\le\ell\le1.
 \end{aligned}
 \label{eq:stoploss-transforms}
\end{equation}

\begin{lemma}
\label{lem:a-capacity}
Fix a probability measure $Q$ on $[0,1)$ satisfying \eqref{eq:incentive-moment}. The following are equivalent:
\begin{enumerate}[label=(\roman*),leftmargin=*]
\item $(x_Q,y_Q)$ is jointly implementable under hard capacity;
\item the subset capacity inequalities \eqref{eq:mixedfeasibility} hold for every $s_1,s_0\in[0,1]$;
\item the inequalities
\begin{equation}
 L_Q(\ell)\le L(\ell)\qquad\text{for every }\ell\in[0,1)
 \label{eq:stoploss}
\end{equation}
hold.
\end{enumerate}
When these conditions hold, an implementing direct kernel has the interims $(x_Q,y_Q)$ at every report and satisfies \eqref{eq:pointwiserecommendationcapacity}.
\end{lemma}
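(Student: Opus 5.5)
The plan is to prove the cycle (i)$\Rightarrow$(ii)$\Rightarrow$(iii)$\Rightarrow$(i). The last implication will build the implementing kernel, which gives the final claim at the same time.

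\emph{(i)$\Rightarrow$(ii).} Fix a feasible symmetric kernel with interims $(x_Q,y_Q)$ and cutoffs $s_1,s_0$. The left side of \eqref{eq:mixedfeasibility} equals $\E\sum_i a_i\mathbf 1\{(p_i,\theta_i)\in\mathcal B_{s_1,s_0}\}$. Pointwise this sum is at most $\min\{k,N\}$, where $N=\#\{i:(p_i,\theta_i)\in\mathcal B_{s_1,s_0}\}\sim\Bin(n,\rho(\mathcal B_{s_1,s_0}))$. Taking expectations gives $H(T_1(s_1)+T_0(s_0))$.

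\emph{(ii)$\Rightarrow$(iii).} Write $\phi(p,\theta)$ for $x_Q(p)$ when $\theta=1$ and for $y_Q(p)$ when $\theta=0$. By monotonicity of $x_Q$ and $y_Q$, for each $\ell$ the set $\{\phi>\ell\}$ is, up to the right-continuity boundary, of the form $\mathcal B_{s_1,s_0}$ with $s_1=\inf\{x_Q>\ell\}$ and $s_0=\inf\{y_Q>\ell\}$. On that set,
\[
L_Q(\ell)=\int_{\mathcal B}(\phi-\ell)\,d\rho\le H(\rho(\mathcal B))/n-\ell\rho(\mathcal B)\le L(\ell).
\]
Atomlessness removes the boundary issue.

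\emph{(iii)$\Rightarrow$(i).} This is the main step, a Border-type construction for belief--state pairs. Both sides are stop-loss transforms. $L$ is the convex conjugate of $-H/n$ on $[0,1]$, and $H$ is concave because $\min\{k,\cdot\}$ is concave and binomial mixtures are increasing in the convex order with $\eta$; one should double check this claim. The inequality $L_Q\le L$ for all $\ell$, together with the matching value at $\ell$ near $1$ (both vanish), says that the distribution of $\phi$ under $\rho$ is dominated in the increasing-concave/majorization sense by the distribution of ``$n$ times the marginal allocation'' of the priority rule. Concretely, rank each agent by $\phi(p_i,\theta_i)$. The \emph{serve-highest-first} rule, which recommends use to the $\min\{k,N\}$ agents with highest rank, produces an interim allocation $\psi$ that is a function of the rank quantile. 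Its stop-loss transform is exactly $L$, via the binomial identity for the expected number of top-ranked agents within capacity.

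Standard reduced-form arguments (\citet{Border1991}, \citet{CheKimMierendorff2013}) then show that any $\phi$ majorized in this sense is a mixture of such priority rules with ties randomized. Equivalently, a max-flow or Strassen argument applies on the nonatomic type space $(0,1)\times\{0,1\}$. I would do this by discretizing $\rho$ into finitely many cells, applying finite Border, and passing to the limit with a weak-$*$ compactness argument for kernels. I expect this limit step, and the fact that only threshold (upper) sets need checking because $\phi$ is monotone in a common order, to be the main obstacle.

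Finally, the kernel obtained is defined only $\rho$-almost everywhere with capacity satisfied almost surely. I would then apply the completion device of Lemma~\ref{lem:a-completion}: set recommendations to $\mathbf 0$ where capacity fails, and use the single-agent lottery $z_*\delta_{\mathbf e_{i_*}}$ at exceptional reports. The resulting kernel has interims $(x_Q,y_Q)$ at every report and satisfies \eqref{eq:pointwiserecommendationcapacity}. Symmetrization as in Lemma~\ref{lem:a-symmetry} keeps the kernel symmetric.
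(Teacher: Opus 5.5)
Your (i)$\Rightarrow$(ii) and (ii)$\Rightarrow$(iii) match the paper's arguments. For the first, any implementing kernel with common interims suffices, so symmetry is not needed. For (iii)$\Rightarrow$(i) you take a genuinely different route. You correctly identify the dominating object: the top-$k$ interim as a function of a uniform rank, $Z=\kappa(\omega)$ with $\kappa(\omega)=\Pr\{\Bin(n-1,1-\omega)\le k-1\}$, whose stop-loss transform is $L$ because $\int_{1-\eta}^1\kappa\,d\omega=H(\eta)/n$. You then propose to pass through a finite multi-unit Border theorem, a discretization into level cells of $Z_Q$, a weak-$*$ compactness limit of kernels, and a null-set completion.

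This can be carried out. Kernels with values in $\Delta(\{\mathbf m:\sum_i m_i\le k\})$ form a weak-$*$ compact subset of $L^\infty(\rho^n)$, and interims are continuous there. Cell averages lie uniformly within the mesh of $Z_Q$. The finite step also needs a reduction from the asymmetric Border conditions to your symmetric ones, for example via Hoeffding's inequality and concavity of $H$. But this limiting argument is exactly the part you leave open.

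The paper bypasses it. Condition (iii) says $Z_Q\le_{\mathrm{icx}}Z$, so Strassen's theorem gives a conditional law $\Gamma(\cdot\mid z)$ of $Z$ given $Z_Q=z$ with mean $\bar z(z)\ge z$. The kernel works as follows:
\begin{itemize}
\item each agent independently draws a score from $\Gamma(\cdot\mid Z_Q(r_i,\theta_i))$;
\item the $k$ highest scores are selected;
\item a selected agent is retained with probability $z_i/\bar z(z_i)$.
\end{itemize}
Truthful opponents' scores have law $Z$, so their $\kappa^{-1}$-transforms are i.i.d.\ uniform. An own score $z'$ is therefore selected with probability exactly $z'$. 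This yields an explicit Borel kernel with pointwise capacity and exact interims at every report, with no approximation or completion step. Your route links the lemma to classical reduced-form theory, but at the cost of that limiting argument.

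Three small corrections:
\begin{itemize}
\item The stop-loss inequalities express increasing \emph{convex} order (weak submajorization), not increasing-concave order.
\item The dominating variable is $Z$ itself, with $\E Z=k/n=L(0)$, not $n$ times it.
\item Concavity of $H$ does not follow from a convex-order comparison of binomials, since their means differ. It follows because $H'(\eta)=n\Pr\{\Bin(n-1,\eta)\le k-1\}$ is decreasing in $\eta$; the same fact makes $\kappa$ increasing.
\end{itemize}
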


\begin{proof}
Write $x=x_Q$, $y=y_Q$. For $(p,\theta)\sim\rho$ and $\omega\sim\operatorname{Unif}[0,1]$, define
\[
 Z_Q(p,\theta)=\theta x(p)+(1-\theta)y(p),\qquad
 Z=\kappa(\omega),\qquad
 \kappa(\omega)=\Pr\{\Bin(n-1,1-\omega)\le k-1\}.
\]
Thus $L_Q(\ell)=\E_\rho(Z_Q-\ell)_+$. The variable $Z$ is the interim selection probability under the rule selecting the $k$ highest independent uniform scores. For $1\le k<n$, $\kappa$ is a continuous strictly increasing bijection of $[0,1]$, since
\[
 \kappa'(\omega)=(n-1)\binom{n-2}{k-1}(1-\omega)^{k-1}\omega^{n-k-1}>0
 \quad(0<\omega<1).
\]
Using $H'(\eta)=n\Pr\{\Bin(n-1,\eta)\le k-1\}$ and $H(0)=0$ gives
\begin{equation}
 \int_{1-\eta}^1\kappa(\omega)\,d\omega=\frac{H(\eta)}n
 \quad(0\le\eta\le1).
 \label{appimpl:rankidentity}
\end{equation}
Write $\mathbf1_A$ for the indicator of an event $A$. For any implementing kernel and Borel set $\mathcal B\subset(0,1)\times\{0,1\}$, capacity implies
\[
 \sum_i m_i\mathbf1_{\{(p_i,\theta_i)\in\mathcal B\}}
 \le\min\left\{k,\sum_i\mathbf1_{\{(p_i,\theta_i)\in\mathcal B\}}\right\}.
\]
Expectations give $n\int_{\mathcal B}Z_Q\,d\rho\le H(\rho(\mathcal B))$; the sets \eqref{eq:capacity-upper-sets} yield (ii).

Conversely, suppose (ii). By monotonicity, $\mathcal B_\ell=\{Z_Q>\ell\}$ is a union of two upper intervals, up to null endpoints. Applying (ii) gives
\[
 L_Q(\ell)=\int_{\mathcal B_\ell}Z_Q\,d\rho-\ell\rho(\mathcal B_\ell)
 \le\frac{H(\rho(\mathcal B_\ell))}n-\ell\rho(\mathcal B_\ell)\le L(\ell).
\]
By \eqref{appimpl:rankidentity} and monotonicity of $\kappa$,
\[
 L(\ell)=\sup_{0\le\eta\le1}\int_{1-\eta}^1[\kappa(\omega)-\ell]d\omega
 =\E(Z-\ell)_+.
\]
The inequalities comparing these expected positive parts are called \emph{stop-loss inequalities}. Including $\ell=0$, they characterize \emph{increasing convex order}: $\E\varphi(Z_Q)\le\E\varphi(Z)$ for every increasing convex function $\varphi$. For bounded nonnegative variables, the equivalence follows by approximating increasing convex functions with constants plus nonnegative linear terms and nonnegative combinations of $(z-\ell)_+$.

Under (iii), the increasing-convex-order form of Strassen's theorem \citep{Strassen1965,LeskelaVihola2017} therefore couples $Z_Q,Z$ with $\E[Z\mid Z_Q]\ge Z_Q$. Let $\Gamma(dz'\mid z)$ be a Borel conditional law of $Z$ given $Z_Q=z$. On the Borel set, null under the law of $Z_Q$, where $\int z'\Gamma(dz'\mid z)<z$, replace it by $\delta_z$. This preserves the second marginal and ensures
\begin{equation}
 \bar z(z):=\int z'\Gamma(dz'\mid z)\ge z\quad(z\in[0,1]).
 \label{appimpl:submartingale}
\end{equation}
At report--state profile $(\mathbf r,\boldsymbol\theta)$, put $z_i=Z_Q(r_i,\theta_i)$ and draw independent scores $\widetilde z_i\sim\Gamma(\cdot\mid z_i)$. Select the $k$ largest, breaking ties with independent uniform priorities and then indices. Retain each selected agent independently with probability $z_i/\bar z(z_i)$, or zero if $\bar z(z_i)=0$. This defines a Borel kernel with pointwise capacity.

For any fixed own report and state, truthful opponents' scores have the law of $Z$, so their transformed scores $\kappa^{-1}(\widetilde z_j)$ are independent uniforms. An own score $z'$ is consequently selected with probability
\[
 \Pr\{\Bin(n-1,1-\kappa^{-1}(z'))\le k-1\}=z',
\]
including at zero and one; ties with opponents have probability zero. Averaging gives pre-retention probability $\bar z(z)$ and final probability $z$. If $\bar z(z)=0$, \eqref{appimpl:submartingale} gives $z=0$ as required. The kernel thus implements the interims at every report; Lemma~\ref{lem:a-ic} supplies incentives, proving (i).
\end{proof}

\begin{proof}[Proof of Theorem~\ref{thm:implementation}]
Lemma~\ref{lem:a-ic} proves part (i), and Lemma~\ref{lem:a-capacity} gives the capacity characterization in part (ii). Averaging an implementing kernel over permutations as in Lemma~\ref{lem:a-symmetry} preserves its common interims and capacity while making the kernel symmetric. Conversely, any symmetric feasible implementation satisfies the same subset capacity inequalities.
\end{proof}

\Needspace{12\baselineskip}
\subsection{The role of extreme beliefs}

\begin{lemma}[The zero-unsuitable-use boundary]
\label{lem:a-support}
Without Assumption~\ref{ass:belief-support}, and with all other assumptions maintained, a feasible mechanism attains $R=0$ if and only if \eqref{eq:zero-unsuitable} holds. When it holds, the direct-assignment benchmark is implementable without reports.
\end{lemma}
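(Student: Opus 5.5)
Sufficiency is the easy direction and I would prove it first. Write $\bar p=\sup\operatorname{supp}F$. Consider the report-free rule that selects suitable agents uniformly up to capacity and recommends nonuse to every unsuitable agent. Under this rule each agent's interims are constant: $x\equiv g$ and $y\equiv0$. Capacity holds pointwise, and the rule attains $(S^{\mathrm{FB}},0)$.

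The only nontrivial incentive is obedience to nonuse, $u_+p(1-g)\le u_-(1-p)$. Its left side minus its right side is increasing in $p$, so it holds on the whole support exactly when it holds at $\bar p$. Rearranging at $\bar p$ gives \eqref{eq:zero-unsuitable}. The remaining checks are routine:
\begin{itemize}
\item obedience to use holds because a use recommendation reveals $\theta=1$;
\item reports are irrelevant, so there are no reporting deviations;
\item reversal is dominated by one of the constant actions, as in Lemma~\ref{lem:a-ic}.
\end{itemize}

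For necessity, take a feasible mechanism with $R=0$. By Lemma~\ref{lem:a-revelation}, which does not need full support, pass to a direct truthful-obedient kernel with the same outcome. I would not invoke Lemma~\ref{lem:a-completion}, since its density step uses full support. Instead I would reuse its first, support-free part: there is a common full-$F$-measure set $\mathcal T$ on which \eqref{eq:jointIC} holds for every report.

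Because $R=0$, we have $y(p)=0$ for $F$-a.e. truthful $p$, and I shrink $\mathcal T$ so that $y=0$ on it. For $p,r\in\mathcal T$, truthfulness in both directions gives $u_+p\,x(p)\ge u_+p\,x(r)$ and the reverse, so $x$ is constant on $\mathcal T$; call the constant $c$. Pointwise capacity then gives $S\le S^{\mathrm{FB}}$, while $S=n\mu c$. Hence $c\le g$.

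Obedience to nonuse on $\mathcal T$ gives $u_+p(1-c)\le u_-(1-p)$, and therefore $u_+p(1-g)\le u_-(1-p)$ for $F$-a.e. $p$. Every point of the support is a limit of points of $\mathcal T$, because $\mathcal T$ has full measure, and the inequality is continuous in $p$. So it extends to $\bar p$, which yields \eqref{eq:zero-unsuitable}.

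The main obstacle is the measure-theoretic care in the necessity direction without full support. Truthfulness can only be imposed at types in $\mathcal T$, yet every report must remain available as a deviation. The argument works only because we compare truthful reports of two typical types in $\mathcal T$. The other delicate point is the bound $c\le g$. It needs $n\mu c=S\le S^{\mathrm{FB}}$, where the equality uses the constancy of $x$ and $S\le S^{\mathrm{FB}}$ is \eqref{eq:first-best-bound}.
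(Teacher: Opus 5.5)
Your argument is correct and follows the paper's outline. The steps are revelation without full support, $y_i=0$ almost everywhere, constancy of $x_i$, comparison with $g$ through $S\le S^{\mathrm{FB}}$, obedience to nonuse, and passage to $\sup\operatorname{supp}F$; your sufficiency half is the paper's. The one genuinely different step is how you obtain constancy. You invoke the support-free first part of the proof of Lemma~\ref{lem:a-completion} to get a common full-measure set $\mathcal T$ on which every reporting constraint holds, and then swap two typical types. The paper instead uses a single deviation: draw a report from $F$ and obey. Because $y_i=0$ $F$-almost everywhere, this gives type $p$ the payoff $u_+pq_i$ with $q_i=\int x_i\,dF$. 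Deviating exactly where $x_i(p)<q_i$ would gain $u_+\int p[q_i-x_i(p)]_+\,dF(p)$, so $x_i\ge q_i$ almost everywhere, and integration gives $x_i=q_i$ almost everywhere. That route needs only ex ante optimality against one family of deviations, with no countable-dense or continuity argument. Yours relies on that heavier machinery, but the swap is immediate once $\mathcal T$ is available.

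One step needs repair. You write $S=n\mu c$ as if all agents shared a single interim, but you neither symmetrized nor restricted attention to symmetric mechanisms. In general each agent has its own constant $c_i$ on $\mathcal T$. Capacity then gives only $\mu\sum_i c_i=S\le n\mu g$, so an individual $c_i$ can exceed $g$ (for example, for an agent given priority), and the inequality $c\le g$ fails agent by agent. The conclusion survives with either of two easy fixes. You can average the per-agent obedience inequalities $u_+p(1-c_i)\le u_-(1-p)$ over $i$, as the paper does. Alternatively, apply obedience for an agent with $c_i=\min_j c_j\le g$, which already yields $u_+p(1-g)\le u_-(1-p)$ on $\mathcal T$.
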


\begin{proof}
Suppose a feasible mechanism has $R=0$ without Assumption~\ref{ass:belief-support}. Lemma~\ref{lem:a-revelation}, which does not use that assumption, gives a direct implementation. Its interims satisfy
$\sum_i\int(1-p)y_i(p)dF(p)=0$, so $y_i=0$ $F$-almost everywhere. Put $q_i=\int_0^1x_i\,dF$. Drawing a report independently from $F$ and obeying then gives type $p$ utility $u_+pq_i$. Making this deviation only if $x_i(p)<q_i$ would gain
$u_+\int p[q_i-x_i(p)]_+dF(p)$. Thus $x_i\ge q_i$ almost everywhere; integration gives equality. Always using the resource then implies
\[
 u_+p(1-q_i)\le u_-(1-p)\quad F\text{-almost everywhere}.
\]
Capacity gives $\mu\sum_iq_i=S\le n\mu g$. Averaging the preceding inequality therefore yields
$u_+p(1-g)\le u_-(1-p)$ almost everywhere, equivalently \eqref{eq:zero-unsuitable}.

Conversely, under \eqref{eq:zero-unsuitable}, recommend use to a uniformly chosen set of $\min\{k,\sum_i\theta_i\}$ suitable agents without reports. The interims are $(g,0)$. A recommendation to use reveals suitability; after a recommendation to forgo use the unnormalized gain from use is $u_+p(1-g)-u_-(1-p)\le0$ almost everywhere. Obedience is therefore an equilibrium attaining $(S,R)=(S^{\mathrm{FB}},0)$ with pointwise capacity.
\end{proof}

\subsection{Compactness}

\begin{lemma}[Compactness of the threshold feasible set]
\label{lem:a-compactness}
For every $u_+,u_->0$, the set of jointly feasible threshold measures,
viewed as probability measures on $[0,1]$, is weakly compact.
The maps $Q\mapsto S(Q)$ and $Q\mapsto R(Q)$ are continuous on this set.
\end{lemma}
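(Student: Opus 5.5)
Since $[0,1]$ is compact, Prokhorov's theorem makes the set of all probability measures on $[0,1]$ weakly compact. It therefore suffices to show that the jointly feasible set is weakly closed, and that $S$ and $R$ are weakly continuous on it. I would first encode feasibility as conditions on a measure $Q$ on $[0,1]$: $Q(\{1\})=0$, the odds bound \eqref{eq:incentive-moment}, and the subset capacity inequalities \eqref{eq:mixedfeasibility}. By Lemma~\ref{lem:a-capacity}, the last can be replaced by the stop-loss inequalities \eqref{eq:stoploss}. Let $Q_m\to Q$ weakly, with each $Q_m$ jointly feasible.

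\emph{Odds bound and no mass at one.} The odds $v$ is continuous, nonnegative and unbounded near one. For each $M$, $\int\min\{v,M\}\,dQ=\lim_m\int\min\{v,M\}\,dQ_m\le u_-/u_+$. Letting $M\to\infty$ gives \eqref{eq:incentive-moment} for $Q$, which in particular forces $Q(\{1\})=0$. This step also gives uniform integrability of $v$ near one: $Q_m((t,1])\le (u_-/u_+)/v(t)$ uniformly in $m$.

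\emph{Capacity.} For the subset inequalities I would use the Tonelli form. The left side of \eqref{eq:mixedfeasibility} equals
\[
 n\int\Bigl[\bigl(T_1(s_1\vee\tau)\bigr)+\tfrac{u_+}{u_-}v(\tau)T_0(s_0\vee\tau)\Bigr]Q(d\tau).
\]
This holds because $x(p)=Q([0,p])$, so $\int_{s_1}^1 p\,x(p)\,dF=\int T_1(\max\{s_1,\tau\})\,Q(d\tau)$, and similarly for $y$. The integrand is continuous in $\tau$ on $[0,1)$ since $F$ is atomless. The bounded $T_1$ term converges by weak convergence. The $vT_0$ term is continuous on $[0,1)$ and tends to zero at one, since $v(\tau)T_0(\tau)\le 1-F(\tau)$; extended by zero at one, it is bounded and continuous on $[0,1]$, so it also converges. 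Hence each inequality passes to the limit and $Q$ is feasible.

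\emph{Continuity of $S$ and $R$.} The same argument gives continuity, because by \eqref{eq:welfareQ}, $S=n\int T_1\,dQ$ and $R=(nu_+/u_-)\int vT_0\,dQ$, with integrands bounded and continuous on $[0,1]$ after extension by zero at one.

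\emph{Main obstacle.} The main delicacy is the behaviour of $vT_0$ (and $v(\tau)T_0(s_0\vee\tau)$) near $\tau=1$. When $s_0$ is close to one, $T_0(s_0\vee\tau)\le T_0(\tau)$ still yields the bound $1-F(\tau)\to0$, so continuity at one holds uniformly. The other point to check is that a limit point carrying mass at one is excluded, which the truncation argument above handles.
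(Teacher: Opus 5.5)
Your proposal is correct and follows essentially the same route as the paper. Your truncation by $\min\{v,M\}$ is the lower-semicontinuity (Portmanteau) step the paper invokes after setting $v(1)=+\infty$. The capacity and continuity steps coincide with the paper's: the Tonelli form of the subset inequalities, and the bound $v(\tau)T_0(\max\{s_0,\tau\})\le 1-F(\tau)$ to extend the integrands continuously by zero at one (the detour through the stop-loss inequalities and the unused ``uniform integrability'' remark, which is really a tightness bound, can be dropped).
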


\begin{proof}
Extend $v$ lower-semicontinuously by $v(1)=+\infty$. Probability measures
on $[0,1]$ with $u_+\int v\,dQ\le u_-$ form a weakly compact set by
Portmanteau; every such measure has zero mass at one. Substituting the
threshold representation into the capacity inequalities gives
\begin{equation}
 n\int\!\left[T_1(\max\{s_1,\tau\})
 +\frac{u_+}{u_-}v(\tau)T_0(\max\{s_0,\tau\})\right]Q(d\tau)
 \le H(T_1(s_1)+T_0(s_0)).
 \label{appimpl:compactconstraints}
\end{equation}
For each fixed cutoff pair, the integrand extends continuously by zero
at one: $F$ is atomless and
$v(\tau)T_0(\max\{s_0,\tau\})\le1-F(\tau)\to0$.
Each capacity inequality therefore defines a weakly closed set, so their
intersection with the set satisfying the odds bound is compact. The same argument
shows that $T_1$ and $vT_0$, extended by zero at one, are bounded and
continuous. Formula~\eqref{eq:welfareQ} proves continuity of $S$ and $R$.
\end{proof}

\section{Proof of the capacity cost of screening}
\label{app:capacity-cost}

\subsection{Proof of Theorem~\ref{thm:distortion}}

Throughout this appendix, assume that a feasible private-belief mechanism exists. Recall that a jointly feasible threshold measure satisfies \eqref{eq:incentive-moment} and \eqref{eq:mixedfeasibility}; by Lemma~\ref{lem:a-capacity}, the latter is equivalent to \eqref{eq:stoploss}. The first lemma proves attainment and a positive minimum of $R$. The second bounds unsuitable exposure under preservation. The last two construct an improving direction and show that a small step is feasible whenever $y(1-)<1$; together these results establish the strict sacrifice of beneficial use under \eqref{eq:advisor-marginals}.

\begin{lemma}
\label{lem:b-attainment}
The set of attainable pairs $(S,R)$ is compact, and its minimum value of $R$ is strictly positive. The advisor's objective attains a maximum, both without the restriction $S=S^{\mathrm{FB}}$ and with it whenever that restriction is feasible. Every jointly feasible threshold measure $Q$ satisfies
\begin{equation}
 Q((t,1))>0\quad(t\in[0,1)),\qquad
 x_Q(p)<1\quad(p\in(0,1)).
 \label{appimpl:nofiniteschedule}
\end{equation}
\end{lemma}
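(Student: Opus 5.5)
The plan is to work entirely with threshold measures. By Lemmas~\ref{lem:a-revelation}--\ref{lem:a-symmetry} and Theorem~\ref{thm:implementation}, every feasible outcome has a symmetric canonical implementation with the same $(S,R)$. That implementation is represented by a jointly feasible $Q$, with $S=S(Q)$ and $R=R(Q)$ given by \eqref{eq:welfareQ}. Conversely, every jointly feasible $Q$ is implementable. The attainable set of pairs $(S,R)$ is therefore the image of the jointly feasible set under $Q\mapsto(S(Q),R(Q))$. By Lemma~\ref{lem:a-compactness} this set is weakly compact and the map is continuous, so the image is compact and nonempty by hypothesis. The constraint $S=S^{\mathrm{FB}}$ cuts out a closed subset of this image. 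Since $\Phi$ is continuous, it attains a maximum on the image and on that subset whenever the subset is nonempty. Attainment of $\min R$ follows the same way.

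Next I prove \eqref{appimpl:nofiniteschedule}; positivity of $\min R$ will then follow. Suppose $Q((t,1))=0$ for some $t<1$, so that $x_Q(p)=1$ for all $p\ge t$. Then every agent with belief above $t$ who is suitable receives use with probability one. Apply the subset capacity inequality \eqref{eq:mixedfeasibility} with $s_1=t$ and $s_0=1$. The left side is $nT_1(t)$, and the right side is $H(T_1(t))$. Because $T_1(t)>0$ by full support and $n>k$, we have $H(\eta)<n\eta$ for $\eta\in(0,1]$, since $\Bin(n,\eta)>k$ with positive probability. This is a contradiction. The statement $x_Q(p)<1$ for every $p$ is the same claim restated, because $1-x_Q(p)=Q((p,1))$.

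For strict positivity of the minimum, let $Q^*$ attain it and suppose $R(Q^*)=0$. The formula gives $R=(nu_+/u_-)\int vT_0\,dQ$. Full support makes $v(\tau)T_0(\tau)>0$ for every $\tau\in(0,1)$, so $Q^*$ must be concentrated on $\{0\}$. Then $Q^*((t,1))=0$ for every $t>0$, which contradicts \eqref{appimpl:nofiniteschedule}. Hence $\min R>0$.

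The main obstacle I expect is to check that the reduction really identifies the attainable set exactly. This needs both directions: symmetrization preserves $(S,R)$, and every jointly feasible $Q$ is implemented by a feasible mechanism via Lemma~\ref{lem:a-capacity}. A second point needing care is the extension to probability measures on $[0,1]$ in Lemma~\ref{lem:a-compactness}. Limit measures carry no mass at one because of the odds bound, so they remain genuine threshold measures on $[0,1)$.
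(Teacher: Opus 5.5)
Your proposal is correct and follows essentially the same route as the paper. The attainable set is the continuous image of the weakly compact threshold feasible set; capacity then gives \eqref{appimpl:nofiniteschedule}; and $R=0$ would force $Q=\delta_0$, which contradicts \eqref{appimpl:nofiniteschedule}. The only difference is cosmetic: you derive $Q((t,1))>0$ from the subset capacity inequality at cutoffs $(s_1,s_0)=(t,1)$ together with $H(\eta)<n\eta$, whereas the paper argues directly from the positive-probability event that $k+1$ fixed agents are suitable with beliefs above $t$; your version has the small advantage of using only the defining inequalities of joint feasibility.
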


\begin{proof}
By the direct and symmetry reductions, the attainable set $\mathcal A$ is the image under \eqref{eq:welfareQ} of the nonempty, weakly compact threshold feasible set in Lemma~\ref{lem:a-compactness}. The functions $T_1,vT_0$, extended by zero at one, are continuous. Thus $\mathcal A$ and its intersection with $S=S^{\mathrm{FB}}$ are compact; the latter may be empty. Continuity of $\Phi$ gives attainment, including over asymmetric mechanisms.

If $Q((t_0,1))=0$, suitable agents with beliefs above $t_0$ use the resource almost surely. Any fixed $k+1$ agents meet these conditions with probability $T_1(t_0)^{k+1}>0$, violating capacity. This also gives $x_Q(p)<1$.

If $R=0$, symmetrization gives $\int vT_0\,dQ=0$. The integrand is positive on $(0,1)$ and $Q(\{1\})=0$, so $Q=\delta_0$, contradicting \eqref{appimpl:nofiniteschedule}. Compactness now gives $\min_{\mathcal A}R>0$.
\end{proof}

\begin{lemma}
\label{lem:b-displacement}
For every feasible mechanism--equilibrium pair, let $(x_i,y_i)_{i=1}^n$ be the interims of an outcome-equivalent canonical direct recommendation implementation. With $b$ defined in \eqref{eq:bceiling},
\[
 S^{\mathrm{FB}}-S
 \ge\sum_{i=1}^n\int_0^1(1-p)[y_i(p)-b]_+\,dF(p).
\]
Consequently, $S=S^{\mathrm{FB}}$ implies
\[
 y_i(p)\le b\quad(p\in(0,1)),\qquad y_i(1-)\le b
 \quad(i=1,\ldots,n).
\]
\end{lemma}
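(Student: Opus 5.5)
The plan is to compare, pathwise, the realized number of suitable users with the first-best bound \eqref{eq:first-best-bound}, and to charge each unsuitable user to the suitable agents it displaces. Work with the canonical direct implementation from Lemmas~\ref{lem:a-revelation} and~\ref{lem:a-completion}. Its kernel satisfies pointwise capacity \eqref{eq:pointwiserecommendationcapacity}, and truthful obedience reproduces the outcome. Let $N_1=\sum_i\theta_i$ and $D=\min\{k,N_1\}-\sum_i\theta_im_i\ge0$. Then $S^{\mathrm{FB}}-S=\E D$.

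The key pathwise inequality is
\[
D\ge\sum_i(1-\theta_i)m_i\,\mathbf 1\{N_{1,-i}\ge k\},
\]
where $N_{1,-i}=\sum_{j\ne i}\theta_j$ is the number of other agents who are suitable. To prove it, suppose some unsuitable agent $i$ with $m_i=1$ has $N_{1,-i}\ge k$. Then $\min\{k,N_1\}=k$. Capacity gives $\sum_j\theta_jm_j\le k-\sum_j(1-\theta_j)m_j$. Hence $D\ge\sum_j(1-\theta_j)m_j$, which is at least the right side. If no such agent exists, the right side is zero and $D\ge0$ holds trivially.

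Next, take expectations agent by agent. For agent $i$ in state $\theta_i=0$ at report $p$, write
\[
\E[m_i\mathbf 1\{N_{1,-i}\ge k\}\mid p_i=p,\theta_i=0]\ge y_i(p)-\Pr\{N_{1,-i}\le k-1\}=y_i(p)-b.
\]
This uses the fact that opponents are truthful and independent of $(p_i,\theta_i)$, so $N_{1,-i}\sim\Bin(n-1,\mu)$, together with $\Pr(A\cap B)\ge\Pr(A)-\Pr(B^c)$. Since the left side is nonnegative, it is also at least $[y_i(p)-b]_+$. Integrating against $(1-p)\,dF(p)$ and summing over $i$ gives the displayed inequality. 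Because these conditional statements hold at truthful reports, the integral is over $F$, which is all the statement requires.

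For the consequence, $S=S^{\mathrm{FB}}$ forces $\int(1-p)[y_i(p)-b]_+dF=0$. This gives $y_i\le b$ for $F$-almost every $p$. To extend this to every $p$, use full support (Assumption~\ref{ass:belief-support}): the exceptional set has empty interior, and $y_i$ is right-continuous and nondecreasing by Lemma~\ref{lem:a-completion}. So for any $p$, take a sequence of good points decreasing to $p$ and conclude $y_i(p)\le b$. Monotonicity then gives $y_i(1-)\le b$. Finally, $b<1$ because $\Pr\{\Bin(n-1,\mu)\ge k\}>0$ when $k\le n-1$ and $\mu>0$.

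The main obstacle is the pathwise displacement count: every unsuitable user must reduce beneficial use by one unit whenever at least $k$ others are suitable. The capacity identity above handles this. The only other delicate point is the passage from almost-every $p$ to every $p$, which relies on the canonical regularity.
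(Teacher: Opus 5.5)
Your proof is correct and follows the paper's own argument: the same pathwise displacement inequality charging each unsuitable user to a suitable agent when $N_{1,-i}\ge k$, the bound $\Pr(A\cap B)\ge\Pr(A)-\Pr(B^c)$ using that $N_{1,-i}\sim\Bin(n-1,\mu)$ is independent of agent $i$'s own pair, and full support to pass from almost every $p$ to every $p$. The only cosmetic difference is that you invoke right-continuity in the last step, whereas the paper uses monotonicity alone, which suffices because any single good point above $p$ already gives $y_i(p)\le b$.
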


\begin{proof}
Use an outcome-equivalent canonical direct implementation. Capacity gives
\[
 \min\left\{k,\sum_i\theta_i\right\}-\sum_i\theta_i a_i
 \ge\sum_i(1-\theta_i)a_i\mathbf1_{\{\sum_{j\ne i}\theta_j\ge k\}}.
\]
The right side is zero if fewer than $k$ agents are suitable; otherwise it counts unsuitable users and is bounded by $k-\sum_i\theta_i a_i$. Independence gives
$\Pr\{\sum_{j\ne i}\theta_j<k\mid p_i=p,\theta_i=0\}=b$ almost everywhere. Hence
\[
 \Pr\{a_i=1,\textstyle\sum_{j\ne i}\theta_j\ge k\mid p_i=p,\theta_i=0\}
 \ge[y_i(p)-b]_+.
\]
Taking expectations proves the bound. If $S=S^{\mathrm{FB}}$, its nonnegative integrands vanish almost everywhere. Monotonicity and full support then give $y_i(p)\le b$ at every interior $p$ and at the upper limit.
\end{proof}

Define the threshold contribution to welfare by
\begin{equation}
 w(\tau)=nu_+[T_1(\tau)-v(\tau)T_0(\tau)]
     =nu_+\int_{\tau}^1\frac{p-\tau}{1-\tau}\,dF(p),\qquad 0\le \tau<1.
 \label{eq:b-welfare-contribution}
\end{equation}
Then $W(Q)=\int w\,dQ$ and $R(Q)=(nu_+/u_-)\int vT_0\,dQ$ by \eqref{eq:welfareQ}.
The function $w$ extends continuously to $[0,1]$ with
$w(1)=0$ and $w(0)=nu_+\mu$.

\begin{lemma}
\label{lem:b-perturbation}
For each jointly feasible $Q$, write $x=x_Q$ and $y=y_Q$. There is $x_{\mathrm c}\in(0,g)$ with $x_{\mathrm c}w(0)>W(Q)$ such that, for every sufficiently large $t<1$, the signed measure
\begin{equation}
 \dot Q_t=x_{\mathrm c}\delta_0+[x(t)-x_{\mathrm c}]\delta_t-Q|_{[0,t]}
 \label{appstruct:truncated-direction}
\end{equation}
has total mass zero and satisfies
\begin{equation}
 \dot J_t:=\int v\,d\dot Q_t>0,\qquad
 \int w\,d\dot Q_t>0,\qquad
 \int vT_0\,d\dot Q_t<0.
 \label{eq:b-direction-signs}
\end{equation}
For $0<\gamma\le1$, $Q+\gamma\dot Q_t$ is a probability measure. Its cumulative functions $x_\gamma(p)=(Q+\gamma\dot Q_t)([0,p])$ and $y_\gamma(p)=(u_+/u_-)\int_{[0,p]}v\,d(Q+\gamma\dot Q_t)$ satisfy
\begin{equation}
 \begin{array}{lll}
 p<t:&x_\gamma(p)=(1-\gamma)x(p)+\gamma x_{\mathrm c},
       &y_\gamma(p)=(1-\gamma)y(p),\\[2pt]
 p\ge t:&x_\gamma(p)=x(p),
       &y_\gamma(p)=y(p)+\gamma\dfrac{u_+}{u_-}\dot J_t.
 \end{array}
 \label{appstruct:truncated-rules}
\end{equation}
\end{lemma}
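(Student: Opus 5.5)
The plan is to check the claims in the order stated: first choose $x_{\mathrm c}$, then verify the algebraic identities for $\dot Q_t$ and $Q+\gamma\dot Q_t$, and finally obtain the three sign conditions in \eqref{eq:b-direction-signs} by letting $t\uparrow1$. The inputs are Lemma~\ref{lem:b-attainment} (positivity of $R$, and $x_Q<1$ with $x_Q(1-)=1$ by Lemma~\ref{lem:a-ic}), the odds bound \eqref{eq:incentive-moment}, and the continuity facts from Lemma~\ref{lem:a-compactness}: $w$ and $vT_0$ extend continuously by $0$ at one, and $vT_0\le 1-F$.

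\emph{Choice of $x_{\mathrm c}$.} By \eqref{eq:welfareQ} and \eqref{eq:first-best-bound}, $W(Q)=u_+S(Q)-u_-R(Q)\le u_+S^{\mathrm{FB}}-u_-R(Q)$. Lemma~\ref{lem:b-attainment} gives $R(Q)>0$, so $W(Q)<u_+S^{\mathrm{FB}}=u_+ng\mu=g\,w(0)$. Hence the interval $(\max\{0,W(Q)/w(0)\},g)$ is nonempty, and I will take $x_{\mathrm c}$ in it.

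\emph{Algebraic identities.} The total mass of $\dot Q_t$ is $x_{\mathrm c}+x(t)-x_{\mathrm c}-Q([0,t])=0$. Next,
\[
Q+\gamma\dot Q_t=(1-\gamma)Q|_{[0,t]}+\gamma x_{\mathrm c}\delta_0+\gamma[x(t)-x_{\mathrm c}]\delta_t+Q|_{(t,1)}.
\]
Since $x(t)\to1>x_{\mathrm c}$, this measure is nonnegative for all large $t$, so it is a probability measure. Integrating its cumulative function gives \eqref{appstruct:truncated-rules}:
\begin{itemize}
\item For $p<t$, only the first two terms contribute, giving $x_\gamma(p)=(1-\gamma)x(p)+\gamma x_{\mathrm c}$. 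The exposure is $y_\gamma(p)=(1-\gamma)y(p)$, because the atom at $0$ carries $v(0)=0$.
\item For $p\ge t$, all mass up to $t$ sums to $(1-\gamma)x(t)+\gamma x(t)=x(t)$, so $x_\gamma(p)=x(p)$. The exposure gains exactly $\gamma(u_+/u_-)\bigl[(x(t)-x_{\mathrm c})v(t)-\int_{[0,t]}v\,dQ\bigr]=\gamma(u_+/u_-)\dot J_t$.
\end{itemize}

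\emph{Signs as $t\uparrow1$.} This is the only substantive step, and it reduces to limit computations.
\begin{itemize}
\item $\dot J_t=(x(t)-x_{\mathrm c})v(t)-\int_{[0,t]}v\,dQ$. The subtracted term is at most $u_-/u_+$ by \eqref{eq:incentive-moment}. The first term tends to $+\infty$, since $x(t)-x_{\mathrm c}\to1-x_{\mathrm c}>0$ and $v(t)\to\infty$.
\item $\int w\,d\dot Q_t=x_{\mathrm c}w(0)+(x(t)-x_{\mathrm c})w(t)-\int_{[0,t]}w\,dQ$. Because $w$ is bounded, continuous on $[0,1]$, and has $w(1)=0$, dominated convergence gives the limit $x_{\mathrm c}w(0)-\int_{[0,1)}w\,dQ=x_{\mathrm c}w(0)-W(Q)>0$.
\item $\int vT_0\,d\dot Q_t=(x(t)-x_{\mathrm c})v(t)T_0(t)-\int_{[0,t]}vT_0\,dQ$. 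Here $v(t)T_0(t)\le1-F(t)\to0$, so the limit is $-\int_{[0,1)}vT_0\,dQ=-(u_-/(nu_+))R(Q)<0$.
\end{itemize}
All three strict inequalities therefore hold for every sufficiently large $t<1$, together with $x(t)>x_{\mathrm c}$.

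The main obstacle I anticipate is only justifying these limits cleanly. In particular, the atom term $(x(t)-x_{\mathrm c})v(t)$ diverges in $\dot J_t$ but is tamed by the factor $T_0(t)$ in the harm integral. The harm integral uses the bound $vT_0\le 1-F$ and atomlessness of $F$.
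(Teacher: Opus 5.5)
Your proof is correct and follows the paper's argument essentially step for step. You choose $x_{\mathrm c}$ the same way, via $W(Q)<u_+S^{\mathrm{FB}}=g\,w(0)$ from $R(Q)>0$, use the same nonnegative decomposition of $Q+\gamma\dot Q_t$, and obtain the three sign conditions from the same limits as $t\uparrow1$, using $x(1-)=1$, the odds bound, $w(1)=0$, and $vT_0\le 1-F$. The paper's proof also records the capacity slack of $(x_{\mathrm c},0)$ and the tail bound $E_t\to0$, but those are needed only in Lemma~\ref{lem:b-endpoint} and are not part of this statement.
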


\begin{proof}
Integration by parts in \eqref{eq:b-welfare-contribution} gives
\begin{equation}
 w(s)-w(t)=nu_+\int_s^t\frac{T_0(\tau)}{(1-\tau)^2}\,d\tau>0
 \qquad(0\le s<t<1).
 \label{eq:b-w-decrease}
\end{equation}
Thus lowering thresholds strictly raises welfare.

Recall $g=S^{\mathrm{FB}}/(n\mu)\in(0,1)$ from \eqref{eq:first-best}.
Lemma~\ref{lem:b-attainment} gives
$W(Q)<u_+S^{\mathrm{FB}}=gw(0)$, so choose
$x_{\mathrm c}\in(0,g)$ such that $x_{\mathrm c}w(0)>W(Q)$.
The pair $(g,0)$, used here only as a capacity benchmark, has stop-loss transform bounded by $L$: for $\ell<g$,
\[
 L(\ell)\ge \frac{H(\mu)}n-\mu\ell=\mu(g-\ell).
\]
For $\ell\ge g$ its transform is zero. Thus $(x_{\mathrm c},0)$ has strict capacity slack:
\begin{equation}
 L(\ell)-\mu(x_{\mathrm c}-\ell)_+>0
 \qquad(0\le\ell<1).
 \label{appstruct:benchmark-slack}
\end{equation}
The gap is at least $\mu(g-x_{\mathrm c})$ below $x_{\mathrm c}$ and equals $L(\ell)>0$ above it, because $H(\eta)/n\sim\eta$ at zero. This comparison concerns capacity only; the truncated threshold perturbation retains the incentive requirement $x(1-)=1$.

For $t<1$ with $x(t)>x_{\mathrm c}$, the direction in \eqref{appstruct:truncated-direction} has odds-moment change
\[
 \dot J_t=v(t)[x(t)-x_{\mathrm c}]-\int_{[0,t]}v\,dQ.
\]
The measure $\dot Q_t$ has total mass zero. Since $x(t)\to1$,
$\int v\,dQ<\infty$, and $w(t)\to0$,
\begin{equation}
 \begin{aligned}
 \dot J_t&\longrightarrow\infty,\\
 \int w\,d\dot Q_t
 &=x_{\mathrm c}w(0)+[x(t)-x_{\mathrm c}]w(t)
     -\int_{[0,t]}w\,dQ
 \longrightarrow x_{\mathrm c}w(0)-W(Q)>0.
 \end{aligned}
 \label{appstruct:direction-limits}
\end{equation}
The same direction reduces harmful use for all sufficiently large $t$:
\begin{equation}
 \begin{aligned}
 \int vT_0\,d\dot Q_t
 &=[x(t)-x_{\mathrm c}]v(t)T_0(t)-\int_{[0,t]}vT_0\,dQ\\
 &\longrightarrow-\int vT_0\,dQ
 =-\frac{u_-}{nu_+}R(Q)<0.
 \end{aligned}
 \label{appstruct:unsuitable-direction}
\end{equation}
Here $v(t)T_0(t)\to0$ because $v(t)T_0(t)\le1-F(t)$, and $R(Q)>0$ by Lemma~\ref{lem:b-attainment}.
For $0<\gamma\le1$, nonnegativity follows from the decomposition
\[
 Q+\gamma \dot Q_t
 =(1-\gamma)Q|_{[0,t]}+Q|_{(t,1)}
  +\gamma x_{\mathrm c}\delta_0
  +\gamma[x(t)-x_{\mathrm c}]\delta_t.
\]
This is a probability measure, and integrating over $[0,p]$ gives \eqref{appstruct:truncated-rules}.
Since $T_1,T_0,vT_0\to0$ at one, the following bound controls the later capacity checks:
\begin{equation}
 E_t:=2T_1(t)+\left[1+2\frac{u_+}{u_-}v(t)\right]T_0(t)
 \longrightarrow0\qquad(t\uparrow1).\qedhere
 \label{appstruct:contact-tail-error}
\end{equation}

\end{proof}

\begin{lemma}
\label{lem:b-endpoint}
If a jointly feasible threshold measure $Q$ satisfies $y_Q(1-)<1$, there is a jointly feasible threshold measure $Q_\gamma$ with
\[
 W(Q_\gamma)>W(Q),\qquad R(Q_\gamma)<R(Q).
\]
\end{lemma}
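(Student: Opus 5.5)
We take $Q_\gamma=Q+\gamma\dot Q_t$ from Lemma~\ref{lem:b-perturbation}, with $t<1$ large and $\gamma>0$ small. The two objective inequalities follow from \eqref{eq:b-direction-signs} by linearity: $W(Q_\gamma)-W(Q)=\gamma\int w\,d\dot Q_t>0$ and $R(Q_\gamma)-R(Q)=\gamma(nu_+/u_-)\int vT_0\,d\dot Q_t<0$. Lemma~\ref{lem:b-perturbation} already makes $Q_\gamma$ a probability measure. What remains is joint feasibility: the odds bound \eqref{eq:incentive-moment} and the capacity inequalities, which we check in the stop-loss form \eqref{eq:stoploss} of Lemma~\ref{lem:a-capacity}.

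\textbf{Odds bound.} The odds moment of $Q_\gamma$ is $\int v\,dQ+\gamma\dot J_t$, so the bound reduces to $y_\gamma(1-)=y(1-)+\gamma(u_+/u_-)\dot J_t\le1$. Since $y(1-)<1$, we fix $t$ first and then take $\gamma\le(u_-/u_+)(1-y(1-))/\dot J_t$. The hypothesis $y_Q(1-)<1$ is used exactly here, because $\dot J_t>0$ forces exposure at high reports to rise.

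\textbf{Capacity.} Fix $t$ large, to be chosen below. Using \eqref{appstruct:truncated-rules}, split $L_{Q_\gamma}(\ell)$ into the contributions from $p<t$ and $p\ge t$.
\begin{itemize}[leftmargin=*]
\item For $p<t$, the pair $(x_\gamma,y_\gamma)$ is the convex combination $(1-\gamma)(x,y)+\gamma(x_{\mathrm c},0)$. By convexity of $z\mapsto(z-\ell)_+$, this part is at most $(1-\gamma)$ times the $p<t$ part of $L_Q(\ell)$ plus $\gamma$ times the $p<t$ part of the transform of the benchmark $(x_{\mathrm c},0)$.
\item For $p\ge t$, the suitable coordinate is unchanged. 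The unsuitable coordinate rises by at most $\gamma(u_+/u_-)\dot J_t$, and $(\cdot-\ell)_+$ is $1$-Lipschitz, so the increase is at most $\gamma(u_+/u_-)\dot J_t\,T_0(t)$.
\end{itemize}
We then write $L_Q(\ell)=(1-\gamma)L_Q(\ell)+\gamma L_Q(\ell)$ and bound the $p\ge t$ tail of $L_Q$ by $T_1(t)+T_0(t)$. This gives
\[
L_{Q_\gamma}(\ell)\le(1-\gamma)L_Q(\ell)+\gamma\Bigl[\mu(x_{\mathrm c}-\ell)_++E'_t\Bigr],
\]
with $E'_t$ of order $T_1(t)+T_0(t)+(u_+/u_-)\bigl(v(t)[x(t)-x_{\mathrm c}]\bigr)T_0(t)$, a quantity controlled by $E_t$ in \eqref{appstruct:contact-tail-error}. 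The term $\int_{[0,t]}v\,dQ$ inside $\dot J_t$ is bounded. Since $(1-\gamma)L_Q\le(1-\gamma)L$, it suffices that $\mu(x_{\mathrm c}-\ell)_++E'_t\le L(\ell)$ for all $\ell\in[0,1)$. By \eqref{appstruct:benchmark-slack} the benchmark has strict slack there.

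\textbf{Main obstacle.} The slack in \eqref{appstruct:benchmark-slack} is not uniform as $\ell\uparrow1$, since $L(\ell)\to0$. A fixed small error $E'_t$ therefore cannot be absorbed naively near $\ell=1$. We would handle this in two regimes.
\begin{itemize}[leftmargin=*]
\item On $\ell\le1-\epsilon$, the slack is bounded below by a positive constant, by continuity and compactness. Choosing $t$ large makes $E'_t$ smaller than that constant.
\item On $\ell>1-\epsilon$, we avoid the crude bound. Only pairs with $x_\gamma$ or $y_\gamma$ above $1-\epsilon$ contribute. For $p<t$, the combination lies below $\max(x,y)$, so it contributes at most what $Q$ contributes. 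For $p\ge t$, the added exposure is $\gamma(u_+/u_-)\dot J_t$, and we compare directly with the unused slack $L(\ell)-L_Q(\ell)$.
\end{itemize}
If $L_Q$ touches $L$ near $\ell=1$, the second regime needs care. There we would use that $y_\gamma(1-)\le1$ keeps all values in $[0,1]$, that the suitable coordinate is unchanged for $p\ge t$, and that near $\ell=1$ the extra exposure over $p<t$ is removed at rate $\gamma$. If a tail mismatch remains, we would fall back on checking the subset inequalities \eqref{appimpl:compactconstraints} directly for cutoffs $s_1,s_0\ge t$, where the increase is $O(\gamma\dot J_tT_0(\max\{s_0,t\}))$ and the decrease comes from reduced exposure. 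This is the step we expect to require the most work.
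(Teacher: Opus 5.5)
\textbf{There is a genuine gap in the capacity step.} Several parts match the paper: the perturbation $Q+\gamma\dot Q_t$, the linear sign argument for $W$ and $R$, and the bound $L_{Q_\gamma}(\ell)\le(1-\gamma)L_Q(\ell)+\gamma[\mu(x_{\mathrm c}-\ell)_++E'_t]$ with $E'_t\le E_t\to0$. That bound handles every $\ell$ in a compact range $[0,1-\epsilon]$, as in the paper. The regime $\ell$ near one, which you correctly flag as the obstacle, is left open, and your constraint on $\gamma$ is too weak to close it.

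You only require $y_\gamma(1-)=y(1-)+\gamma(u_+/u_-)\dot J_t\le1$. Whenever $y_\gamma(1-)>\max\{x(t),y(1-)\}$, take $\ell$ in between; for example, $\gamma$ at your bound gives $y_\gamma(1-)=1>x(t)$. Reports below $t$ then have $x,y,x_\gamma,y_\gamma<\ell$, so nothing is removed there. Meanwhile the raised exposure above $t$ creates a new term $\int_t^1(1-p)(y_\gamma(p)-\ell)_+\,dF(p)>0$.

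If $L_Q=L$ at such levels, capacity fails. This touching case is typical, not exotic. It occurs whenever suitable high reports are served first in decreasing report order, as in the preserving policy of Appendix~\ref{app:uniform-preserving}, where $y(1-)=3/4$ and $L_Q=L$ for $\ell>x^{\mathrm P}(\tau)$.

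None of your fallbacks closes this case:
\begin{enumerate}[label=(\alph*)]
\item Comparing with the unused slack $L-L_Q$ fails exactly when $L_Q$ touches $L$.
\item Removing exposure below $t$ ``at rate $\gamma$'' contributes nothing at levels above $y(1-)$.
\item In a direct check of groups with $s_1,s_0\ge t$, the perturbation reduces nothing, because exposure falls only at reports below $t$.
\end{enumerate}

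\textbf{The missing idea} is to place the regime boundary so that the high regime contains no unsuitable contributions at all, under $Q$ or $Q_\gamma$. This is the essential use of $y_Q(1-)<1$, beyond the odds bound. The order of choices is:
\begin{enumerate}[label=(\arabic*)]
\item Before choosing $t$, fix $\ell_*=1-\epsilon\in(\max\{x_{\mathrm c},y(1-)\},1)$.
\item Set $\Delta_*=\min_{0\le\ell\le\ell_*}\{L(\ell)-\mu(x_{\mathrm c}-\ell)_+\}>0$, and take $t$ close enough to one that $E_t<\Delta_*$.
\item Only then take $\gamma<(\ell_*-y(1-))/[(u_+/u_-)\dot J_t]$, so that $y_\gamma(1-)<\ell_*$.
\end{enumerate}

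For $\ell\le\ell_*$, your bound gives $L_{Q_\gamma}(\ell)\le L(\ell)-\gamma(\Delta_*-E_t)$. For $\ell>\ell_*$, no slack is needed. All unsuitable positive parts vanish for both measures, and suitable parts are unchanged for $p\ge t$. For $p<t$, $x_\gamma(p)$ lies between $x(p)$ and $x_{\mathrm c}<\ell$, so $(x_\gamma(p)-\ell)_+\le(x(p)-\ell)_+$. Hence $L_{Q_\gamma}(\ell)\le L_Q(\ell)\le L(\ell)$.

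The same choice also repairs your claim that the combination below $t$ ``lies below $\max(x,y)$.'' In fact $x_\gamma>x$ wherever $x<x_{\mathrm c}$, which is harmless only because the boundary $\ell_*$ exceeds $x_{\mathrm c}$.
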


\begin{proof}
Write $x=x_Q$ and $y=y_Q$, take $x_{\mathrm c}$ from Lemma~\ref{lem:b-perturbation}, and fix
$\ell_*\in(\max\{x_{\mathrm c},y(1-)\},1)$.
Continuity and \eqref{appstruct:benchmark-slack} imply
\[
 \Delta_*:=\min_{0\le\ell\le\ell_*}
       \{L(\ell)-\mu(x_{\mathrm c}-\ell)_+\}>0.
\]
Choose $t$ sufficiently close to one that $x(t)>x_{\mathrm c}$,
$\dot J_t>0$, $\int w\,d\dot Q_t>0$, $\int vT_0\,d\dot Q_t<0$, and $E_t<\Delta_*$.
The preceding limits permit this choice. Choose
\[
 0<\gamma<\min\left\{1,
 \frac{\ell_*-y(1-)}{(u_+/u_-)\dot J_t}\right\},
 \qquad Q_\gamma=Q+\gamma \dot Q_t.
\]
The measure $Q_\gamma$ is a probability measure and satisfies
\[
 \frac{u_+}{u_-}\int v\,dQ_\gamma
 =y(1-)+\gamma\frac{u_+}{u_-}\dot J_t<\ell_*<1.
\]
Thus $Q_\gamma$ satisfies \eqref{eq:incentive-moment}.

For capacity, compare its rules with the mixture
$((1-\gamma)x+\gamma x_{\mathrm c},(1-\gamma)y)$.
By \eqref{appstruct:truncated-rules}, the two pairs coincide below $t$.
For $p\ge t$, using $0<\dot J_t\le v(t)$ gives
\[
 \begin{aligned}
 0\le x_\gamma(p)-[(1-\gamma)x(p)+\gamma x_{\mathrm c}]
     &=\gamma[x(p)-x_{\mathrm c}]\le\gamma,\\
 0\le y_\gamma(p)-(1-\gamma)y(p)
     &=\gamma\left[y(p)+\frac{u_+}{u_-}\dot J_t\right]
       \le\gamma\left[y(1-)+\frac{u_+}{u_-}v(t)\right].
 \end{aligned}
\]
The integrated differences are at most $\gamma E_t$. Convexity and the one-Lipschitz property of the positive part give, for $0\le\ell\le\ell_*$,
\[
 \begin{aligned}
 L_{Q_\gamma}(\ell)
 &\le(1-\gamma)L_Q(\ell)
      +\gamma\mu(x_{\mathrm c}-\ell)_++\gamma E_t\\
 &\le L(\ell)-\gamma(\Delta_*-E_t)<L(\ell).
 \end{aligned}
\]
Above $\ell_*$, unsuitable positive parts vanish. Suitable positive parts are unchanged above $t$ and cannot increase below it, where $x_\gamma$ lies between $x$ and $x_{\mathrm c}<\ell_*$. Thus $L_{Q_\gamma}\le L_Q\le L$ there as well. Lemma~\ref{lem:a-capacity} gives feasibility, while
\[
 \begin{aligned}
 W(Q_\gamma)-W(Q)&=\gamma\int w\,d\dot Q_t>0,\\
 R(Q_\gamma)-R(Q)&=\gamma\frac{nu_+}{u_-}\int vT_0\,d\dot Q_t<0.
 \end{aligned}
\]
\end{proof}

\begin{proof}[Proof of Theorem~\ref{thm:distortion}]
Lemma~\ref{lem:b-attainment} proves compactness of the attainable set and Theorem~\ref{thm:distortion}(i). Continuity of $\Phi$ therefore ensures attainment. For part (ii), first consider a symmetric optimum represented by $Q$. Set
\[
 A=\Phi_S(S(Q),R(Q))\ge0,\qquad
 B=-\Phi_R(S(Q),R(Q))>0.
\]
If $y_Q(1-)<1$, then $R(Q)>0$ and $S(Q)\le n\mu g$ imply $AS(Q)-BR(Q)<An\mu g$. Choose $x_{\mathrm c}\in(0,g)$ sufficiently close to $g$ that both $x_{\mathrm c}w(0)>W(Q)$ and $An\mu x_{\mathrm c}>AS(Q)-BR(Q)$ hold; for $A=0$, the latter follows from $R(Q)>0$. The perturbation in Lemma~\ref{lem:b-perturbation} then satisfies
\[
 \int\!\left[AnT_1-B\frac{nu_+}{u_-}vT_0\right]d\dot Q_t
 \longrightarrow An\mu x_{\mathrm c}-[AS(Q)-BR(Q)]>0.
\]
Choose $t$ additionally large enough for this expression to be positive. The capacity and odds-bound argument in Lemma~\ref{lem:b-endpoint} makes every sufficiently small positive step $Q_\gamma=Q+\gamma\dot Q_t$ feasible. Write $\dot S_t=n\int T_1\,d\dot Q_t$ and $\dot R_t=(nu_+/u_-)\int vT_0\,d\dot Q_t$. Differentiability gives
\[
 \Phi(S(Q_\gamma),R(Q_\gamma))-\Phi(S(Q),R(Q))
 =\gamma(A\dot S_t-B\dot R_t)+o(\gamma)>0
\]
for small enough $\gamma>0$, contradicting optimality. Thus every symmetric optimum has $y_Q(1-)=1$.

For an asymmetric optimum, the canonical reduction and Lemma~\ref{lem:a-symmetry} preserve $S$, $R$, and feasibility and give $\bar y=n^{-1}\sum_i y_i$. The symmetrized mechanism is therefore optimal for the same objective, so
\[
 1=\bar y(1-)=\frac1n\sum_i y_i(1-),\qquad 0\le y_i(1-)\le1.
\]
All summands must equal one. Because $b<1$, Lemma~\ref{lem:b-displacement} gives $S<S^{\mathrm{FB}}$, proving part (ii).

Part (iii) follows from Lemma~\ref{lem:observed-beliefs}: simulate the private-belief outcome using observed beliefs and then complete its suitable selections.\par\medskip

To prove the local implication \eqref{eq:local-improvement}, take a feasible pair $(S^{\mathrm{FB}},R)$ and fix $M,\delta>0$. Set $M_0=\max\{M,u_+/u_-\}$, and let $(S_1,R_1)$ maximize the auxiliary objective $M_0S-R$. This objective satisfies \eqref{eq:advisor-marginals}, so part (ii) gives $S_1<S^{\mathrm{FB}}$. Moreover,
\[
 M_0S_1-R_1>M_0S^{\mathrm{FB}}-R,
\]
since otherwise the preserving pair would also maximize this objective. Consequently,
\[
 R-R_1>M_0(S^{\mathrm{FB}}-S_1)>0.
\]
A mixture of the two direct recommendation kernels is feasible: every joint incentive inequality is linear in the kernel, and each component respects capacity almost surely. Choose
\[
 0<\lambda<\min\left\{1,
 \frac{\delta}{S^{\mathrm{FB}}-S_1},
 \frac{\delta}{R-R_1}\right\}.
\]
Assigning probability $\lambda$ to the optimizer gives use levels
$S'=(1-\lambda)S^{\mathrm{FB}}+\lambda S_1$ and
$R'=(1-\lambda)R+\lambda R_1$. They satisfy
\[
 \begin{gathered}
 0<S^{\mathrm{FB}}-S'<\delta,\qquad 0<R-R'<\delta,\\
 R-R'>M(S^{\mathrm{FB}}-S'),\qquad
 u_+S'-u_-R'>u_+S^{\mathrm{FB}}-u_-R.
 \end{gathered}
\]
The last inequality uses $M_0\ge u_+/u_-$. This proves the claimed local improvement, with both use changes arbitrarily small.
\end{proof}

\section{Proof and construction for Example~\ref{ex:uniform}}
\label{app:uniform-example}

\subsection{Construction and feasibility of the optimal uniform example}
\label{app:uniform-optimal-primal}

The interims in Example~\ref{ex:uniform} are defined by a contraction
and a scalar equation.
Throughout, $n=3$, $k=2$, $F$ is uniform, and $u_+=u_-=1$.
Write
\[
 Y_0(p)=1-\left[\frac12+\frac{(1-p)^2}{2}\right]^2,
 \qquad q_0(p)=\frac{(1-p)^2[1+(1-p)^2]}p.
\]

\paragraph{The universal upper tail.}
There is a unique absolutely continuous function $f$ on $[0,1/8]$
such that $f(0)=0$, $s\le f'(s)\le3s$ almost everywhere, and, on
putting
\begin{equation}
 h(s)=(s^{-1}-1)f(s)+\int_0^s\frac{f(t)}{t^2}\,dt,
 \qquad h(0)=0,
 \label{eq:uniform-tail-h}
\end{equation}
it satisfies
\begin{equation}
 f(s)=\left[s-\frac{s^2}{2}
       +\frac{\{h^{-1}(f(s))\}^2}{2}\right]^2.
 \label{eq:uniform-tail-fixedpoint}
\end{equation}
Here $h$ is strictly increasing, with
$1-s\le h'(s)\le3(1-s)$ almost everywhere. The inverse in
\eqref{eq:uniform-tail-fixedpoint} is consequently well defined.

To prove existence and uniqueness, equip the displayed class of
absolutely continuous functions with the complete metric
$\|f_1-f_2\|:=\sup_{0<s\le1/8}|f_1(s)-f_2(s)|/s^2$.
For a proposed $f$, let $(\mathcal Tf)(s)$ be the unique zero in
$[(s-s^2/2)^2,(9/8)s^2]$ of
\[
 \Psi_f(z;s)=\sqrt z-s+s^2/2-\tfrac12\{h^{-1}(z)\}^2.
\]
On this interval, $\delta=h^{-1}(z)\le6s^2/5$,
$\partial_z\Psi_f\ge1/(3s)$ almost everywhere, and $\Psi_f$ has opposite
signs at the two endpoints. Implicit differentiation, with $\delta=h^{-1}((\mathcal Tf)(s))$, gives
\[
 (\mathcal Tf)'(s)=
 \frac{2\sqrt{\mathcal Tf(s)}(1-s)}
 {1-2\delta\sqrt{\mathcal Tf(s)}/h'(\delta)}\in[s,3s]
 \quad\text{almost everywhere}.
\]
Thus $\mathcal T$ maps the class into itself. If two proposed functions
are distance $D$ apart, their $h$ functions differ at $t$ by at most
$2tD$. The inverse Lipschitz bound $8/7$ and
$\delta\le6s^2/5$ imply
$|\Psi_{f_1}(z;s)-\Psi_{f_2}(z;s)|\le4s^4D$. Therefore
\[
 \|\mathcal Tf_1-\mathcal Tf_2\|
 \le12(1/8)^3\|f_1-f_2\|=\frac3{128}\|f_1-f_2\|.
\]
The contraction theorem gives a unique fixed point of $\mathcal T$.
Any solution of \eqref{eq:uniform-tail-fixedpoint} in the stated class
is such a fixed point: $f(s)\le3s^2/2$ and
$h(t)\ge t-t^2/2\ge15t/16$ give
$\delta=h^{-1}(f(s))\le8s^2/5$, and hence
\[
 (s-s^2/2)^2\le f(s)
 \le s^2\left(1+\frac{32}{25}s^3\right)^2<\frac98s^2
 \qquad(0<s\le1/8).
\]
Thus it lies in the root interval defining $\mathcal T$, proving uniqueness
in the whole stated class. The contraction also supplies a direct
way to evaluate the tail. In particular,
\begin{equation}
 \begin{gathered}
 f_0(s):=(s-s^2/2)^2\le f(s)\le f_0(s)+\tfrac32s^5,\\
 h_0(s):=2s-\tfrac52s^2+\tfrac43s^3-\tfrac14s^4
       \le h(s)\le h_0(s)+2s^4.
 \end{gathered}
 \label{eq:uniform-tail-bounds}
\end{equation}
The first error bound follows by expanding
$(s-s^2/2+\delta^2/2)^2$ and using $\delta\le6s^2/5$;
the second follows from \eqref{eq:uniform-tail-h}.

For $p\ge7/8$, set
\[
 X(p)=1-f(1-p),\qquad Y(p)=1-h(1-p),\qquad
 \phi(p)=1-h^{-1}(f(1-p)).
\]
These functions are locally absolutely continuous. Equations
\eqref{eq:uniform-tail-h}--\eqref{eq:uniform-tail-fixedpoint} imply
\begin{equation}
 \begin{gathered}
 dY=v\,dX,\qquad X(1-)=Y(1-)=1,\\
 X(p)=Y(\phi(p))
 =1-[T_1(p)+T_0(\phi(p))]^2,\\
 0<1-\phi(p)\le\tfrac65(1-p)^2<1-p,
 \qquad \phi'(p)\ge(1-p)/3\quad\text{almost everywhere}.
 \end{gathered}
 \label{eq:uniform-tail-rank}
\end{equation}
All subsequent derivative statements may likewise be read almost
everywhere; no additional smoothness is needed.

\paragraph{The thresholds and the two jumps.}
For a proposed $\alpha\in[.889,.890]$, put
$\beta=(1+\alpha)/2$ and $e=1-\beta$. Define
\begin{align*}
 d_\alpha&=\frac{1-\alpha}{\alpha}Y_0(\alpha),&
 d_\beta&=\frac{1-\beta}{\beta}[Y(\beta)-Y_0(\beta)],\\
 x_{\beta-}&=X(\beta)-d_\beta,&
 x_{\ast}&=x_{\beta-}-d_\alpha-\int_\alpha^\beta q_0(t)\,dt.
\end{align*}
The candidate interims are
\begin{equation}
 (x(p),y(p))=
 \begin{cases}
 (x_{\ast},0),&p<\alpha,\\
 \left(x_{\ast}+d_\alpha+\displaystyle\int_\alpha^p q_0(t)\,dt,
       Y_0(p)\right),&\alpha\le p<\beta,\\
 (X(p),Y(p)),&\beta\le p<1.
 \end{cases}
 \label{eq:uniform-optimal-interims}
\end{equation}
Thus $x_{\beta-}=x(\beta-)$.
The threshold $\alpha$ is the unique root in $(.889,.890)$ of
\begin{equation}
 \int_0^1p x(p)\,dp+\int_\beta^1(1-p)y(p)\,dp
 =\eta-\frac{\eta^3}{3},\qquad
 \eta=\frac12+T_0(\beta).
 \label{eq:uniform-optimal-root}
\end{equation}
Thus the root imposes one capacity equality, and all quantities in
it are already given by the contraction and explicit integrals.

Here are certified existence and sign checks for this definition.
Let $d(s)=f(s)-f_0(s)$ and write the left side of
\eqref{eq:uniform-optimal-root} minus its right side as
$\mathcal B(\alpha)$. Direct integration gives
\begin{align*}
 \mathcal B(\alpha)&=\mathcal P(e)+D(e),\\
 \mathcal P(e)&=\frac1{24}-\frac78e+\frac94e^2-\frac76e^3
       -\frac13e^4+\frac{13}{60}e^5-\frac52e^6,\\
 D(e)&=\frac{\alpha e}{2}\int_0^e\frac{d(s)}{s^2}\,ds
       -\int_0^e(3/2-2s)d(s)\,ds.
\end{align*}
By \eqref{eq:uniform-tail-bounds},
$0\le D(e)\le3\alpha e^5/16$ on the proposed interval.
Rational evaluation of $\mathcal P$ consequently gives
\[
 \mathcal B(.889)<-.0001677508,
 \qquad \mathcal B(.890)>.0001508021.
\]
Continuity proves existence. Differentiating the explicit
construction gives
\[
 \mathcal B'(\alpha)=\frac{2\alpha-1}{4}
   \{2Y_0(\alpha)+Y(\beta)-Y_0(\beta)\}>0,
\]
which proves uniqueness. Alternatively, the above polynomial has
$\mathcal P'(e)<-.636$ and the error term has $0\le D'(e)<.000009$
throughout $e\in[.055,.0555]$.

For every $\alpha$ in the proposed interval, the same tail bounds
and elementary integration give
\[
 \begin{gathered}
 .8950078<x_{\ast}<.8961717,
 \qquad .8964560<Y(\beta)<.8973430,\\
 .9883388<x_{\beta-}<.9885271,
 \qquad .9970875<X(\beta)<.9971391,\\
 .0919305<d_\alpha<.0928846,
 \qquad .0086120<d_\beta<.0087487.
 \end{gathered}
\]
For example, $q_0$ is decreasing on $[.889,.945]$; upper and lower
Riemann sums on $[.889,.945]$ and $[.890,.9445]$, each with
100 equal subintervals, give
$.0004249<\int_\alpha^\beta q_0<.0004464$, sufficient for all
these strict bounds. They imply, writing
$x_\alpha:=x(\alpha)=x_{\ast}+d_\alpha$,
\begin{equation}
 \begin{gathered}
 0<Y_0(\alpha)<Y_0(\beta)<x_{\ast}<Y(\beta)
   <x_\alpha<x_{\beta-}<X(\beta)<1,\\
 [T_1(\alpha)+T_0(\beta)]^2<1-x_{\beta-}.
 \end{gathered}
 \label{eq:uniform-branch-order}
\end{equation}
For the second inequality,
$T_1(\alpha)+T_0(\beta)\le.106379625$, and the difference between
$1-x_{\beta-}$ and the square of this bound exceeds $.0001562$.

Write $\Delta x(t)=x(t)-x(t-)$ and $\Delta y(t)=y(t)-y(t-)$ for the jumps at $t$. Both jumps are positive and satisfy $\Delta y=v\Delta x$.
The continuous branches do too, since $Y_0'=vq_0$ and
\eqref{eq:uniform-tail-rank} holds. Thus
\[
 Q^*=x_{\ast}\delta_0+d_\alpha\delta_\alpha
   +q_0(p)\mathbf1_{(\alpha,\beta)}\,dp
   +d_\beta\delta_\beta
   +f'(1-p)\mathbf1_{(\beta,1)}\,dp
\]
is a probability measure satisfying $\int v\,dQ^*=1$ and producing
\eqref{eq:uniform-optimal-interims}. Lemma~\ref{lem:a-ic} therefore
gives all joint reporting--action incentives.

\paragraph{All capacity inequalities.}
For $n=3,k=2$, the capacity stop-loss transform is
$L(\ell)=\frac23(1-\ell)^{3/2}$. Write $\Delta=L-L_{Q^*}$.
For every $\ell\ge X(\beta)$,
\eqref{eq:uniform-tail-rank} makes the mass of the interim upper-level
set exactly $\sqrt{1-\ell}$. Hence $L_{Q^*}'=L'$ there, and both vanish
at one, proving equality of the transforms on this interval.

For $Y_0(\alpha)\le\ell\le Y_0(\beta)$, the upper-level set
contains every suitable type and the unsuitable types above the
unique $t$ satisfying $Y_0(t)=\ell$. Its mass is again
$\frac12+T_0(t)=\sqrt{1-\ell}$. Equation
\eqref{eq:uniform-optimal-root} fixes the integration constant,
so $L_{Q^*}=L$ throughout this second interval. Below $Y_0(\alpha)$,
and between $Y_0(\beta)$ and $x_{\ast}$, the ordering
\eqref{eq:uniform-branch-order} makes $L_{Q^*}$ the tangent line to
the convex function $L$ at the adjacent equality endpoint.
Thus $\Delta\ge0$ on both intervals.

It remains to check $x_{\ast}\le\ell\le X(\beta)$.
For $x_{\ast}\le\ell\le x_{\beta-}$, the mass $\eta_\ell$ of the upper-level
set is at most $T_1(\alpha)+T_0(\beta)$. Therefore
\[
 \Delta'(\ell)=\eta_\ell-\sqrt{1-\ell}<0
\]
by \eqref{eq:uniform-branch-order}. For
$x_{\beta-}\le\ell\le X(\beta)$, the suitable cutoff is fixed at
$\beta$ and the unsuitable cutoff $q$ satisfies $Y(q)=\ell$.
Thus $\eta_\ell=T_1(\beta)+T_0(q)$, with
$\eta_{X(\beta)}=\sqrt{1-X(\beta)}$. Moreover,
\[
 \frac{d}{d\ell}[\eta_\ell-\sqrt{1-\ell}]
 =-\frac{1-q}{Y'(q)}+\frac1{2\sqrt{1-\ell}}>0
 \quad\text{almost everywhere}:
\]
indeed $Y'(q)=h'(1-q)\ge q\ge\beta>.944$.
Consequently $\Delta'\le0$ on this last interval too. Since
$\Delta(X(\beta))=0$, $\Delta\ge0$ throughout the remaining range.
All stop-loss inequalities hold. Lemma~\ref{lem:a-capacity}
therefore supplies an implementing direct recommendation kernel
with pointwise capacity. The capacity equalities just proved are
exactly the two families used by the optimality certificate below.

\subsection{Global optimality}
\label{app:uniform-optimal-dual}
We give a dual certificate for the measure $Q^*$ constructed above.
Put $\beta=(1+\alpha)/2$, $d_0=1-\beta<1/16$,
$M=2/(2\alpha-1)=2/(1-4d_0)$, and $A=Md_0^2$.
Recall that $Q^*$ has support in $\{0\}\cup[\alpha,1)$, has
$\int v\,dQ^*=1$, and binds the capacity inequalities with cutoffs
$(0,t)$ for $\alpha\le t\le\beta$ and $(s,\phi(s))$ for
$\beta\le s<1$. The tail construction gives
\begin{equation}
 0<1-\phi(s)\le\frac65(1-s)^2,
 \qquad \phi'(s)\ge\frac{1-s}{3}\quad\text{a.e.}
 \label{exopt:dual-tailbounds}
\end{equation}
The function $\phi$ is strictly increasing and locally absolutely
continuous; all differential identities below are understood almost
everywhere.

Recall $H(\eta)/3=\eta-\eta^3/3$. For cutoffs $s,t\in[0,1]$ and a threshold $z\in[0,1)$, define
\[
 C_{s,t}(z)=T_1(\max\{s,z\})+
             v(z)T_0(\max\{t,z\}).
\]
The capacity inequality divided by three is
$\int C_{s,t}\,dQ\le H(T_1(s)+T_0(t))/3$.
For a nonnegative measure $\nu$ on cutoff pairs and $\lambda\ge0$, consider
\begin{equation}
 G(z)=z+\lambda v(z)+
  \int\bigl[C_{s,t}(z)-C_{s,t}(0)\bigr]\,\nu(ds,dt).
 \label{exopt:dual-function}
\end{equation}
If $G\ge0$, every feasible $Q$ satisfies
\begin{equation}
 \int z\,Q(dz)\ge-\lambda-
 \int\bigl[H(T_1(s)+T_0(t))/3-C_{s,t}(0)\bigr]\,\nu(ds,dt).
 \label{exopt:dual-bound}
\end{equation}
We construct multipliers making this bound an equality at $Q^*$.
The subtraction of $C_{s,t}(0)$ keeps all integrals finite despite the
infinite total mass of the tail multiplier.

Place density
\[
 \frac{M(1-\alpha)}{(1-t)^2}\,dt,
 \qquad \alpha<t<\beta,
\]
on the cutoff pairs $(0,t)$. This first part of $\nu$ has total mass $M$
and $T_0$-weighted mass $A$; its cumulative mass below $t\in(\alpha,\beta)$ is
$N(t)=M(t-\alpha)/(1-t)$.
Place a density $m(s)\,ds$, constructed next, on the pairs $(s,\phi(s))$,
and choose $\lambda$ so that
\begin{equation}
 \lambda+\int_\beta^1T_0(\phi(s))m(s)\,ds=A.
 \label{exopt:dual-moment}
\end{equation}
Direct substitution into \eqref{exopt:dual-function} then gives
\begin{equation}
 G(z)=\frac{z(\alpha-z)^2}{(2\alpha-1)(1-z)}
 \quad(0\le z\le\alpha).
 \label{exopt:dual-gap}
\end{equation}
For $\alpha<z<\beta$, differentiating and substituting $N(z)$ gives
$G'(z)=0$. Hence $G=0$ on $[\alpha,\beta]$.

To construct the tail density, write $\psi=\phi^{-1}$ and extend
expressions involving $\psi(t)$ by zero for $t<\phi(\beta)$. Define
\begin{align*}
 P_0(t)&=\frac{A}{t(1-t)^2}-\frac{3M-2}{2t}
                       +\frac{2M(1-t)}{t},\\
 m_0(t)&=P_0'(t)
 =\frac{1}{t^2}\left\{\frac{A(3t-1)}{(1-t)^3}
                         -\frac{M+2}{2}\right\}.
\end{align*}
We have $P_0(\beta)=0$ and
\begin{equation}
 \frac{7A}{4(1-t)^3}\le m_0(t)\le
 \frac{3A}{(1-t)^3},\qquad \beta\le t<1.
 \label{exopt:dual-reference-bounds}
\end{equation}
Indeed, for the lower bound, multiplying by $(1-t)^3/A$ and using
$(M+2)/(2M)\le1$ gives a lower bound
$2-4d_0\ge7/4$; the upper bound follows from
$(3t-1)/t^2\le3$.

For a nonnegative density $m$, put
\[
 P(t)=\int_\beta^t m(s)\,ds,\qquad
 P_\psi(t)=P(\psi(t)),\qquad
 J(t)=\int_{\phi(\beta)}^t(1-s)P_\psi(s)\,ds,
\]
where $P_\psi=J=0$ before $\phi(\beta)$. Define the positive linear operator
\begin{equation}
 (Km)(t)=\frac{m(\psi(t))}{\phi'(\psi(t))}
       +\frac{P_\psi(t)}{t(1-t)}
       +\frac{(3t-1)J(t)}{t^2(1-t)^3}.
 \label{exopt:dual-operator}
\end{equation}
We solve $m=m_0-Km$ in the weighted supremum norm
$\|m\|=\operatorname*{ess\,sup}_{\beta\le t<1}|m(t)|/m_0(t)$.
Here are explicit bounds verifying that $K$ is a contraction.
For $t<\phi(\beta)$, $Km=0$. For $t\ge\phi(\beta)$, set
$\delta=1-t$, $d=1-\psi(t)$, and
$\delta_{\max}=1-\phi(\beta)\le6d_0^2/5$.
For the input $m_0$, use
$P_0(u)\le A/[u(1-u)^2]$, \eqref{exopt:dual-tailbounds}, and
\eqref{exopt:dual-reference-bounds}. The three terms in
\eqref{exopt:dual-operator}, divided by $m_0(t)$, are respectively at most
\[
 \frac{1296}{175}\delta,
 \qquad\frac{24}{35\beta^2}\delta,
 \qquad\frac{432}{175\beta}d_0^2.
\]
For the last bound one uses
$J(t)\le36Ad_0^2/(25\beta)$.
Since $\delta\le\delta_{\max}$, $d_0\le1/16$, and $\beta\ge15/16$,
their sum is less than $0.051<1/8$.
Thus $\|K\|<1/8$, and the contraction $m\mapsto m_0-Km$
maps the order interval $[7m_0/8,m_0]$ into itself. Its fixed point
satisfies $7m_0/8\le m\le m_0$, so it defines a nonnegative,
locally finite tail multiplier.

Integrating its defining equation gives
\[
 P(t)=P_0(t)-P_\psi(t)-\frac{J(t)}{t(1-t)^2}.
\]
Equivalently,
\begin{equation}
 t(1-t)(P+P_\psi)'=(3t-1)P+(3t-2)P_\psi+M(6t-3)-2.
 \label{exopt:dual-tail-equation}
\end{equation}
The choice \eqref{exopt:dual-moment} has $\lambda>0$, because
\[
 \int_\beta^1T_0(\phi(s))m(s)\,ds
 \le\int_\beta^1\frac{18}{25}(1-s)^4
                   \frac{3A}{(1-s)^3}\,ds
 =\frac{27}{25}Ad_0^2<A.
\]
For $t\ge\beta$, set
\[
 \Lambda(t)=\lambda+\int_\beta^1T_0(\max\{\phi(s),t\})m(s)\,ds.
\]
Then $\Lambda(\beta)=A$, $\Lambda'=-(1-t)P_\psi$, and
\[
 G'(t)=1+M(1/2-2t)-t(P+P_\psi)+\frac{\Lambda(t)}{(1-t)^2}.
\]
Equation~\eqref{exopt:dual-tail-equation} and the initial value imply
\[
 \Lambda(t)=(1-t)^2\{t(P+P_\psi)-1-M/2+2Mt\}.
\]
Indeed, differentiating verifies the identity, and it holds at $\beta$
by $A=Md_0^2$ and $M=2/(1-4d_0)$.
Consequently $G'=0$ on $[\beta,1)$, proving that $G=0$ on
$[\alpha,1)$ and $G\ge0$ everywhere.

For completeness, the normalized integrals above are absolutely
convergent. The tail density is $O((1-s)^{-3})$. For $z\ge\beta$, the negative suitable part is bounded by
$3A\int_\beta^z(1-s)^{-2}\,ds\le3A/(1-z)$; it is zero for $z<\beta$. The positive unsuitable
part is bounded by
$v(z)\int_\beta^1T_0(\phi(s))m(s)\,ds\le Av(z)$.
Thus $\int|C_{s,t}(z)-C_{s,t}(0)|\,d\nu=O(1+v(z))$,
which is integrable against every feasible $Q$ because $\int v\,dQ\le1$.
Also
$H(T_1(s)+T_0(\phi(s)))/3-T_1(s)=O((1-s)^3)$.
Thus one can first integrate over truncated cutoff intervals and then
pass to the limit in \eqref{exopt:dual-bound}.

The support of $Q^*$ lies in the zero set of $G$, its odds moment equals
one, and each capacity inequality receiving a multiplier is an equality.
It therefore attains the bound \eqref{exopt:dual-bound}, globally minimizing
$\int z\,dQ$ and maximizing
$W=(3/2)(1-\int z\,dQ)$. The direct reduction and symmetrization extend
this bound to all feasible mechanisms, including asymmetric mechanisms.

\subsection{Minimum-harm preservation}
\label{app:uniform-preserving}

Keep $n=3$, $k=2$, uniform beliefs, and $u_+=u_-=1$. We construct
a policy minimizing $R$ among all feasible policies with
$S=S^{\mathrm{FB}}$. Let $\tau\in(0,1)$ be the unique solution of
\[
 \tau^3+2\tau^2=2,\qquad \tau\simeq0.8392867552,
\]
and put $x_0=3/4+\tau^2/4-\tau^4/12$. The superscript $\mathrm P$
denotes the preserving policy. Its interims are
\begin{equation}
 \begin{aligned}
 x^{\mathrm P}(p)&=
 \begin{cases}
 x_0,&p<\tau,\\
 1-(1-p^2)^2/4,&p\ge\tau,
 \end{cases}
 &
 y^{\mathrm P}(p)&=
 \begin{cases}
 0,&p<\tau,\\
 p^3/3+p^4/4+1/6,&p\ge\tau.
 \end{cases}
 \end{aligned}
 \label{expres:interims}
\end{equation}

To implement suitable recommendations, give reports at least $\tau$
priority in decreasing report order, followed by a pool of reports
below $\tau$; break all ties uniformly and select up to two suitable
agents. A suitable report $p\ge\tau$ is excluded only if both opponents
are suitable and report more than $p$, yielding the upper branch of
$x^{\mathrm P}$. A suitable report below $\tau$ is excluded with probability
\[
 \frac{(1-\tau^2)^2}{4}
 +\frac{\tau^2(1-\tau^2)}{4}
 +\frac{\tau^4}{12}
 =1-x_0.
\]
The three terms cover two, one, and zero opponents above the cutoff,
respectively, when both opponents are suitable.

Provisionally fill spare places with unsuitable agents in decreasing
report order, breaking ties uniformly. Each opponent has higher
priority than an unsuitable report $p$ with probability
$1/2+T_0(p)=1-p+p^2/2$, so its provisional selection probability is
\[
 z(p)=1-(1-p+p^2/2)^2>0.
\]
Retain each provisional unsuitable selection independently with
probability $y^{\mathrm P}(p)/z(p)$. Below $\tau$ this probability is zero;
for $p\ge\tau>1/2$ it is at most one because
\[
 z(p)-y^{\mathrm P}(p)
 =\frac{(1-p)(3p^3-p^2+11p-1)}6\ge0.
\]
Disclose only the final use or nonuse recommendations.
This procedure respects capacity at every report--state profile and
implements \eqref{expres:interims}.

Between jumps, $dy^{\mathrm P}=v\,dx^{\mathrm P}$. At the cutoff,
\[
 \Delta x^{\mathrm P}(\tau)=\frac{\tau^2}{4}-\frac{\tau^4}{6},\qquad
 v(\tau)\Delta x^{\mathrm P}(\tau)
 =\frac{\tau^3}{3}+\frac{\tau^4}{4}+\frac16
 =\Delta y^{\mathrm P}(\tau),
\]
where the middle equality follows from $\tau^3+2\tau^2=2$.
Thus the probability measure
\[
 Q^{\mathrm P}=x_0\delta_0+
 \left(\frac{\tau^2}{4}-\frac{\tau^4}{6}\right)\delta_\tau
 +t(1-t^2)\mathbf1_{\{t>\tau\}}\,dt
\]
represents these interims. Its odds moment is
$y^{\mathrm P}(1-)=3/4<1$, so Theorem~\ref{thm:implementation}(i) establishes
truthful obedience against every joint reporting--action deviation.
Suitable agents always have priority over unsuitable agents, giving
\begin{equation}
 \begin{aligned}
 S^{\mathrm P}&=S^{\mathrm{FB}}=\frac{11}{8},\\
 R^{\mathrm P}&=\frac3{40}-\frac{\tau^3}{8}+\frac{\tau^5}{20}
       \simeq0.02192253608,\\
 W^{\mathrm P}&=\frac{11}{8}-R^{\mathrm P}\simeq1.35307746392.
 \end{aligned}
 \label{expres:outcome}
\end{equation}
The expression for $R^{\mathrm P}$ follows from
$R=(3/2)\int(t-t^2)\,dQ$ and the displayed $Q^{\mathrm P}$.

We next prove optimality among all preserving mechanisms. By canonical
reduction and symmetrization, it suffices to consider a symmetric
canonical pair $(x,y)$. Write
\[
 M=y(1-)=\int_0^1\frac{1-x(p)}{(1-p)^2}\,dp\le\frac34,
 \qquad M^{\mathrm P}=\frac34.
\]
The identity is \eqref{appimpl:tailcharge}; the inequality follows from
Lemma~\ref{lem:b-displacement}, since here $b=3/4$.
Uniform beliefs also give
\[
 R=\frac32\int_0^1(2p-1)x(p)\,dp.
\]
Define the positive multiplier and cost function
\[
 \lambda=\frac{3(1-\tau)^2}{2(2\tau-1)},\qquad
 c(p)=3-\frac{3}{2p}-\frac{\lambda}{p(1-p)^2},
\]
and let $\bar c(p)=c(\tau)$ for $p<\tau$ and
$\bar c(p)=c(p)$ for $p\ge\tau$. Direct calculation gives
\[
 c(\tau)=-\frac{6(1-\tau)}{2\tau-1}<0,\qquad
 c'(p)=\frac{3}{2p^2}
 -\frac{\lambda(3p-1)}{p^2(1-p)^3}<0
 \quad(p\ge\tau).
\]
For the last inequality, $(1-p)^3/(3p-1)$ decreases on $(1/3,1)$,
and its value at $\tau$ is strictly below $2\lambda/3$.
Hence the exclusion weight $-\bar c(p)$ is positive,
constant below $\tau$, and strictly increasing above it.

To compare $c$ with $\bar c$, define, for $0\le t\le\tau$,
\[
 J(t)=\int_0^t p[c(p)-c(\tau)]\,dp
 =-\frac{3t(\tau-t)^2}{2(2\tau-1)(1-t)}\le0.
\]
Since $J(0)=J(\tau)=0$ and $x$ is nondecreasing, integration by parts
gives
\[
 \int_0^\tau p[c(p)-\bar c(p)]x(p)\,dp
 =-\int_{(0,\tau)}J(p)\,dx(p)\ge0.
\]
Equality holds for $x^{\mathrm P}$, which is constant below $\tau$.
Consequently,
\begin{align}
 R-R^{\mathrm P}+\lambda(M-M^{\mathrm P})
 &=\int_0^1 pc(p)[x(p)-x^{\mathrm P}(p)]\,dp\notag\\
 &\ge\int_0^1 p\bar c(p)[x(p)-x^{\mathrm P}(p)]\,dp\notag\\
 &=-\int_0^1 p\bar c(p)
       \big[(1-x(p))-(1-x^{\mathrm P}(p))\big]\,dp\ge0.
 \label{expres:optimality}
\end{align}
The final inequality follows directly from preservation. Every
preserving implementation excludes no suitable agent when at most two
are suitable, and excludes exactly one when all three are suitable,
almost surely. The proposed policy then excludes an agent with the
smallest $-\bar c(p)$: reports below $\tau$ are tied, and higher reports
have strictly greater weights. It therefore minimizes weighted suitable
exclusion at each profile. Taking expectations yields the last
inequality in \eqref{expres:optimality}.

All integrals in this comparison are finite. Near one,
$-p\bar c(p)$ is bounded by a constant times $1+(1-p)^{-2}$, and
the integrals of $(1-x)/(1-p)^2$ and $(1-x^{\mathrm P})/(1-p)^2$ are
$M$ and $M^{\mathrm P}$. The difference involving $c$ is likewise absolutely
integrable because
$|x-x^{\mathrm P}|\le(1-x)+(1-x^{\mathrm P})$.
Finally, $M\le M^{\mathrm P}$ and \eqref{expres:optimality} imply
\[
 R-R^{\mathrm P}\ge\lambda(M^{\mathrm P}-M)\ge0.
\]
This proves that \eqref{expres:outcome} gives the smallest possible
harmful use subject to preservation, including among
asymmetric mechanisms.

\section{Proofs of the limits of coordination}
\label{app:coordination}

\subsection{Private-belief claims in Proposition~\ref{prop:feasible-costs}}
\label{app:private-feasibility}

\begin{proof}[Proof of Proposition~\ref{prop:feasible-costs}(i)--(ii)]
We prove parts (i) and (ii), the integrability criterion, and the sufficient preservation bound; the observed-belief claims are proved in the next subsection.
Common scaling of $(u_+,u_-)$ scales every utility difference and leaves capacity unchanged, so only their ratio matters.
To verify the expected-minimum-odds interpretation of $I_k(F)$, conditional independence gives
\[
 \Pr\!\left\{\min_{1\le j\le k}p_j>t\,\middle|\,
 \theta_1=\cdots=\theta_k=1\right\}
 =\left(\frac{T_1(t)}{\mu}\right)^k.
\]
Since $v(p)=\int_0^p(1-t)^{-2}dt$, Tonelli's theorem gives the stated conditional expected-minimum-odds formula.

For necessity of the integrability criterion, take canonical interims $(x_i,y_i)$ and threshold measures $Q_i$. For any set $\mathcal I$ of $k+1$ agents, capacity implies
\[
 \sum_{i\in\mathcal I}\mathbf1_{\{\theta_i=1,p_i>t\}}(1-a_i)
 \ge\mathbf1_{\{\theta_i=1,p_i>t\ \forall i\in\mathcal I\}}.
\]
Expectations and monotonicity give
\[
 T_1(t)^{k+1}\le\sum_{i\in\mathcal I}\int_t^1p[1-x_i(p)]\,dF(p)
 \le T_1(t)\sum_{i\in\mathcal I}[1-x_i(t)].
\]
Divide by $T_1(t)>0$ and integrate against $(1-t)^{-2}dt$. By \eqref{appimpl:tailcharge} and the odds bounds,
\begin{equation}
 \frac{u_+}{u_-}I_k(F)
 \le\frac{u_+}{u_-}\sum_{i\in\mathcal I}\int v\,dQ_i\le k+1.
 \label{appimpl:necessary-ratio}
\end{equation}
Thus $I_k(F)<\infty$ is necessary.

For sufficiency of the criterion and part (ii), select suitable agents in decreasing report order, up to capacity, with fixed tie-breaking. Their suitable interims and partial odds moments are
\[
 X(p)=\Pr\{\Bin(n-1,T_1(p))\le k-1\},\qquad
 J_X(p)=\int_{(0,p]}v\,dX.
\]
The function $X$ is nondecreasing with $X(1-)=1$. Since
$\Pr\{\Bin(n-1,\eta)\ge k\}/\eta^k$ is positive and continuous on $(0,\mu]$ and tends to $\binom{n-1}{k}$ at zero, $1-X(t)$ is bounded above and below by positive multiples of $T_1(t)^k$. Tonelli gives
$J_X(1-)=\int_0^1[1-X(t)](1-t)^{-2}dt$, finite exactly when $I_k(F)$ is finite. The union bound further gives
\begin{equation}
 1-X(t)\le\binom{n-1}{k}T_1(t)^k,\qquad
 J_X(1-)\le\binom{n-1}{k}I_k(F).
 \label{appimpl:priority-odds-bound}
\end{equation}
Provisionally fill remaining places uniformly among unsuitable agents. Conditional on being unsuitable, each agent is selected with probability $(k-S^{\mathrm{FB}})/[n(1-\mu)]$, independent of its report: suitable reports affect identities but not the number selected. Retain an unsuitable agent of report $p$ with probability
\[
 \frac{n(1-\mu)u_+J_X(p)}{u_-[k-S^{\mathrm{FB}}]}.
\]
Bound~\eqref{eq:preservation-sufficient} makes this at most one. The resulting interims $(X,(u_+/u_-)J_X)$ have representation \eqref{eq:Qrepresentation} and satisfy the odds bound, so Lemma~\ref{lem:a-ic} gives incentives. Construction ensures pointwise capacity and selects $\min\{k,\sum_i\theta_i\}$ suitable agents. It therefore implements $S=S^{\mathrm{FB}}$. Since $I_k(F)>0$ and
$k-S^{\mathrm{FB}}=\E[(k-\Bin(n,\mu))_+]>0$, the sufficient bound holds for all sufficiently small positive ratios whenever $I_k(F)<\infty$.

If $I_k(F)=\infty$, necessity makes both private-belief ranges empty; set $\bar u_+^{\mathrm{FB}}=\bar u_+=0$. For the rest of this proof, suppose $I_k(F)<\infty$.

For the critical interval, fix $u_->0$ and restrict to jointly feasible threshold
measures by the preceding lemmas. Lemma~\ref{lem:a-compactness} gives
weak compactness at each positive benefit.

Feasibility is downward closed in $u_+$. To reduce the benefit from $u_+^{(2)}$ to $u_+^{(1)}$, retain suitable recommendations and thin unsuitable ones by $u_+^{(1)}/u_+^{(2)}$. This preserves capacity and keeps the same threshold measure with its required new unsuitable interim. When $I_k(F)<\infty$, the feasible-benefit set $\mathcal U$ is nonempty and bounded above by $u_-(k+1)/I_k(F)$. Let $u_{+,j}\uparrow\bar u_+:=\sup\mathcal U$ with feasible measures $Q_j$. Their odds moments are uniformly bounded, so a subsequence converges weakly to $Q$, with
\[
 \int v\,dQ\le\liminf_j\int v\,dQ_j\le\frac{u_-}{\bar u_+}.
\]
The bounded continuous coefficients in \eqref{appimpl:compactconstraints} permit passage to the limit as $u_{+,j}\to\bar u_+$. Thus the endpoint is feasible and $\mathcal U=(0,\bar u_+]$.

We first show that any jointly feasible threshold measure $Q$ with
$y_Q(1-)<1$ permits coordination at a strictly larger benefit, keeping $u_-$ fixed.
Apply the construction in Lemma~\ref{lem:b-endpoint}. In its notation, the
resulting $Q_\gamma$ has $\widehat y:=y_{Q_\gamma}(1-)<\ell_*<1$ and, at the
original payoff pair,
\[
 L_{Q_\gamma}(\ell)\le L(\ell)-d
 \quad(0\le\ell\le\ell_*),\qquad
 d:=\gamma(\Delta_*-E_t)>0.
\]
For $\lambda>1$, keep $Q_\gamma$ fixed and replace $u_+$ by
$\lambda u_+$. The suitable interim is unchanged and the unsuitable interim
becomes $\lambda y_{Q_\gamma}$. Choose $\lambda$ sufficiently close to one that
\[
 \lambda\widehat y<\ell_*,\qquad
 (\lambda-1)(1-\mu)\widehat y<d.
\]
The odds bound continues to hold because the new unsuitable endpoint is less
than one. On $[0,\ell_*]$, the one-Lipschitz property of the positive part
bounds the increase in the stop-loss transform by
\[
 (\lambda-1)\int_0^1(1-p)y_{Q_\gamma}(p)\,dF(p)
 \le(\lambda-1)(1-\mu)\widehat y<d.
\]
Above $\ell_*$, both old and new unsuitable positive parts vanish, and the
suitable terms are unchanged. All stop-loss inequalities therefore continue
to hold. Lemma~\ref{lem:a-capacity} implements the resulting interims at
$(\lambda u_+,u_-)$.

The set of benefits admitting preservation is nonempty by the decreasing-report construction above, is bounded above by $\bar u_+$,
and is downward closed by thinning unsuitable recommendations. Its positive
supremum $\bar u_+^{\mathrm{FB}}$ is attained. Indeed, take preserving measures
$Q_j$ at benefits increasing to this supremum. Their odds moments are uniformly
bounded, so a subsequence converges weakly. The compactness argument for
\eqref{appimpl:compactconstraints} passes feasibility to the limit, and the
bounded continuous function $T_1$ preserves
$n\int T_1\,dQ_j=S^{\mathrm{FB}}$. A preserving measure at the endpoint has
$y(1-)\le b<1$ by Lemma~\ref{lem:b-displacement}. The preceding construction
therefore gives a privately feasible benefit strictly larger than
$\bar u_+^{\mathrm{FB}}$, proving
$\bar u_+^{\mathrm{FB}}<\bar u_+$.

Finally, symmetrize any feasible mechanism at $u_+=\bar u_+$. If its canonical
interims had $y_i(1-)<1$ for even one agent, their symmetric average would have
endpoint strictly below one; the same construction would contradict maximality
of $\bar u_+$. Since every endpoint is at most one, all must equal one.

\end{proof}

\subsection{Proofs for observed beliefs}
\label{app:observed-beliefs}

The following lemma supplies the observed-belief comparison in Section~\ref{sec:costs} and the condition used in Proposition~\ref{prop:feasible-costs}(iii).

\begin{lemma}
\label{lem:observed-beliefs}
A feasible observed-belief policy exists if and only if \eqref{eq:observed-feasibility} holds. Whenever one exists, $(S,R)=(S^{\mathrm{FB}},0)$ is attainable. These conclusions do not require Assumption~\ref{ass:belief-support}.
\end{lemma}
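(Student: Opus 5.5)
We prove necessity of \eqref{eq:observed-feasibility} first, then sufficiency via an explicit first-best policy; the improvement from any feasible policy to $(S^{\mathrm{FB}},0)$ follows from the sufficiency construction. Nothing below uses full support. With observed beliefs there is no reporting constraint. A policy is a kernel $\pi(\mathbf m\mid\mathbf p,\boldsymbol\theta)$ (possibly after a revelation step as in Lemma~\ref{lem:a-revelation}, with messages replaced by recommendations), and the only incentive requirement is obedience. Given interims $x_i(p),y_i(p)$ defined as in \eqref{eq:interimrules} with true beliefs inserted, obedience to nonuse is \eqref{eq:observed-obedience}, and obedience to use is $u_+px_i(p)\ge u_-(1-p)y_i(p)$, both $F$-a.e.

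\textbf{Necessity.} Fix a feasible policy and $t\in[0,1)$. Multiply \eqref{eq:observed-obedience} by $\mathbf 1\{p>t\}$, integrate against $F$, and sum over $i$. This gives
\[
u_+\sum_i\int_t^1p[1-x_i]\,dF\le u_-\sum_i\int_t^1(1-p)[1-y_i]\,dF\le nu_-T_0(t).
\]
The left side is $u_+$ times the expected number of suitable agents with belief above $t$ who forgo use. By capacity, this number is at least $(\#\{i:\theta_i=1,p_i>t\}-k)_+$. That count is distributed $\Bin(n,T_1(t))$, so its expected excess over $k$ is $nT_1(t)-H(T_1(t))$. This yields \eqref{eq:observed-feasibility}.

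\textbf{Sufficiency and first best.} Assume \eqref{eq:observed-feasibility}. The policy recommends use only to suitable agents, selecting $\min\{k,\sum\theta_i\}$ of them in decreasing order of observed belief, with uniform tie-breaking. Then $y\equiv0$, use reveals suitability and is obeyed, and $(S,R)=(S^{\mathrm{FB}},0)$ holds with pointwise capacity. Its suitable interim is $X(p)=\Pr\{\Bin(n-1,T_1(p))\le k-1\}$, using atomlessness; if atoms are allowed, use a tie-broken version instead. Nonuse obedience then requires
\[
u_+p[1-X(p)]\le u_-(1-p)\quad\text{for a.e.\ }p.
\]
This pointwise condition does not follow directly from the integrated inequality at the single cutoff $t=p$, so we must choose the ranking to spread exclusions appropriately. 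The natural choice is to rank by the exclusion weight: excluding high beliefs costs more obedience slack, since the ratio $u_+p/(u_-(1-p))$ increases in $p$. Ranking highest first minimizes the exclusion probability at high $p$, which is what the pointwise constraint needs there. I expect this step to be the main obstacle.

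To handle it, I would first try a Border/Strassen-type argument in the spirit of Lemma~\ref{lem:a-capacity}. Restrict to policies with $y=0$ that select suitable agents up to capacity. An interim $x$ is implementable exactly when $n\int_A px\,dF\le H(\rho(A\times\{1\}))$ for all sets $A$, with equality on the whole space. It is then enough to exhibit a nondecreasing target $x^*(p)=\max\{1-u_-(1-p)/(u_+p),\,c\}$, with the constant $c$ chosen so that $n\int px^*\,dF=S^{\mathrm{FB}}$. This target satisfies obedience by construction. Its upper-set capacity constraints read
\[
n\int_t^1p\,x^*\,dF\le H(T_1(t)),
\]
and after substituting $1-x^*\le u_-(1-p)/(u_+p)$ these reduce to \eqref{eq:observed-feasibility} at each $t$. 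The reduction to upper sets uses monotonicity of $x^*$, and the lower region is handled by the constant $c$ via concavity of $H$. Lemma~\ref{lem:a-capacity}'s selection construction, applied to suitable agents only, then implements $x^*$ with $y=0$, giving $(S^{\mathrm{FB}},0)$.
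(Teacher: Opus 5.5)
Your proof is correct. The necessity half is the paper's argument. Sufficiency takes a different route.

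\textbf{How the routes differ.} The paper implements the \emph{minimal} obedient interim $x_{\min}(p)=[1-u_-(1-p)/(u_+p)]_+$ with no unsuitable exposure. For this interim the upper-set capacity inequalities are literally \eqref{eq:observed-feasibility} for $t\ge p_0=u_-/(u_-+u_+)$ and trivial below $p_0$. The coupling construction from the proof of Lemma~\ref{lem:a-capacity} then implements it. First best is reached by a separate purge-and-complete step (discard unsuitable users, add suitable nonusers). The paper applies that step directly to an arbitrary feasible policy to get ``whenever one exists, $(S^{\mathrm{FB}},0)$ is attainable.''

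You instead build first best into the target through the floor $c$, and you obtain the ``whenever one exists'' claim as necessity followed by your construction. This avoids the completion step and gives one explicit symmetric first-best menu. It costs an extra capacity check at cutoffs below the kink. It also does not give what the paper's improvement gives: suitable interims pointwise at least those of any given policy, i.e. $\widehat x_i\ge x_i$. The paper reuses that dominance in Theorem~\ref{thm:distortion}(iii) and in the strict comparison $\bar u_+<\bar u_+^{\mathrm{obs}}$.

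You were right to abandon the decreasing-order rule. With uniform beliefs, $n=3$, $k=2$, it is obedient only up to $u_+/u_-\approx6.45$, whereas \eqref{eq:observed-cutoff} gives $64/9$.

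\textbf{Two details to tighten.}

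First, ``substituting $1-x^*\le u_-(1-p)/(u_+p)$'' points the wrong way. That inequality gives a lower bound on $n\int_t^1 p\,x^*\,dF$, but the capacity check needs an upper bound. What you actually use is that $x^*=1-u_-(1-p)/(u_+p)$ holds with equality on $[p_c,1)$, where $p_c$ is the kink. So every cutoff $t\ge p_c$ yields exactly \eqref{eq:observed-feasibility}. For $t<p_c$, the left side is affine in $T_1(t)$. It is at most $H(T_1(p_c))$ at $T_1(p_c)$ and equals $H(\mu)$ at $t=0$. It therefore lies below the chord of the concave function $H$, and hence below $H$.

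Second, you should verify that $c\in[0,1)$ exists. By \eqref{eq:observed-feasibility} at $t=p_0$,
\[
 n\int p\,x_{\min}\,dF\le H(T_1(p_0))\le H(\mu).
\]
Also $H(\mu)<n\mu$. The map $c\mapsto n\int p\max\{x_{\min},c\}\,dF$ is continuous and runs from $n\int p\,x_{\min}\,dF$ at $c=0$ to $n\mu$ at $c=1$. The intermediate value theorem then gives $c\in[0,1)$ with value $H(\mu)$.
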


\begin{proof}
Write $\varepsilon=u_+/u_-$. Take any feasible observed-belief mechanism and its specified equilibrium. Its conditional equilibrium action law given $(\mathbf p,\boldsymbol\theta)$ is a Borel kernel on $\{0,1\}^n$; write its suitable- and unsuitable-state interims as $x_i(p)$ and $y_i(p)$. Comparing equilibrium play with always using the resource gives
\begin{equation}
 \varepsilon p[1-x_i(p)]\le(1-p)[1-y_i(p)]
 \quad\text{for }F\text{-almost every }p.
 \label{appobs:original-obedience}
\end{equation}

The observed-belief advisor draws a preliminary action vector from this kernel. It retains the suitable agents in that vector, discards the unsuitable agents, and adds suitable nonusers until exactly $\min\{k,\sum_i\theta_i\}$ agents are selected. Use fixed index order for additions. If a preliminary vector has more than $k$ suitable users, retain its first $k$ instead. This last provision handles all exceptional profiles and draws: the resulting finite transformation is measurable and satisfies pointwise capacity at every belief--state profile. Only its final binary recommendations are sent; the preliminary draw and discarded recommendations are not disclosed. The advisor's observations of beliefs remain private.

The original capacity constraint holds almost surely, so the new suitable-state interim $\widehat x_i(p)$ is at least $x_i(p)$ for $F$-almost every $p$. Its unsuitable-state interim is zero. A use recommendation reveals suitability. For a nonuse recommendation, \eqref{appobs:original-obedience} gives
\[
 \varepsilon p[1-\widehat x_i(p)]
 \le(1-p)[1-y_i(p)]\le1-p
 \quad F\text{-almost everywhere}.
\]
Thus both recommendations are obeyed at the original payoffs. The rule selects exactly the direct-assignment number of suitable agents at every profile, and therefore attains $(S,R)=(S^{\mathrm{FB}},0)$.

If the original feasible mechanism instead elicits private beliefs, the observed-belief advisor can draw from its truthfully induced action kernel by substituting the observed beliefs for reports. The same construction and obedience argument apply, without retaining a reporting stage.

We next characterize existence. For necessity, take any feasible observed-belief outcome and its interims $x_i,y_i$. For each $t<1$, let
$N_t=\sum_i\mathbf1_{\{\theta_i=1,p_i>t\}}$.
Capacity implies that at least $(N_t-k)_+$ members of this group forgo use. Since $N_t$ has law $\Bin(n,T_1(t))$, \eqref{appobs:original-obedience} implies
\begin{align*}
 \varepsilon\big[nT_1(t)-H(T_1(t))\big]
 &\le\varepsilon\sum_i\int_t^1p[1-x_i(p)]\,dF(p)\\
 &\le\sum_i\int_t^1(1-p)[1-y_i(p)]\,dF(p)
 \le nT_0(t).
\end{align*}
This proves \eqref{eq:observed-feasibility}.

For sufficiency, suppose these inequalities hold, and define
\[
 x_{\min}(p)=\left[1-\frac{1-p}{\varepsilon p}\right]_+,
 \qquad p_0=\frac1{1+\varepsilon}.
\]
For $t\ge p_0$, the assumed inequality is exactly
\begin{equation}
 n\int_t^1p x_{\min}(p)\,dF(p)\le H(T_1(t)).
 \label{appobs:minimum-capacity}
\end{equation}
For $t<p_0$, the integral on the left equals its value at $p_0$, whereas the right side is at least its value at $p_0$. Thus \eqref{appobs:minimum-capacity} holds for every $t\in[0,1]$.

We apply the general coupling construction in the proof of Lemma~\ref{lem:a-capacity}; its capacity step uses only the relevant upper-set inequalities, rather than the threshold representation. In detail, put
$Z_{\min}(p,\theta)=\theta x_{\min}(p)$ under the law $\rho$,
and let $Z=\kappa(\omega)$ for an independent uniform $\omega$, with $\kappa$ as in that proof. For every $0\le\ell<1$, the set $\{Z_{\min}>\ell\}$ consists of suitable agents above a belief cutoff. Inequality~\eqref{appobs:minimum-capacity} and the rank identity \eqref{appimpl:rankidentity} therefore give
\[
 \E(Z_{\min}-\ell)_+
 \le\max_{0\le\eta\le1}\left\{\frac{H(\eta)}n-\ell\eta\right\}
 =\E(Z-\ell)_+
 \qquad(0\le\ell\le1).
\]
Strassen's theorem gives a Borel conditional score law $\Gamma(dz'\mid z)$ with marginal law $Z$ and conditional mean $\bar z(z)\ge z$. As in the lemma's proof, replace this law by $\delta_z$ on its exceptional null set so the mean inequality holds for every $z$. Given the observed profile, independently draw agent $i$'s score from $\Gamma(\cdot\mid Z_{\min}(p_i,\theta_i))$, select the $k$ highest scores with measurable tie-breaking, and retain a selected agent with probability $Z_{\min}(p_i,\theta_i)/\bar z(Z_{\min}(p_i,\theta_i))$, interpreted as zero when the denominator vanishes. Opponents' scores have the continuous marginal law $Z$, and an own score $z'$ is selected with probability $z'$. Averaging and retaining therefore give suitable interim $x_{\min}(p)$ and unsuitable interim zero at every type. The finite selection rule is a measurable kernel satisfying pointwise capacity.

A positive recommendation under this rule reveals suitability, and the definition of $x_{\min}$ gives
$\varepsilon p[1-x_{\min}(p)]\le1-p$.
Thus obedience holds. Completing its suitable selections as above implements $(S,R)=(S^{\mathrm{FB}},0)$. Neither this construction nor the necessity argument requires full support.
\end{proof}

\begin{proof}[Proof of Proposition~\ref{prop:feasible-costs}(iii)]
Write $\varepsilon=u_+/u_-$. Under full support, $T_1(t)>0$ for every $t<1$. Since $k<n$, the denominator $nT_1(t)-H(T_1(t))$ is strictly positive. Lemma~\ref{lem:observed-beliefs} therefore gives the critical value in \eqref{eq:observed-cutoff}, which is finite by taking $t=0$, and includes its endpoint whenever it is positive.

To characterize when the cutoff is positive, put
\[
 C:=\sup_{0\le t<1}\frac{T_1(t)^k}{1-t}.
\]
Condition~\eqref{eq:observed-tail} is $C<\infty$. Moreover, $t[1-F(t)]\le T_1(t)\le1-F(t)$ shows that it is equivalent to $1-F(t)=O((1-t)^{1/k})$ as $t\uparrow1$. Suppose an observed-belief policy is feasible at some $\varepsilon>0$. For $0<t<1$, write $a=T_1(t)$. The event that a fixed $k+1$ agents are suitable with beliefs above $t$ has probability $a^{k+1}$ and entails at least one excess agent. Hence \eqref{eq:observed-feasibility} gives
\[
 \varepsilon T_1(t)^{k+1}
 \le\varepsilon[nT_1(t)-H(T_1(t))]
 \le nT_0(t)
 \le n\frac{1-t}{t}T_1(t).
\]
Division bounds the ratio defining $C$ near one; $T_1(t)\le1$ bounds it on every interval $[0,t_0]$ with $t_0<1$. Thus $C<\infty$.

Conversely, suppose $C<\infty$ and select up to $k$ suitable agents in decreasing observed-belief order, with fixed tie-breaking. Its interims are
\[
 X(p)=\Pr\{\Bin(n-1,T_1(p))\le k-1\},\qquad y(p)=0.
\]
The union bound gives $1-X(p)\le\binom{n-1}{k}T_1(p)^k$. Consequently, any positive ratio satisfying $\varepsilon\binom{n-1}{k}C\le1$ obeys
\[
 \varepsilon p[1-X(p)]
 \le\varepsilon\binom{n-1}{k}C\,p(1-p)
 \le1-p.
\]
Use advice reveals suitability and nonuse is obedient, so the observed-belief feasible range is nonempty.

To prove the strict comparison of critical values, fix any privately feasible ratio $\varepsilon$ and a symmetric canonical threshold measure $Q$. Write
$M=\int v\,dQ\le1/\varepsilon$.
As in the capacity argument for \eqref{appimpl:nofiniteschedule}, $Q((t,1))>0$ for every $t<1$: otherwise suitable agents with beliefs above $t$ would all use the resource, violating capacity on a positive-probability event. In particular $M>0$. Choose $t_*\in(0,1)$ such that
$\int_{(t_*,1)}v\,dQ\le M/2$, and put
\[
 d=\min\left\{\frac M2,
       \int_{(t_*,1)}[v(\tau)-v(t_*)]\,Q(d\tau)\right\}>0.
\]
For $p\ge t_*$, $v(p)Q((p,1))\le M/2\le M-d$. For $p<t_*$, partitioning the odds moment at $p$ gives
\[
 M-v(p)Q((p,1))
 =\int_{[0,p]}v\,dQ
   +\int_{(p,1)}[v(\tau)-v(p)]\,Q(d\tau)
 \ge d.
\]
It follows that
\[
 v(p)[1-x_Q(p)]\le M-d<M\le1/\varepsilon
 \qquad(0<p<1).
\]
Choose $\varepsilon'>\varepsilon$ with $\varepsilon'(M-d)\le1$. Purging unsuitable selections and completing suitable selections from the original kernel gives suitable interims at least $x_Q$ almost everywhere. The resulting recommendations are therefore obedient at ratio $\varepsilon'$ as well. Every privately feasible ratio is strictly below the observed-belief critical ratio. When $\bar u_+>0$, applying this argument at the attained private endpoint from part (i) gives $\bar u_+<\bar u_+^{\mathrm{obs}}$.

\end{proof}

\subsection{Reporting requirements}
\label{app:reporting}

\begin{proof}[Proof of Proposition~\ref{prop:finite-reports}]
Suppose a feasible mechanism--equilibrium pair $(\Pi,\sigma)$ has a set
$\mathcal I$ of $k$ agents with finite report spaces. Write
$\alpha_i(r\mid p)=\sigma_i^{\mathrm r}(\{r\}\mid p)$.
Each $i\in\mathcal I$ has a report $r_i^*$ with
\[
 \int_t^1\alpha_i(r_i^*\mid p)\,dF(p)>0\qquad(t<1).
\]
Otherwise each report would disappear above some cutoff; finiteness would
contradict full support above the largest of these cutoffs.

Let $P_i^\theta$ be the law of $i$'s message when it reports $r_i^*$ and its
state is $\theta$, integrating opponents' primitives and equilibrium reports.
Choose Borel densities $f_i^\theta=dP_i^\theta/d(P_i^1+P_i^0)$.
At on-path information $(p_i,r_i^*,m_i)$, the gain from use is
\begin{equation}
 \frac{u_+p_i f_i^1(m_i)-u_-(1-p_i)f_i^0(m_i)}
 {p_i f_i^1(m_i)+(1-p_i)f_i^0(m_i)}.
 \label{appimpl:participationgain}
\end{equation}
The denominator is positive almost surely. Where $f_i^1>0$, this gain is
positive exactly when
\[
 p_i>\tau_i(m_i):=
 \frac{u_-f_i^0(m_i)}{u_+f_i^1(m_i)+u_-f_i^0(m_i)}<1;
\]
set $\tau_i=1$ where $f_i^1=0$.
Changing only continuation actions on a measurable set where the gain is
strictly positive would be profitable unless
\begin{equation}
 \sigma_i^{\mathrm a}(\{1\}\mid p_i,r_i^*,m_i)=1
 \quad\text{whenever }p_i>\tau_i(m_i),
 \label{appimpl:strict-continuation}
\end{equation}
outside a null set under the equilibrium law conditional on $r_i=r_i^*$.
This conditional law is well defined because that report has positive
probability. Thus ex ante equilibrium optimality supplies the required
on-path action optimality, including when reporting is randomized.

Define
\[
 \mathcal E=\bigcap_{i\in\mathcal I}\{\theta_i=1,r_i=r_i^*\},\qquad
 \eta_i=\int_0^1p\alpha_i(r_i^*\mid p)\,dF(p)>0,\qquad
 \rho_i^{\mathcal E}(dp)=\frac{p\alpha_i(r_i^*\mid p)}{\eta_i}\,dF(p).
\]
Then $\Pr_{\Pi,\sigma}(\mathcal E)=\prod_{i\in\mathcal I}\eta_i>0$, and
every $\rho_i^{\mathcal E}$ assigns positive probability to $(t,1)$ for $t<1$.
For each Borel message set $\mathcal B_i$, retaining opponents' realizations
in $\mathcal E$ gives
\[
 P_i^1(\mathcal B_i)\ge
 \left(\prod_{j\in\mathcal I\setminus\{i\}}\eta_j\right)
 \Pr_{\Pi,\sigma}\{m_i\in\mathcal B_i\mid\mathcal E\}.
\]
Thus $f_i^1(m_i)>0$ and $\tau_i(m_i)<1$ almost surely conditional on
$\mathcal E$.

Realize the behavioral kernels using independent private random variables.
Conditional on $\mathcal E$, the advisor's message kernel depends on these
agents only through their fixed reports and states. Let $\mathcal G$ be
the information generated by the entire message vector and by all
primitives, reports, and private randomization of agents outside
$\mathcal I$. Conditional on $\mathcal E$ and $\mathcal G$, the remaining
types have product law $\bigotimes_{i\in\mathcal I}\rho_i^{\mathcal E}$.
The null sets in \eqref{appimpl:strict-continuation} remain null after
conditioning on the positive-probability event $\mathcal E$. Fubini's
theorem and that implication therefore give, almost surely on $\mathcal E$,
\[
 \Pr_{\Pi,\sigma}\{a_i=1\text{ for all }i\in\mathcal I
                     \mid\mathcal E,\mathcal G\}
 \ge\prod_{i\in\mathcal I}\rho_i^{\mathcal E}((\tau_i(m_i),1))>0.
\]
Every factor is positive because $\tau_i(m_i)<1$ and
$\rho_i^{\mathcal E}$ has positive mass above every interior cutoff.
An outside agent's action is $\mathcal G$-measurable. If it used the resource
on a positive-probability subset of $\mathcal E$, the last display would
give positive probability that it and all $k$ agents in $\mathcal I$
used the resource, violating capacity.
Every outside agent must therefore forgo use almost surely on $\mathcal E$.
There is an outside agent $j$ because $n>k$.
The event $\mathcal E$ is independent of $(p_j,\theta_j)$, so its equilibrium
suitable-state action interim is at most $1-\Pr_{\Pi,\sigma}(\mathcal E)$
for $F$-almost every $p$.
Retaining the equilibrium report and then always using the resource is an admissible
deviation. If $x_j(p),y_j(p)$ are this agent's equilibrium action interims,
allowing this deviation on measurable sets of types gives, for $F$-almost every $p$,
\[
 u_+p[1-x_j(p)]\le u_-(1-p)[1-y_j(p)]\le u_-(1-p).
\]
Thus $u_+p\Pr_{\Pi,\sigma}(\mathcal E)\le u_-(1-p)$ almost everywhere,
contradicting full support near one. Thus at most $k-1$ agents can have finite
report spaces.

\end{proof}

\begin{singlespace}
\small
\interlinepenalty=10000
\bibliographystyle{plainnat}
\bibliography{references}
\end{singlespace}

\end{document}